\tikzstyle{vertex}=[circle, thick, draw=black, inner sep = 2pt]
\tikzstyle{positive}=[thick]
\newcommand{\abequiv}{
        \raisebox{-2pt}{
        \begin{tikzpicture}
        \draw[->] (0,-0.1) -- (0,0.22);
        \draw (0, 0.06) circle(0.05);
        \end{tikzpicture}}\,\,}
\newcommand{\compat}{
        \raisebox{-2pt}{
        \begin{tikzpicture}
        \draw[->] (0,-0.1) -- (0,0.22);
        \end{tikzpicture}}\,\,}
\newcommand{\Z}{\mathbb{Z}}
\newcommand{\D}{\mbox{D}}
\renewcommand{\d}{\ensuremath{\diamond}}
\newtheorem{lemma}{Lemma}
\newtheorem{proposition}{Proposition}
\newtheorem{theorem}{Theorem}
\newtheorem{corollary}{Corollary}
\newtheorem{conjecture}{Conjecture}
\newtheorem{defn}{Definition}
\title{Dyck Words, Lattice Paths, and Abelian Borders}
\author{F. Blanchet-Sadri \qquad\qquad Kun Chen
\institute{Department of Computer Science\\
University of North Carolina\\
P.O. Box 26170\\
Greensboro, NC 27402-6170, USA}
\email{blanchet@uncg.edu \qquad\qquad k\_chen2@uncg.edu}
\and
Kenneth Hawes
\institute{Department of Mathematics\\ 
University of Virginia\\
P.O. Box 400137\\
Charlottesville, VA 22904-4137, USA}
\email{kgh5cc@virginia.edu}
}
\begin{document}

\maketitle

\begin{abstract}
We use results on Dyck words and lattice paths to derive a formula for the exact number of binary words of a given length with a given minimal abelian border length, tightening a bound on that number from Christodoulakis et al.\ (Discrete Applied Mathematics, 2014). We also extend  to any number of distinct abelian borders a result of Rampersad et al.\ (Developments in Language Theory, 2013) on the exact number of binary words of a given length with no abelian borders. Furthermore, we generalize these results to partial words.


\end{abstract}

\section{Introduction}

{\em Abelian borders} and {\em abelian periods} are defined similarly to their classical counterparts based on abelian equivalence. Two words are {\em abelian equivalent} if they have the same {\em Parikh vector} (a vector that records the frequency of each letter occurring in the word); equivalently, if the two words are the same up to a permutation of characters.  For example, the word $010000010010100$ has, in particular, an abelian border of length 3 since the prefix of length 3, $010$, is abelian equivalent to the suffix of length 3, $100$. It has an abelian period of length 4 as the following factorization suggests: $0100\mid 0001\mid 0010\mid 100$. Connections between these concepts exist (see, e.g., \cite{CHPZ}). Abelian versions of some well-known results such as Fine and Wilf's periodicity theorem \cite{CoIl06,BSSiTeVe} and the critical factorization theorem \cite{AvKaPu} were proposed. Several algorithms that efficiently compute abelian periods in a given word were developed \cite{CrIlKoKuPaRaRyTyWa,FiLeLePG,KoRaRy}. Abelian concepts were reviewed in \cite{ChCh}; applications include sequencing from compomers \cite{Boc04}, permutation pattern discovery in biosequences \cite{ErLaPa}, gapped permutation patterns for comparative genomics \cite{Par}, and jumbled pattern matching \cite{BuCiFiLi1,BuCiFiLi2,BuErLa}. 

Christodoulakis et al.\  \cite{ChChCrIl} studied abelian borders in words over an alphabet of size two. They investigated the number of such words with fixed minimal border length by establishing connections with Dyck words. A \textit{Dyck word} is a word over the alphabet $\{0,1\}$ that has the same number of zeros and ones and that has no prefix with more ones than zeros, e.g., 0100110101 is a Dyck word but 0110 is not. The language of Dyck words is equivalent to the language of balanced parentheses: 0101000111 corresponds to $()()((()))$. The number of Dyck words of length $2n$ is given by the Catalan number $C_n=\frac{1}{n+1}{{2n}\choose{n}}$. Christodoulakis et al.\ \cite{ChChCrIl} gave bounds on the number of binary words of a given length with no abelian borders and on the number of binary words of a given length with at least one abelian border. They also described an algorithm to find the minimal abelian border of a binary word of length $n$ in $\Theta(\sqrt{n})$ time on average (when the word has an abelian border). They left as open problems the derivation of tighter bounds and the generalization to larger alphabet sizes.

Rampersad et al.\ \cite{RaRiSa} investigated the number of words with or without abelian borders by establishing connections with {\em lattice paths}; in particular, they used lattice paths to find the exact number of binary words of a given length with no abelian borders. Similarly to a construction given in \cite{RaRiSa}, we can visualize a word $w$ over $\{0,1\}$ as an ordered pair of increasing lattice paths, where one path corresponds to the prefixes of the word, and the other corresponds to the suffixes. A step to the right represents the letter 0, and a step upward represents the letter 1. An intersection between the two paths after ${k}$ steps indicates an abelian border of length ${k}$. If we graph the distance in steps between the prefix path and the suffix path, we get a \textit{Motzkin path} which consists of steps diagonally upward corresponding to 0, steps diagonally downward corresponding to 1, and steps straight forward corresponding to 2. If we remove the straight forward steps, we get a \textit{Dyck path}. A binary word has an abelian border of a given length if and only if the corresponding Dyck path is encoded by a Dyck word or the bitwise negation of a Dyck word \cite{ChChCrIl}. 

Since lattice paths have been studied independently of the (relatively new) concept of abelian borders (see, e.g., \cite{Deutsch,GeGoShWiYe}), in this paper we use known results about lattice paths to discover properties of words with abelian borders. Among other results, by counting pairs of lattice paths by intersections, we derive a formula, that involves the Catalan number, for the exact number of binary words of a given length with a given minimal abelian border length, tightening the abovementioned bound on that number from Christodoulakis et al.\ \cite{ChChCrIl}. We also extend the abovementioned result of Rampersad et al.\ \cite{RaRiSa} to any number of distinct abelian borders. Furthermore, we generalize these results to partial words. Other enumerative applications of non-intersecting lattice paths can be found in \cite{GeVi}. 

Our paper is the first attempt to study abelian borders of partial words (abelian periods were studied in \cite{BSSiTeVe}). Our paper's contents are as follows: 
In Section~2, we review some basic concepts on partial words such as abelian borders. 
In Section~3, we describe an $O(n)$ time algorithm to compute the minimal abelian border of a given partial word of length $n$ over an alphabet of any size. It is an adaptation of an algorithm from \cite{ChChCrIl}. 
In Section~4, we extend to partial words a relation from \cite{ChChCrIl} between minimal abelian borders and Dyck words.
In Section~5, we apply results from intersecting lattice paths to the enumeration of abelian borders for total words. We also count abelian bordered partial words using lattice paths.  
Finally in Section~6, we conclude with some remarks. 

\section{Preliminaries}

Let $\Sigma$ be a finite and non-empty set of characters, called an \emph{alphabet}.  A sequence of characters from $\Sigma$ is referred to as a \emph{total word} over $\Sigma$. A \emph{partial word} over $\Sigma$ is a sequence of characters from $\Sigma_{\d}=\Sigma\cup\{\d\}$, where $\d$, a new character which is not in $\Sigma$, is the ``don't care'' or ``hole'' character (it represents an undefined position). A partial word may have zero or more holes, while a total word has zero holes. A total word is also a partial word. In the rest of the paper, when we refer to a ``letter'' we mean a character from the alphabet $\Sigma$ (we will never call the $\d$ character a letter since it does not belong to $\Sigma$). We will use ``word'' and ``total word'' as equivalent terms. The \emph{length} of a partial word $w$, denoted by $|w|$, is the number of $\Sigma_{\d}$ characters in $w$, e.g., if $w=abbac{\d}{\d}c$ then $|w|=8$. The \emph{empty word} is the word of length zero and we let $\varepsilon$ denote it. The set of all words over $\Sigma$ is denoted by $\Sigma^{*}$.  Similarly, the set of all non-empty words over $\Sigma$ is denoted by $\Sigma^{+}$.  We let $\Sigma^{n}$ denote the set of all words of length $n$ over $\Sigma$. We can similarly define $\Sigma_{\d}^{*}$, $\Sigma_{\d}^{+}$, and $\Sigma_{\d}^{n}$ for partial words over $\Sigma$.  

A \emph{completion} over $\Sigma$ of a partial word $w$, denoted by $\hat{w}$, is a total word constructed by replacing each $\d$ in $w$ with some letter in $\Sigma$ (i.e., we fill all the holes in $w$). If $w$ has one or more holes and $\Sigma$ has more than one letter, then there is more than one distinct completion of $w$ and in this case $\hat{w}$ denotes an arbitrary completion of $w$ unless otherwise specified. 

A partial word $u$ is a {\it factor} of a partial word $w$ if there exist (possibly empty) partial words $x,y$ such that $w=xuy$.  We say that $u$ is a {\it prefix} of $w$ if $x=\varepsilon$.  Similarly, $u$ is a {\it suffix} of $w$ if $y=\varepsilon$.  Starting numbering positions from $0$, we let $w[i]$ denote the character in position~$i$ of $w$ and let $w[i..j]$ denote the factor of $w$ from position~$i$ to position~$j$ (inclusive). We let $w^m$ denote $w$ concatenated with itself $m$ times.

If $u$ and $v$ are partial words of equal length over $\Sigma$, then $u$ is \emph{contained} in $v$, denoted by $u\subset v$, if $u[i]=v[i]$ for all $i$ such that $u[i]\in\Sigma$; in other words, the set of non-hole positions in $u$ is a subset of the set of non-hole positions in $v$. Partial words $u$ and $v$ are \emph{compatible}, denoted by $u\compat v$, if there exists a partial word $w$ such that $u\subset w$ and $v\subset w$, i.e., there are completions $\hat{u}$ of $u$ and $\hat{v}$ of $v$ such that $\hat{u}=\hat{v}$.

The number of occurrences of a letter $a$ in a partial word $w$ is denoted by $|w|_a$. The \emph{Parikh vector} of $w$ over $\Sigma=\{a_0, \ldots, a_{k-1}\}$ is defined as $\psi(w)=\langle|w|_{a_0},\ldots,|w|_{a_{k-1}}\rangle$. Note that we do not count occurrences of the ${\d}$ character in Parikh vectors.  A partial word $u$ is \emph{abelian compatible} with a partial word $v$, denoted by $u\abequiv v$, if a permutation of $u$ is compatible with $v$ (this implies that $u$ and $v$ are of equal length). For example, the partial words $ab{\d}bb{\d}$ and $bbb{\d}ab$ are abelian compatible. Note that for total words $u$ and $v$ of same length, we have $u\abequiv v$ if and only if $\psi(u)=\psi(v)$, i.e., if $u$ and $v$ are permutations of one another (we also say that $u$ and $v$ are \emph{abelian equivalent}).

For a non-empty partial word $w$, if non-empty partial words $x_1, x_2, u, v$ exist such that $w = x_1 u = vx_2$ and $x_1 \abequiv x_2$, we call  $w$ {\em abelian bordered}.  In this case, a total word $x$ exists such that $x_1 \abequiv x$ and $x_2 \abequiv x$, and $x$ is called an {\em abelian border} of $w$ (we also call $x_1, x_2$ abelian borders of $w$). Two abelian borders $x, y$ of $w$ are \emph{equivalent} if and only if $|x|=|y|$. We refer to non-equivalent abelian borders as \emph{distinct}. An abelian border $x$ of $w$ is {\em minimal} if $|x| > |y|$ implies that $y$ is not an abelian border of $w$ and $x$ is {\em maximal} if $|x| < |y|$ implies that $y$ is not an abelian border of $w$.  

Equivalence of abelian borders being an equivalence relation,  we identify abelian borders by their equivalence classes to avoid counting equivalent borders multiple times, but we will not refer to the equivalence classes explicitly. Since we are only interested in identifying and counting distinct abelian borders, we hereafter use the phrase \emph{the abelian border of length} $k$ to mean the equivalence class of abelian borders of length $k$, and we refer to this equivalence class by one of its representatives.  

If a total word of length $n$ has an abelian border of length $k$, then it must also have an abelian border of length $n-k$. For example, the word $babbbbba$ of length 8 has abelian borders of lengths 2, 3, 4, 5, and 6.  Note that equivalent abelian borders of a total word always come in pairs of the same length.  Note also that for a total word of length $n$, abelian borders come in complementary pairs whose lengths sum to $n$ (if $n$ is even, then any abelian border of length $\frac{n}{2}$ is equivalent to its complementary partner) \cite{ChChCrIl}.  However for a partial word of length $n$, we may have an abelian border of length greater than $\left\lfloor\frac{n}{2}\right\rfloor$ with no complementary abelian border of shorter length. For example, $bb{\d}a$ has no abelian borders of length two or less, but it does have the non-abelian border $bba$ ($bb{\d} \abequiv bba$ and $b{\d}a \abequiv bba$). 

The following proposition shows that all partial words of length at least two without abelian borders are total words.

\begin{proposition}
If $w$ is a partial word of length greater than one, with at least one hole, then $w$ has an abelian border.
\label{nonborderedpartials}
\end{proposition}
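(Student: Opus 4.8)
The plan is to exhibit, for any such $w$ of length $n>1$ with at least one hole, an explicit length $k$ with $1\le k\le n-1$ at which the prefix of length $k$ is abelian compatible with the suffix of length $k$; by the definition of abelian border this is exactly what is required. I would split the argument into two cases according to the two endpoints $w[0]$ and $w[n-1]$.

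First, the easy case: if $w[0]\compat w[n-1]$ --- which for single characters means that at least one of them is a hole, or that they are the same letter --- then the length-$1$ prefix $w[0]$ is abelian compatible with the length-$1$ suffix $w[n-1]$, so $w$ has an abelian border of length $1$ and we are done. Note that this case always applies when $n=2$, since the hole hypothesis forces one of the two positions to be a hole.

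The remaining case is $w[0]=a$ and $w[n-1]=b$ with $a,b$ distinct letters. Since both endpoints are then letters, every hole of $w$ lies strictly inside, so writing $w=a\,y\,b$ the middle factor $y=w[1..n-2]$ is non-empty and contains at least one hole; fix such a hole position $h$. I would then take $k=n-1$, comparing the prefix $P=ay$ and the suffix $S=yb$, both of length $n-1$. The key step is to produce completions $\hat P,\hat S$ with equal Parikh vectors: fill every hole of $y$ with $a$, except that in the copy of $y$ occurring inside $P$ fill the hole $h$ with $b$ instead. Writing $B$ for the Parikh vector of $y$ completed with all $a$'s, one checks that $\psi(\hat P)$ and $\psi(\hat S)$ are both equal to $B$ plus one extra $b$ (in $\hat P$ the leading $a$ compensates for the $a\to b$ swap at $h$). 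Hence $\hat P\abequiv\hat S$, and since abelian equivalent completions of $P$ and $S$ exist we conclude $P\abequiv S$, giving an abelian border of length $n-1$ (with $u=b$ and $v=a$, both non-empty).

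The only delicate point is the bookkeeping in the last step: one must verify that the sole discrepancy between $P$ and $S$ --- namely that $P$ carries an extra letter $a$ and $S$ an extra letter $b$ --- is precisely what the single interior hole can absorb, and that exhibiting abelian equivalent completions indeed certifies $P\abequiv S$ in the sense of the definition (a permutation of $P$ being compatible with $S$). I expect this Parikh-vector balancing to be the main, though still routine, obstacle; everything else follows immediately from the case split.
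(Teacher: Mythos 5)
Your proof is correct, and in its main case it uses the same key trick as the paper; the difference lies in your Case~1 shortcut, and that difference has a cost worth noting. The paper's argument establishes the stronger fact that a hole forces an abelian border of length exactly $n-1$: comparing the prefix $P=w[0..n-2]$ with the suffix $S=w[1..n-1]$, it fills any shared interior holes identically on both sides and uses one designated hole --- wherever it lies, endpoint or interior --- to absorb the discrepancy between the extra character $w[0]$ carried by $P$ and the extra character $w[n-1]$ carried by $S$. Your Case~2 is precisely this Parikh-balancing argument, but in Case~1 (when $w[0]\compat w[n-1]$) you settle for a border of length $1$ instead. That is perfectly sufficient for the proposition as stated, and it disposes of $n=2$ cleanly; however, the paper later quotes its own proof for the by-product that one hole always yields a border of length $n-1$ (this is the seed of the induction sketched before Theorem~\ref{atleasthborders}, giving borders of lengths $n-1,n-2,\dots,n-h$), and under your case split the length-$(n-1)$ border is simply not established when the endpoints happen to be compatible. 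So your route proves the statement but could not be cited in place of the paper's proof later on. Finally, the point you flag as delicate does close, and routinely: if completions with $\psi(\hat{P})=\psi(\hat{S})$ exist, choose a permutation $\pi$ of positions with $\pi(\hat{P})=\hat{S}$; then $\pi(P)\subset\pi(\hat{P})=\hat{S}$ and $S\subset\hat{S}$, so $\pi(P)\compat S$, which is literally the definition of $P\abequiv S$.
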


\section{Computing the minimal abelian border}

We now present an algorithm, based on algorithm \textsc{Shortest-Abelian-Border} in \cite{ChChCrIl}. Our algorithm computes the length of the minimal abelian border of a given non-empty partial word $w$ (over an alphabet of any size) and runs in $O(n)$ time, where $n$ is the length of the input $w$. We provide the pseudo-code.

\begin{algorithm}
\caption{\textsc{Minimal-Abelian-Border-Partial}($w,n,\sigma$)}
\label{minimalborder}
\begin{algorithmic}[1]
\REQUIRE $w$, a partial word with letters in $\Sigma=\{0,1, \ldots,\sigma-1\}$,  and $n\geq1$, the length of $w$

\STATE $V \gets \langle0,0,\ldots,0\rangle$, $|V| = 0$, and $holes = 0$;

\IF{$w[0] = \d$ $\mathbf{or}$ $w[n-1] = \d$} 
	\RETURN 1;
	\ELSE 
		\STATE $V[w[0]]=V[w[0]]+1$ and $V[w[n-1]]=V[w[n-1]]-1$;
		\IF{$V[w[0]]$ = 0}
			\RETURN 1;
		\ELSE
			\STATE $|V| = 2$;
		\ENDIF
		\FOR{$i\gets1$ \TO $n-1$}
			\IF{$w[i]=\d$}
				\STATE $holes=holes+1$;
			\ELSE
				\STATE $V[w[i]]=V[w[i]]+1$;
				\IF{$V[w[i]]>0$}
					\STATE $|V|=|V|+1$;
				\ELSIF{$V[w[i]]\leq0$}
					\STATE $|V|=|V|-1$;
				\ENDIF
			\ENDIF
			\IF{$w[n-1-i]=\d$}
				\STATE $holes=holes+1$;
			\ELSE
				\STATE $V[w[n-1-i]]=V[w[n-1-i]]-1$;
				\IF{$V[w[n-1-i]]\geq0$}
					\STATE $|V|=|V|-1$;
				\ELSIF{$V[w[n-1-i]]<0$}
					\STATE $|V|=|V|+1$;
				\ENDIF
			\ENDIF
			\IF{$|V|\leq holes$}
				\RETURN $i+1$;
			\ENDIF
		\ENDFOR
		\RETURN $n$;
\ENDIF

\end{algorithmic}
\end{algorithm}

Algorithm \textsc{Minimal-Abelian-Border-Partial} works as follows. We can assume without loss of generality that $w$ is a partial word over the alphabet $\{0,\ldots,{\sigma-1}\}$. We define $V$ as a vector which gives the difference between the Parikh vectors of a prefix and a suffix of equal length, as defined in \cite{ChChCrIl}. We define the magnitude of $V$, denoted by $|V|$, as the sum of the absolute values of the entries in $V$. This variable $|V|$ represents the ``absolute total difference'' between two Parikh vectors. The variable $holes$ keeps track of the number of holes we have encountered so far in $w$.

Every time we read a letter from the left side of the partial word, we increment the entry of $V$ corresponding to that letter, e.g., if we read the letter $3$ then we increment $V[3]$. Similarly, every time we read a letter on the right side, we decrement the corresponding entry of $V$. This process of incrementing and decrementing corresponds to calculating Parikh vectors for the prefix and suffix and then subtracting them. Each time we increment or decrement an entry in $V$, we check the new value of that entry and adjust $|V|$ accordingly. For example, if we increment $V[w[i]]$ by $1$ and $V[w[i]] \leq 0$ after this action, then $|V|$ must be decreased by 1.  
Every time we read a $\d$ instead of a letter, we count it by incrementing $holes$ but do nothing to $V$ since holes are not counted in Parikh vectors.

Each time we finish moving ``inward'' by one position on each side, we check to see if we have found an abelian border. This is equivalent to checking whether the prefix up to the current position is abelian compatible with the suffix of the same length; this is equivalent to checking whether $|V|$ is less than or equal to the total number of holes in both the prefix and the suffix.

Unlike the algorithm in \cite{ChChCrIl} for total words, we must run through the entire length of the partial word instead of stopping halfway. If the input partial word has no abelian borders, then we return the length of the partial word. Note that by Proposition~\ref{nonborderedpartials}, this can only occur when the input is a total word.

\begin{proposition}
Algorithm \textsc{Minimal-Abelian-Border-Partial} computes the minimal abelian border for any partial word of length $n$ over any finite alphabet and runs in $O(n)$ time.
\label{brdralgthm}
\end{proposition}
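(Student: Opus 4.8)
The plan is to split the argument into a combinatorial core---a characterization of abelian compatibility in terms of the vector $V$---and a verification that the algorithm faithfully maintains $V$, its magnitude $|V|$, and the counter $holes$.

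First I would isolate the following criterion, which is the heart of the matter. Let $u,v$ be partial words of the same length, let $V=\psi(u)-\psi(v)$ be the entrywise difference of their Parikh vectors, let $|V|$ denote the sum of the absolute values of its entries, and let $h_u,h_v$ be the numbers of holes in $u,v$. I claim that $u\abequiv v$ if and only if $|V|\le h_u+h_v$. Indeed, $u\abequiv v$ means the holes of $u$ and of $v$ can be filled so that the two completions have equal Parikh vectors; writing $x_a\ge 0$ (resp. $y_a\ge 0$) for the number of holes of $u$ (resp. $v$) filled with letter $a$, this amounts to solving $y_a=x_a+V[a]$ for all $a$ with $\sum_a x_a=h_u$ and $\sum_a y_a=h_v$. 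The nonnegativity constraints force $x_a\ge\max(0,-V[a])$, whence $\sum_a x_a\ge N$, where $N=\sum_{V[a]<0}|V[a]|$ is the negative part of $V$; since the surplus can be distributed freely, such a filling exists if and only if $h_u\ge N$. Writing $P=\sum_{V[a]>0}V[a]$ for the positive part and using that $|u|=|v|$ forces $P-N=h_v-h_u$, a one-line substitution turns $h_u\ge N$ into $P+N\le h_u+h_v$, that is, $|V|\le h_u+h_v$, establishing the criterion.

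Next I would prove, by induction on $i$, the loop invariant that immediately after the body for index $i$ completes (with the initialization serving as the base case $i=0$), $V$ equals $\psi(w[0..i])-\psi(w[n-1-i..n-1])$, the scalar $|V|$ equals the sum of absolute values of the entries of $V$, and $holes$ equals the total number of holes in the length-$(i+1)$ prefix and the length-$(i+1)$ suffix of $w$. The Parikh-difference and hole bookkeeping follow at once from the increment at $w[i]$ on the prefix side and the decrement at $w[n-1-i]$ on the suffix side, noting that a position shared by the prefix and the suffix contributes to both, exactly as the criterion requires. The only delicate point is that $|V|$ is updated correctly, which is a short case analysis: incrementing an entry raises $|V|$ by one precisely when that entry is positive afterward and lowers it by one when it is $\le 0$ afterward, and symmetrically for decrements; this is exactly what the four conditional branches implement.

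Combining the invariant with the criterion yields correctness. The length-$(i+1)$ prefix and suffix are abelian compatible---equivalently, $w$ has an abelian border of length $i+1$---exactly when $|V|\le holes$, which is the test executed at the end of the body. Because these tests are performed for $k=i+1=1,2,\dots,n$ in increasing order (length $1$ being dispatched in the initialization, where a hole at either end, or $w[0]=w[n-1]$ forcing $V[w[0]]=0$, yields the shortest border), the first index that passes the test is the minimal abelian border length. Finally, at $i=n-1$ we have $V=\psi(w)-\psi(w)=0$, so the test trivially succeeds and $n$ is returned; by Proposition~\ref{nonborderedpartials} this sentinel value is reached only when $w$ is a total word that has no proper abelian border. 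For the running time, the initialization costs $O(\sigma)$ (and only $O(1)$ per distinct letter if $V$ is stored lazily), while the loop runs at most $n-1$ times and each pass performs only a constant number of vector reads, writes, and comparisons, so the total is $O(n)$. The step I expect to be the main obstacle is the criterion of the second paragraph: verifying that the single scalar comparison $|V|\le holes$ genuinely captures abelian compatibility of partial words, together with keeping the off-by-one accounting of which positions are consumed at each step consistent with what the initialization has already absorbed.
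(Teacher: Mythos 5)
Your proposal is correct and follows essentially the same route as the paper, which justifies the algorithm informally by asserting that maintaining the Parikh-difference vector $V$, its magnitude $|V|$, and the counter $holes$ reduces the border test to the comparison $|V|\le holes$. Your main added value is that you actually prove this compatibility criterion (via the hole-filling argument with the positive and negative parts of $V$), a step the paper states without proof, and your bookkeeping of overlapping prefix/suffix positions and of the sentinel return value $n$ matches the paper's intent.
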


\section{Abelian borders and Dyck words}

We extend a result from \cite{ChChCrIl}, which relates minimal abelian borders to Dyck words (see Proposition~\ref{dyckborder} below). We start with a lemma and a definition. 

\begin{lemma}
A partial word $w$ has an abelian border of length ${k}$ if and only if there exists a completion of $w^2$ with an abelian border of length ${k}$.
\label{squarecompletion}
\end{lemma}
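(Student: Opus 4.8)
The plan is to prove both implications by reducing the abelian-border condition to an equality of Parikh vectors between independent completions of the length-$k$ prefix and the length-$k$ suffix of $w$, and then to observe that passing to $w^2$ is exactly what decouples these two completions. Throughout I assume $1 \le k \le n-1$, the only range in which a partial word $w$ of length $n$ can have an abelian border of length $k$.

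First I would record a reformulation of abelian compatibility: for partial words $p$ and $s$ of the same length, $p \abequiv s$ holds if and only if there exist completions $\hat p$ of $p$ and $\hat s$ of $s$ with $\psi(\hat p) = \psi(\hat s)$. One direction is immediate from the definition, since a permutation of $p$ that is compatible with $s$ yields a common completion, which restricts to completions of $p$ and $s$ that are permutations of one another; the converse follows by applying to $p$ the permutation that sends $\hat p$ to $\hat s$. Taking $p = w[0..k-1]$ and $s = w[n-k..n-1]$, this says precisely that $w$ has an abelian border of length $k$ if and only if $p$ and $s$ admit \emph{independent} completions with equal Parikh vectors.

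The key structural observation is that in $w^2$ the length-$k$ prefix occupies positions $0,\dots,k-1$ while the length-$k$ suffix occupies positions $2n-k,\dots,2n-1$, and for $k \le n$ these ranges are disjoint: the prefix lies entirely in the first copy of $w$ and equals $p$, whereas the suffix lies entirely in the second copy and equals $s$. For the forward implication I would start from completions $\hat p$ of $p$ and $\hat s$ of $s$ with $\psi(\hat p) = \psi(\hat s)$, and build a completion of $w^2$ by filling the first copy so that its length-$k$ prefix is $\hat p$, filling the second copy so that its length-$k$ suffix is $\hat s$, and filling every remaining hole with an arbitrary letter. Disjointness guarantees that no position is constrained twice, so this is a genuine completion; since it is a total word, the equality of the Parikh vectors of its length-$k$ prefix and suffix is exactly an abelian border of length $k$.

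For the reverse implication I would run this backwards: given a completion of $w^2$ with an abelian border of length $k$, its length-$k$ prefix is a completion $\hat p$ of $p$ and its length-$k$ suffix is a completion $\hat s$ of $s$ (again using disjointness and $k \le n$), and the border condition forces $\psi(\hat p) = \psi(\hat s)$, whence $p \abequiv s$ by the reformulation. I expect the main subtlety to be precisely this decoupling point: for a partial word the relation $x_1 \abequiv x_2$ permits the overlapping prefix and suffix (when $k > n/2$) to be completed independently, and the entire content of the lemma is that concatenating $w$ with itself physically separates these overlapping regions into two distinct copies, so that a single completion of $w^2$ can realize any pair of independent completions of $p$ and $s$. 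Keeping the bound $1 \le k \le n-1$ explicit is what makes the existence of a border of $w$ and the disjointness inside $w^2$ simultaneously meaningful.
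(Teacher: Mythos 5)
Your proof is correct, but there is nothing in the paper to compare it against: Lemma~\ref{squarecompletion} is stated there without any proof. Your argument supplies what is evidently the intended reasoning, and its three ingredients are exactly the right ones: (i) the reformulation of $p \abequiv s$ as the existence of \emph{independent} completions $\hat{p}$, $\hat{s}$ with $\psi(\hat{p})=\psi(\hat{s})$, which you justify correctly in both directions by transporting a completion along the permutation witnessing abelian compatibility; (ii) the disjointness, for $k\leq n$, of the length-$k$ prefix and the length-$k$ suffix inside $w^2$, the former lying entirely in the first copy of $w$ and the latter entirely in the second; and (iii) the fact that the $2h$ holes of $w^2$ can be filled independently, which the paper itself relies on when it counts the $2^{2h}$ distinct completions of $w^2$. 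Your explicit restriction to $1\leq k\leq n-1$ is not a defect but a necessary reading of the statement: under the paper's definition of an abelian border (which requires the complementary factors $u,v$ to be non-empty, so border lengths of $w$ are at most $n-1$), the biconditional would actually fail at $k=n$, since any completion of $w^2$ that completes both copies of $w$ identically has an abelian border of length $n$, whereas $w$ itself can never have one. So your proof is sound precisely on the range where the lemma is meaningful, and it correctly isolates the crux that makes the lemma nontrivial for partial words: squaring $w$ physically separates the possibly overlapping prefix and suffix, so that a single completion of $w^2$ can realize any pair of independent completions of them.
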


\begin{defn}
Let $w$ be a total word over $\{0,1\}$ with $|w|=n\geq2$. Let $1\leq {k}\leq n$.
\begin{itemize}
\item
The $\mathcal{Y}_{k}$-form of $w$, denoted by $\mathcal{Y}_{k}(w)$, is a total word over $\{0,1,2\}$ of length $k$  defined by

\[\mathcal{Y}_{k}(w)[i]=
\begin{cases}
0, & \text{if }w[i]=0\text{ and }w[n-1-i]=1; \\
1, & \text{if }w[i]=1\text{ and }w[n-1-i]=0; \\
2, & \text{if }w[i]=w[n-1-i].
\end{cases}\]

\item
The $\mathcal{Z}_{k}$-form of $w$, denoted by $\mathcal{Z}_{k}(w)$, is a total word over $\{0,1\}$ defined by removing the twos from $\mathcal{Y}_{k}(w)$.
\end{itemize}
\label{yzform}
\end{defn}

For example, if $w=1010011001$, then $\mathcal{Y}_{5}(w)=22100$ and $\mathcal{Z}_{5}(w)=100$. If $w=010$, then $\mathcal{Y}_3(w)=222$ and $\mathcal{Z}_3(w)=\varepsilon$. Note that $\mathcal{Z}_{k}(w)$ is a subsequence of (not necessarily consecutive) letters of $w$.

A \emph{Dyck word} of length $2n\geq0$ is a total word over $\{0,1\}$ which consists of $n$ zeros and $n$ ones arranged so that no prefix of the word has more ones than zeros. We call a $\mathcal{Z}_{k}$-form \emph{Dyckian} if it is a Dyck word or the bitwise negation of a Dyck word. For a given partial word $w$ of length $n$ with $h$ holes, we let $\D(w)$ denote the set of all $\mathcal{Z}_{k}((\hat{w^2})_i)$ such that $1\leq {k}\leq n$, $0\leq i\leq2^{2h}-1$,  $(\hat{w^2})_i$ is a completion of $w^2$, and $\mathcal{Z}_{k}((\hat{w^2})_i)$ is Dyckian. Note that there are $2^{2h}$ distinct completions of $w^2$ (since $w^2$ has $2h$ holes) and they can be ordered lexicographically. For example, let $w=11{\d}0$. Then the completions of $w^2=11{\d}011{\d}0$ are
$(\hat{w^2})_0=11001100, (\hat{w^2})_1=11001110, (\hat{w^2})_2=11101100, (\hat{w^2})_3=11101110$.
So $\D(w)=\{\mathcal{Z}_4((\hat{w^2})_0),\mathcal{Z}_3((\hat{w^2})_1),\mathcal{Z}_4((\hat{w^2})_3)\}$. Looking at $\D(w)$, the following proposition implies that the minimal abelian border of $w$ has length 3. The proof's main idea comes from \cite[Lemma~1]{ChChCrIl}.
 
\begin{proposition}
A partial word $w$ over $\{0,1\}$ of length at least two has a minimal abelian border of length  
$\min\{{k}\;|\;\text{\emph{there exists }}i\text{\emph{ such that }} \mathcal{Z}_{k}((\hat{w^2})_i)\in\D(w)\}$.
\label{dyckborder}
\end{proposition}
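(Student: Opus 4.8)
The plan is to prove the two inequalities $b\le k^{*}$ and $k^{*}\le b$, where $b$ denotes the minimal abelian border length of $w$ and $k^{*}$ denotes the right-hand side $\min\{k\mid \exists\, i,\ \mathcal{Z}_{k}((\hat{w^2})_i)\text{ is Dyckian}\}$. Everything rests on one bookkeeping fact about the $\mathcal{Y}_{k}$-form. For a total word $v$ let $D_{j}$ be the number of $0$'s in the length-$j$ prefix of $v$ minus the number of $0$'s in the length-$j$ suffix of $v$; reading $\mathcal{Y}_{k}(v)$ left to right, a letter $0$ contributes $+1$ to this running value, a letter $1$ contributes $-1$, and a letter $2$ contributes $0$, so after $j$ letters the value is exactly $D_{j}$. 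Deleting the $2$'s (passing to $\mathcal{Z}_{k}(v)$) therefore leaves $D_{k}$ unchanged. Over $\{0,1\}$, abelian equivalence means equal numbers of $0$'s, so $v$ has an abelian border of length $k$ (i.e.\ $D_{k}=0$) if and only if $\mathcal{Z}_{k}(v)$ is \emph{balanced} (has equally many $0$'s and $1$'s).

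Applying this to a completion $v=(\hat{w^2})_i$ of $w^2$, and using that for $1\le k\le n$ the length-$k$ prefix and suffix of $v$ are disjoint completions of the length-$k$ prefix and suffix of $w$, Lemma~\ref{squarecompletion} yields: $w$ has an abelian border of length $k$ if and only if $\mathcal{Z}_{k}((\hat{w^2})_i)$ is balanced for some $i$. Hence $b=\min\{k\mid \exists\, i,\ \mathcal{Z}_{k}((\hat{w^2})_i)\text{ is balanced}\}$ (the set is nonempty since completing both halves of $w^2$ identically makes $\mathcal{Z}_{n}$ balanced, so $b\le n$). Since every Dyck word and every bitwise negation of a Dyck word is balanced, a Dyckian $\mathcal{Z}_{k}$-form is in particular balanced; thus each $k$ in the set defining $k^{*}$ is an abelian border length of $w$ and so is $\ge b$, whence $b\le k^{*}$.

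For the reverse inequality I would fix a completion $v=(\hat{w^2})_i$ with $\mathcal{Z}_{b}(v)$ balanced and examine the sequence $D_{0}=0,D_{1},\dots,D_{b}=0$ attached to it. The key step is that $D_{j}\neq0$ for every $0<j<b$: if $D_{j}=0$ with $0<j<b$ then, as $j<n$, the length-$j$ prefix and suffix of $v$ are abelian-equivalent completions of the length-$j$ prefix and suffix of $w$, so $w$ would have an abelian border of length $j<b$, contradicting minimality. Since $D$ changes by at most $1$ per step and $D_{0}=D_{b}=0$ while $D_{1},\dots,D_{b-1}$ avoid $0$, these interior values share one sign. If they are all positive then every $D_{j}$ with $0\le j\le b$ is $\ge0$; deleting the level $2$-steps changes none of these heights, so every prefix of $\mathcal{Z}_{b}(v)$ has at least as many $0$'s as $1$'s, and being balanced it is a Dyck word. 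If they are all negative we get the bitwise negation of a Dyck word, and when $b=1$ the form $\mathcal{Z}_{1}(v)$ is empty, hence trivially a Dyck word. In every case $\mathcal{Z}_{b}(v)$ is Dyckian, so $b$ lies in the set defining $k^{*}$ and $k^{*}\le b$. Combined with $b\le k^{*}$ this gives $b=k^{*}$.

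The main obstacle is exactly this reverse inequality, namely upgrading ``balanced'' to ``Dyckian'' at the minimal length $b$. What makes it work is that minimality of $b$ is precisely the condition forbidding the path $D_{0},\dots,D_{b}$ from touching $0$ before step $b$, and staying off $0$ on a single side is what separates a Dyck word (or its negation) from a merely balanced word; the removal of the level $2$-steps is harmless because it preserves all heights $D_{j}$.
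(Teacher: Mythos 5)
Your proposal is correct and follows essentially the approach the paper intends: the paper omits an explicit proof, attributing the main idea to Lemma~1 of Christodoulakis et al., and that idea is exactly your argument --- reduce to completions of $w^2$ via Lemma~\ref{squarecompletion}, identify abelian borders of length $k$ with zeros of the prefix--suffix difference walk (equivalently, balanced $\mathcal{Z}_k$-forms), and use minimality of $b$ to force the walk to stay on one side of zero, which upgrades ``balanced'' to Dyckian. The only delicate point, which the paper shares and your $k=n$ identical-completion remark handles consistently, is the implicit convention that an abelian-unbordered word is assigned minimal border length $n$ (the value Algorithm~\ref{minimalborder} returns), so that the minimum on both sides is always well defined.
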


\section{Abelian borders and lattice paths}

We can represent a word over $\{0,1\}$ as an ordered pair of paths of equal length on the lattice $\Z^2$. The first path corresponds to the prefixes, and the second path corresponds to the suffixes. Both paths begin at the origin (the southwest corner of the lattice). For the prefix, we start with the leftmost letter and step right every time we see a zero and up every time we see a one. For the suffix, we start with the rightmost letter and step in the same manner. Note that both paths are monotonically increasing, also called \emph{minimal}. 

\begin{lemma}[\cite{RaRiSa}]
A total word $w$ over $\{0,1\}$ has an abelian border of length ${k}$ if and only if the lattice paths for $w$ meet after ${k}$ steps.
\label{latticemeeting}
\end{lemma}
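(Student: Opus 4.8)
The plan is to unpack the two lattice-path constructions directly from their definitions and match the geometric meeting condition against the combinatorial abelian-border condition, proving both implications by translating between the Parikh vectors of prefixes/suffixes and the lattice coordinates.

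\textbf{Setup.} Write $w = w[0]w[1]\cdots w[n-1]$ over $\{0,1\}$. For the prefix path, after reading $w[0]\cdots w[k-1]$ the walker sits at the point $P_k = (|w[0..k-1]|_0,\, |w[0..k-1]|_1)$, since each $0$ is a right step and each $1$ is an up step; note $P_k$ is reached after exactly $k$ steps because each letter contributes exactly one step. For the suffix path, we read from the rightmost letter inward, so after $k$ steps the walker sits at $S_k = (|w[n-k..n-1]|_0,\, |w[n-k..n-1]|_1)$, the Parikh vector of the length-$k$ suffix. Both paths start at the origin, and I would record that ``the two paths meet after $k$ steps'' means precisely $P_k = S_k$, i.e. equality of both coordinates at the $k$-th lattice point.

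\textbf{The key equivalence.} The heart of the proof is the observation that $P_k = S_k$ if and only if the length-$k$ prefix and the length-$k$ suffix have equal Parikh vectors. Indeed, $P_k$ is the Parikh vector $\psi(w[0..k-1])$ and $S_k$ is the Parikh vector $\psi(w[n-k..n-1])$, so $P_k = S_k$ is literally $\psi(w[0..k-1]) = \psi(w[n-k..n-1])$. For total words over $\{0,1\}$, the excerpt's Preliminaries tell us that equality of Parikh vectors is exactly abelian equivalence, and by definition $w$ has an abelian border of length $k$ precisely when its length-$k$ prefix $x_1$ and length-$k$ suffix $x_2$ satisfy $x_1 \abequiv x_2$. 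I would also note the small simplification available over a binary alphabet: since $|w[0..k-1]|_0 + |w[0..k-1]|_1 = k = |w[n-k..n-1]|_0 + |w[n-k..n-1]|_1$, equality of the zero-counts already forces equality of the one-counts, so checking a single coordinate suffices — but invoking full Parikh-vector equality keeps the argument transparent.

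\textbf{Assembling the two directions.} For the forward direction, assume $w$ has an abelian border of length $k$; then $x_1 \abequiv x_2$ gives $\psi(w[0..k-1]) = \psi(w[n-k..n-1])$, hence $P_k = S_k$ and the paths meet after $k$ steps. For the converse, a meeting after $k$ steps gives $P_k = S_k$, hence equal Parikh vectors, hence $x_1 \abequiv x_2$, which is exactly the condition that $w$ is abelian bordered with border length $k$. I expect the only real obstacle to be bookkeeping rather than mathematical depth: one must be careful that ``after $k$ steps'' on the suffix path corresponds to the suffix $w[n-k..n-1]$ (reading inward from the right) and not to some reversed or off-by-one index, and that the step-count genuinely equals the letter-count on each path. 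Once the coordinate-to-Parikh-vector dictionary is pinned down, both implications close immediately. Since the statement is attributed to Rampersad et al.\ and follows directly from the definitions, I would keep the proof short and emphasize the dictionary rather than belabor the two symmetric implications.
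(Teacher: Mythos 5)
Your proof is correct and matches the paper's treatment: the paper states this lemma as a cited result from Rampersad et al.\ and justifies it only implicitly (via the observation that a meeting of the paths corresponds to $|V|=0$, i.e.\ equal Parikh vectors of the length-$k$ prefix and suffix), which is exactly the coordinate-to-Parikh-vector dictionary you spell out. Your added care about the suffix path being read right-to-left (harmless, since Parikh vectors ignore order) and the redundancy of the second coordinate over a binary alphabet are correct refinements of the same argument.
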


Figure~\ref{borderedlattice} gives an example of a lattice representation of a word that has abelian borders. If we draw the paths diagonally (taking a step diagonally upward for a zero and diagonally downward for a one) and treat them as discrete functions of the prefix (suffix) length, then we can easily take the absolute difference between these functions, and this difference is precisely $|V|$ as in Algorithm \textsc{Minimal-Abelian-Border-Partial}. Note that since we are dealing with total words, we have an abelian border of length ${k}$ if and only if $|V|=0$ after ${k}$ steps, and this corresponds to a meeting of the two paths. 

We can also think of the graph of $|V|$ as corresponding to the $\mathcal{Y}_{k}$-form of $w$, where a zero gets a step diagonally upward, a one gets a step diagonally downward, and a two gets a step straight forward. Note that this is a \emph{Motzkin path}. Similarly, we can extend this correspondence to $\mathcal{Z}_{k}$-forms by removing all of the straight steps from the graph. Just as we saw a relationship between $\mathcal{Z}_{k}$-forms, Dyck words, and abelian borders in Proposition \ref{dyckborder}, here we have that a word has an abelian border of length ${k}$ if and only if the $\mathcal{Z}_{k}$ graph is a\emph{ Dyck path}. A Dyck path of length $2n$ is a sequence of $n$ upward steps and $n$ downward steps that starts at the origin and never passes below the horizontal axis. As the name suggests, Dyck paths are coded by Dyck words \cite{Deutsch}.

\begin{figure}[h!]
\begin{center}
\begin{minipage}{.33\textwidth}
\begin{tikzpicture}
\draw[step=0.50cm,color=lightgray] (-1,-2) grid (3,2);
\draw[ultra thick, color=gray] (-1,-2) -- ++(0.5,0) -- ++(0,0.5) -- ++(0.5,0) -- ++(0.5,0) -- ++(0.5,0) -- ++(0,0.5) -- ++(0.5,0) -- ++(0,0.5);
\draw[ultra thick] (-1,-2) -- ++(0,0.5) -- ++(0.5,0) -- ++(0,0.5) -- ++(0.5,0) -- ++(0.5,0) -- ++(0.5,0) -- ++(0,0.5) -- ++(0.5,0);
\draw [->] (-1,-2) -- (-1,2);
\draw [->] (-1,-2) -- (3,-2);
\node at (1, -2.5) {Lattice representation};
\end{tikzpicture}
\end{minipage}
\begin{minipage}{.33\textwidth}
\begin{tikzpicture}
\draw[step=0.50cm,color=lightgray] (-1,-2) grid (3,2);
\draw[ultra thick, color=gray] (-1,0) -- ++(0.5,-0.5) -- ++(0.5,0.5) -- ++(0.5,-0.5) -- ++(0.5,-0.5) -- ++(0.5,-0.5) -- ++(0.5,0.5) -- ++(0.5,-0.5) -- ++(0.5,0.5);
\draw[ultra thick] (-1,0) -- ++(0.5,0.5) -- ++(0.5,-0.5) -- ++(0.5,0.5) -- ++(0.5,-0.5) -- ++(0.5,-0.5) -- ++(0.5,-0.5) -- ++(0.5,0.5) -- ++(0.5,-0.5);
\draw [->] (-1,0) -- (3,0);
\draw [<->] (-1,2) -- (-1,-2);
\node at (1, -2.5) {Diagonal form};
\end{tikzpicture}
\end{minipage}

\begin{minipage}{.33\textwidth}
\begin{tikzpicture}
\draw[step=0.50cm,color=lightgray] (-1,-2) grid (3,2);
\draw (-1,-2) -- ++(0.5,1) -- ++(0.5,-1) -- ++(0.5,1) -- ++(0.5,0) -- ++(0.5,0) -- ++(0.5,-1) -- ++(0.5,1) -- ++(0.5,-1);
\draw [->] (-1,-2) -- (-1,2);
\draw [->] (-1,-2) -- (3,-2);
\node at (1, -2.5) {$\mathcal{Y}_{8}$ graph};
\end{tikzpicture}
\end{minipage}
\begin{minipage}{.33\textwidth}
\begin{tikzpicture}
\draw[step=0.50cm,color=lightgray] (-1,-2) grid (3,2);
\draw (-1,-2) -- ++(0.5,1) -- ++(0.5,-1) -- ++(0.5,1) -- ++(0.5,-1) -- ++(0.5,1) -- ++(0.5,-1);
\draw [->] (-1,-2) -- (-1,2);
\draw [->] (-1,-2) -- (3,-2);
\node at (1, -2.5) {$\mathcal{Z}_{8}$ graph};
\end{tikzpicture}
\end{minipage}
\caption{The word $w=01000101$ has abelian borders of length 2 and 6; $\mathcal{Y}_8(w)=10122010$ and $\mathcal{Z}_8(w)=101010$.}
\label{borderedlattice}
\end{center}
\end{figure}
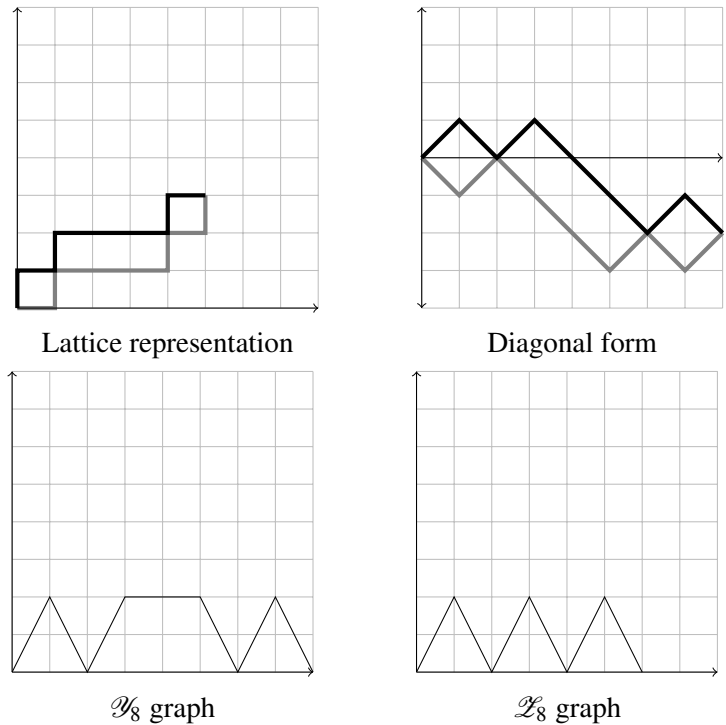

\subsection{The total word case}

First, we count binary words of a given length with a given minimal abelian border length. Theorem~4 in \cite{ChChCrIl} gives a bound on that number, but by counting pairs of lattice paths we can find the exact number. Our formula involves the \emph{Catalan number}, $C_n=\frac{1}{n+1}{{2n}\choose{n}}$, which enumerates Dyck words of length $2n$. 

\begin{theorem}
The number of binary total words of length $n$ with minimal abelian border of length ${k}$, $1\leq {k}\leq\left\lfloor\frac{n}{2}\right\rfloor$, is $2^{n-2{k}+1}C_{{k}-1}$.
\label{minimalborderfull}
\end{theorem}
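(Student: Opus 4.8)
The plan is to reduce the count to a single lattice excursion determined by the outer $2k$ letters of $w$, since the interior letters turn out to be free, and then to evaluate that excursion count with a weighted Motzkin/Catalan identity.

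First I would record that, for $w\in\{0,1\}^n$ and $1\le j\le n$, the word has an abelian border of length $j$ exactly when the height $h_j:=|w[0..j-1]|_0-|w[n-j..n-1]|_0$ vanishes; this is the Motzkin-path reading of $\mathcal{Y}_k(w)$ described above, with a step $+1$, $-1$, or $0$ according to whether the pair $(w[i],w[n-1-i])$ equals $(0,1)$, $(1,0)$, or is constant. Because abelian borders of a total word come in complementary pairs summing to $n$, a word that has any border has minimal border length at most $\lfloor\frac{n}{2}\rfloor$; hence for $1\le k\le\lfloor\frac{n}{2}\rfloor$ the event ``$w$ has minimal abelian border $k$'' is exactly $h_k=0$ together with $h_j\ne 0$ for $1\le j\le k-1$. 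This event depends only on the pairs $(w[i],w[n-1-i])$ for $0\le i\le k-1$, i.e.\ on the first and last $k$ letters, which are disjoint since $k\le\lfloor\frac{n}{2}\rfloor$. The remaining $n-2k$ interior letters are unconstrained, contributing a factor $2^{n-2k}$.

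Next I would count the admissible outer configurations, namely the sequences of $k$ steps forming a \emph{primitive excursion} ($h_0=h_k=0$ and $h_j\ne0$ for $0<j<k$), weighted by $2^{L}$ where $L$ is the number of level steps, because each level step is realized by either $(0,0)$ or $(1,1)$ while each $\pm1$ step has a unique realization. Since a step changes height by at most one, a primitive excursion stays strictly positive throughout its interior or strictly negative; in the positive case the first step must be $+1$ (to $h_1=1$) and the last step $-1$ (from $h_{k-1}=1$), and deleting them and lowering by one leaves an arbitrary $2$-colored Motzkin path of length $k-2$ (level steps $2$-colored, up and down steps uncolored). The key step is the classical identity that the number of such $2$-colored Motzkin paths of length $m$ is the Catalan number $C_{m+1}$: its generating function $M$ satisfies $M=1+2xM+x^2M^2$, whence $M=\frac{1-2x-\sqrt{1-4x}}{2x^2}=\sum_{m\ge0}C_{m+1}x^m$. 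Thus the strictly positive primitive excursions contribute $C_{k-1}$, and by the $0\leftrightarrow1$ symmetry the strictly negative ones contribute another $C_{k-1}$; the degenerate case $k=1$ (a single level step, weight $2$) agrees with $2C_0=2$.

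Finally I would multiply the two independent contributions to obtain $2^{n-2k}\cdot 2C_{k-1}=2^{n-2k+1}C_{k-1}$, as claimed. The main obstacle is this third step: justifying that minimality forces a strictly one-sided primitive excursion and correctly bookkeeping the $2$-coloring of level steps, for which the weighted Motzkin/Catalan generating-function identity is the cleanest tool; the edge cases $k=1$ and $k=\frac{n}{2}$ (empty interior) should be checked separately to confirm the single closed form.
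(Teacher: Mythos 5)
Your proof is correct, but it evaluates the key count by a genuinely different route than the paper. Both arguments share the same outer/inner decomposition: the event ``minimal abelian border of length $k$'' depends only on the $k$ pairs $(w[i],w[n-1-i])$, the $n-2k$ interior letters are free (factor $2^{n-2k}$), and the count of admissible outer configurations equals $2C_{k-1}$. The paper obtains $2C_{k-1}$ by keeping the two lattice paths separate: it counts ordered pairs of increasing paths that meet for the first time at the northeast corner of a rectangle of semi-perimeter $k$, which by the cited result of Gessel et al.\ is twice a Narayana number, and then sums over the $k-1$ possible rectangles using the identity $\sum_{r} N_{k-1,r}=C_{k-1}$. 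You instead collapse the pair of paths to its difference process, characterize minimality as a primitive excursion with $2$-colored level steps, strip the first and last steps, and invoke the bicolored-Motzkin/Catalan identity (bicolored Motzkin paths of length $m$ are counted by $C_{m+1}$), which you derive from the generating-function equation $M=1+2xM+x^2M^2$. Your route is self-contained (no Narayana numbers and no appeal to the intersection-counting literature for this step) and connects directly to the Motzkin-path/$\mathcal{Z}_k$-form picture the paper sets up just before this theorem; the paper's route has the advantage of exercising the machinery of Gessel et al.\ that it reuses later for Theorem~\ref{binarywordswithborders}. Both treatments handle $k=1$ as a separately checked degenerate case, and your weighting bookkeeping (weight $2$ per level step, weight $1$ per $\pm1$ step) is exactly right.
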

\begin{proof}
The result can easily be checked for $k=1$, so suppose that $k\geq 2$. Clearly, we want to count the number of ordered pairs of increasing lattice paths of equal length which meet for the first time after ${k}$ steps. For a given rectangle with dimensions $r\times(n-r)$ on the lattice, the number of ordered pairs of increasing lattice paths which begin at the southwest corner of the rectangle and meet at the northeast corner (without meeting in between) is given by
\[2N_{n-1,r}=\frac{2}{n-1}{{n-1}\choose{r}}{{n-1}\choose{r-1}}\]
where $N_{n,r}=\frac{1}{n}{n\choose r}{n\choose r-1}$ is the \emph{Narayana number} \cite{GeGoShWiYe}.

For an abelian border of length ${k}$, we want to find all such paths for all rectangles such that the southwest corner is at the origin and the perimeter of the rectangle is $2{k}$. It is easy to see that there are ${k}-1$ such rectangles. For each rectangle, we calculate the number of permissible pairs of paths, giving a total of
\[\sum_{r=1}^{{k}-1}\frac{2}{{k}-1}{{{k}-1}\choose{r}}{{{k}-1}\choose{r-1}}.\]
There is a relationship between the Narayana numbers and the Catalan numbers, namely
\begin{equation}
\sum_{r=1}^{n}N_{n,r}=\sum_{r=1}^{n}\frac{1}{n}{n\choose r}{n\choose r-1}=\frac{1}{n+1}{{2n}\choose{n}}=C_n,\label{eq0}
\end{equation}
which can be seen by the fact that $N_{n,r}$ counts the number of Dyck paths of length $2n$ that have $r$ peaks \cite{KoSh}. Using Equation~(\ref{eq0}), we get
\[\sum_{r=1}^{{k}-1}\frac{2}{{k}-1}{{{k}-1}\choose{r}}{{{k}-1}\choose{r-1}}=2C_{{k}-1}.\]
Since we do not care what happens after the minimal abelian border, every position numbered from ${k}$ to $n-{k}-1$ can either be one of two distinct letters, so we also need to count each of these possibilities. There are $2^{n-2{k}}$ such configurations possible, so for the total number of binary words of length $n$ with minimal abelian border of length ${k}$ we arrive at $2^{n-2{k}+1}C_{{k}-1}$.
\end{proof}

Second, we can also use lattice paths to find the exact number of binary total words of a given length that have no abelian borders. This result was proved by Rampersad et al.\ in \cite{RaRiSa} using a similar method with Motzkin paths.

\begin{theorem}
The number of binary partial words of length $n$ with no abelian borders is equal to the number of binary total words of length $n$ with no abelian borders, which is
${{n}\choose{\frac{n}{2}}}$
when $n$ is even, and 
$2{{n-1}\choose{\frac{n-1}{2}}}$
when $n$ is odd.
\label{unborderedwords}
\end{theorem}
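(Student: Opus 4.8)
The plan is to separate the statement into two halves: the \emph{equality} of the partial-word count with the total-word count, and the \emph{value} of the latter. The first half is essentially Proposition~\ref{nonborderedpartials}. If $n\geq 2$, then any length-$n$ partial word containing a hole has an abelian border, so a length-$n$ partial word with no abelian border cannot contain a hole, i.e.\ it is a total word; conversely every total word is a partial word. Hence for $n\geq 2$ the binary partial words of length $n$ with no abelian border are exactly the binary total words of length $n$ with no abelian border, and the two quantities agree. (The case $n=1$ is immediate: the only candidates are $0$ and $1$, both unbordered, and the odd-$n$ formula returns $2\binom{0}{0}=2$.) Everything therefore reduces to counting binary \emph{total} words of length $n$ without abelian borders.

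For the enumeration I would use the signed height of the Motzkin path $\mathcal{Y}_n(w)$ from Section~5. For a total word $w$ put $P(k)=|w[0..k-1]|_0-|w[n-k..n-1]|_0$; reading a $0$ in $\mathcal{Y}_n(w)$ as an up-step, a $1$ as a down-step and a $2$ as a level step, $P(k)$ is the height after $k$ steps, and it records the same information as the quantity $|V|$ in Algorithm~\ref{minimalborder}, since $|V|=2|P(k)|$ for a binary alphabet. By Lemma~\ref{latticemeeting} the two lattice paths meet after $k$ steps precisely when $P(k)=0$, so $w$ has no abelian border if and only if $P(k)\neq 0$ for all $1\leq k\leq n-1$. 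Two structural facts drive the count. First, pairing position $i$ with $n-1-i$ makes the increments of $P$ anti-symmetric about the centre, which forces $P(k)=P(n-k)$; it therefore suffices to require $P(k)\neq 0$ on $1\leq k\leq m$, where $m=\lfloor n/2\rfloor$. Second, since every increment lies in $\{-1,0,+1\}$, a walk that leaves $0$ and never returns to $0$ must keep a fixed sign; so the unbordered words split into two equinumerous families according to whether $P\geq 1$ or $P\leq -1$ on the interior, the exchange $0\leftrightarrow 1$ being a bijection between them.

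Next I would turn the positive family into a weighted lattice-path count. A binary word of length $n$ is chosen freely and independently by selecting the $m$ letter-pairs $(w[i],w[n-1-i])$ for $0\leq i\leq m-1$, together with a free central letter when $n$ is odd. The $i$-th pair produces the increment $+1$ for $(0,1)$, $-1$ for $(1,0)$, and $0$ for either $(0,0)$ or $(1,1)$, so each level step is realised in two ways while up- and down-steps are realised in one. Thus the increment sequence $s_1,\dots,s_m$ is a \emph{bicolored} Motzkin step sequence (level steps carry weight $2$), and the positive family consists exactly of those with all partial sums $\geq 1$. Deleting the forced initial up-step and shifting down by one identifies these with the bicolored Motzkin paths of length $m-1$ that stay weakly above the axis, counted with weight $2$ per level step; their number is $\binom{2m-1}{m-1}=\tfrac12\binom{2m}{m}$. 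Doubling for the sign then yields $\binom{2m}{m}=\binom{n}{n/2}$ unbordered words when $n=2m$, and doubling once more for the free central letter yields $2\binom{2m}{m}=2\binom{n-1}{(n-1)/2}$ when $n=2m+1$, as claimed.

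The real content sits in the weighted path identity of the last step: bicolored Motzkin paths of length $\ell$ that stay nonnegative, counted with weight $2$ per level step, number $\binom{2\ell+1}{\ell}$. I would obtain it by the standard generating-function argument — the first-return decomposition gives $M=1+2xM+x^2M^2$ for the paths that return to the axis, and the free-endpoint series is then $M/(1-xM)=\tfrac{1}{2x}\big((1-4x)^{-1/2}-1\big)=\sum_{\ell\geq0}\binom{2\ell+1}{\ell}x^\ell$ — or by a direct bijection with $\pm1$ paths; this is the same Motzkin-path computation used by Rampersad et al.~\cite{RaRiSa}, and it is the step I expect to be the main obstacle. The remaining work is routine bookkeeping: proving the symmetry $P(k)=P(n-k)$ and the constant-sign dichotomy carefully, checking that the all-$0$ and all-$1$ words (and, for even $n$, the self-complementary border of length $n/2$) are correctly excluded by $P(k)\neq0$, and disposing of the degenerate case $n=1$.
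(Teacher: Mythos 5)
Your proposal is correct, but it takes a genuinely different route from the one the paper records. The paper never proves Theorem~\ref{unborderedwords} from scratch: the equality with the partial-word count is Proposition~\ref{nonborderedpartials} (exactly as you use it), and the enumeration itself is attributed to Rampersad et al.~\cite{RaRiSa} and obtained inside the paper only as the $m=0$ specialization of Theorem~\ref{binarywordswithborders}, whose proof outsources the hard counting to the intersection formulas for ordered pairs of minimal lattice paths of Gessel et al.~\cite{GeGoShWiYe} (set $q=m/2=0$ in Equation~(\ref{eq00}) for even $n$, and likewise in the odd case). You instead fold the pair of lattice paths into the single signed height $P(k)$ of the Motzkin path $\mathcal{Y}_n(w)$, cut the range down to $1\leq k\leq\lfloor n/2\rfloor$ via the symmetry $P(k)=P(n-k)$, split into two equinumerous families by sign, and evaluate the resulting weighted count of nonnegative bicolored Motzkin paths by the first-return decomposition $M=1+2xM+x^2M^2$ together with $F=M/(1-xM)=\frac{1}{2x}\left((1-4x)^{-1/2}-1\right)=\sum_{\ell\geq 0}\binom{2\ell+1}{\ell}x^{\ell}$. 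I checked this computation and the surrounding reductions; they are right, and $\binom{2m-1}{m-1}=\frac{1}{2}\binom{2m}{m}$ then yields both stated formulas. In essence you have reconstructed, self-contained, the Motzkin-path proof of \cite{RaRiSa}. The trade-off: your route needs no external counting results, only an elementary generating-function identity, whereas the paper's route, by tracking the exact number of intersections rather than merely their absence, generalizes at no extra cost to any number $m$ of distinct abelian borders (Theorem~\ref{binarywordswithborders}); your sign-dichotomy step extends less directly, since a path that returns to zero may change sign there.

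One blemish in the part you labeled immediate: for $n=1$ the length-one partial word $\d$ has no abelian border (a border requires a nonempty proper prefix), so there are three unbordered binary partial words but only two total ones, and the equality half of the statement is false as literally written; it genuinely requires $n\geq 2$, which is precisely the hypothesis of Proposition~\ref{nonborderedpartials}. Your parenthetical listing only $0$ and $1$ as length-one candidates overlooks $\d$. This edge case is inherited from the theorem statement itself, but since you singled out $n=1$, you should either exclude that case explicitly or note the discrepancy.
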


Third, we use lattice path methods to arrive at a more general formula for counting binary total words of a given length with a given number of distinct abelian borders. As was the case with Theorem~\ref{minimalborderfull}, the following theorem uses formulas for counting the number of ordered pairs of lattice paths that intersect in particular ways \cite{GeGoShWiYe}. Note that when $m=0$, this reduces to Theorem \ref{unborderedwords}.  

\begin{theorem}
The number of binary total words of length $n$ with $m$ distinct abelian borders is 
\[\begin{cases}
2^{\frac{m}{2}+1}\frac{\left(n-\frac{m}{2}-1\right)!}{\left(\frac{n}{2}\right)!\left(\frac{n-m}{2}-1\right)!}, & \text{if }n\text{ and }m\text{ are both even;} \\
2^{\frac{m-1}{2}}(m+1)\frac{\left(n-\frac{m+3}{2}\right)!}{\left(\frac{n}{2}\right)!\left(\frac{n-m-1}{2}\right)!}, & \text{if }n\text{ is even and }m\text{ is odd;} \\
2^{\frac{m}{2}+1}{{n-\frac{m}{2}-1}\choose{\frac{n-1}{2}}}, & \text{if }n\text{ is odd and }m\text{ is even;} \\
0, & \text{if }n\text{ and }m\text{ are both odd.}
\end{cases}\]
\label{binarywordswithborders}
\end{theorem}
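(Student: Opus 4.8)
The plan is to realize each binary word of length $n$ as an ordered pair of minimal (monotone) lattice paths -- the prefix path $P$ and the suffix path $Q$ of Lemma~\ref{latticemeeting} -- and then to \emph{fold} the pair at its midpoint, reducing the enumeration of all distinct abelian borders to an enumeration of the \emph{meetings} of a single ordered pair of paths of length $\nu=\lfloor n/2\rfloor$. The fold is justified by the complementary-border symmetry of Section~2: writing $h(k)$ for the signed height of the difference path (the signed version of $|V|$), one has $h(k)=h(n-k)$, so the difference path is symmetric about $n/2$ and all its zeros in $\{1,\dots,n-1\}$ are determined by those in $\{1,\dots,\nu\}$. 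Concretely, I would split $w$ into its first $\nu$ letters and its last $\nu$ letters (plus a central letter when $n$ is odd); the first half determines the first $\nu$ steps of $P$, the reversed second half determines the first $\nu$ steps of $Q$, and these two halves are independent. Thus words of even length $2\nu$ are in bijection with arbitrary ordered pairs $(P,Q)$ of length-$\nu$ minimal paths, and words of odd length $2\nu+1$ with such pairs plus a free central bit. By Lemma~\ref{latticemeeting}, a meeting of $(P,Q)$ at step $k\le\nu$ is exactly an abelian border of length $k$.

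Next I would translate the border count $m$ into the meeting count. Since borders occur in complementary pairs $\{k,n-k\}$, a meeting at a step $k<\nu$ accounts for two distinct borders ($k$ and $n-k$), whereas for even $n$ a meeting at the midpoint $k=\nu=n/2$ gives the single self-complementary border. Hence, with $a$ the number of meetings among steps $1,\dots,\nu-1$ and $\epsilon\in\{0,1\}$ recording a meeting at step $\nu$ (co-termination of $P$ and $Q$), we get $m=2a+\epsilon$ when $n$ is even and $m=2a$ when $n$ is odd; the absence of a self-complementary border for odd $n$ forces $m$ even, which immediately gives the last case ($0$ when both are odd). This splits the problem into the theorem's four cases: $n$ even with $m$ even means $P,Q$ do \emph{not} co-terminate and meet $m/2$ times in the interior; $n$ even with $m$ odd means they co-terminate and meet $(m+1)/2$ times; $n$ odd with $m$ even means an arbitrary endpoint with $m/2$ meetings in $\{1,\dots,\nu\}$, doubled for the central bit.

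The core enumerative input is a formula for the number of ordered pairs of length-$\nu$ minimal paths with a prescribed number of meetings, a ballot/Catalan-type quantity obtainable from the intersection formulas for pairs of lattice paths in \cite{GeGoShWiYe} (equivalently, by the reflection principle). I would encode it by a first-return decomposition of the signed difference $g$ of the heights of $P$ and $Q$ (each step is $+1$, $-1$, or $0$, the last occurring in two ways according to whether both paths step right or both step up): each stretch between consecutive meetings has generating function $R(x)=1-\sqrt{1-4x}=2xC(x)$, where $C(x)=\sum_n C_n x^n$ is the Catalan series. Then $j$ co-terminating meetings correspond to $R(x)^j$, while an arbitrary tail adding no meeting has generating function $1/\sqrt{1-4x}$, giving the two identities
\[
[x^{\nu}]R(x)^{j}=2^{j}\,\frac{j}{2\nu-j}\binom{2\nu-j}{\nu},\qquad
[x^{\nu}]\frac{R(x)^{j}}{\sqrt{1-4x}}=2^{j}\binom{2\nu-j}{\nu},
\]
which follow from the standard expansions $[x^{N}]C(x)^{j}=\tfrac{j}{2N+j}\binom{2N+j}{N}$ and $[x^{N}]C(x)^{j}/\sqrt{1-4x}=\binom{2N+j}{N}$.

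Finally I would assemble the four cases: $n$ odd with $m$ even gives $2\,[x^{\nu}]R^{m/2}/\sqrt{1-4x}$; $n$ even with $m$ odd gives $[x^{\nu}]R^{(m+1)/2}$; and $n$ even with $m$ even gives the difference $[x^{\nu}]\bigl(R^{m/2}/\sqrt{1-4x}-R^{m/2}\bigr)$ (pairs meeting $m/2$ times but not co-terminating), which telescopes to the stated factorial expression. Each is a routine substitution of $\nu=\lfloor n/2\rfloor$ and $j$ into the two identities above. I expect the genuinely delicate part to be the bookkeeping of the second paragraph rather than the algebra: one must pin down exactly which meetings are counted, treat the co-termination/midpoint meeting (the self-complementary border) separately from the interior ones, and verify that the half-length fold is a genuine bijection preserving the border structure meeting by meeting -- in particular that no border is lost or double-counted and that the free central bit for odd $n$ contributes exactly the factor $2$. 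Once that dictionary is correct, the coefficient extractions are mechanical.
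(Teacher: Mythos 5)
Your proposal is correct, and its skeleton is the same as the paper's: both encode a word as the ordered pair of prefix/suffix paths (Lemma~\ref{latticemeeting}), fold at the midpoint to an ordered pair of monotone paths of length $\lfloor n/2\rfloor$, use the complementary-pair symmetry to convert $m$ distinct borders into a prescribed number of intersections (interior meetings counting twice, a midpoint meeting once), and then treat the same four parity cases, including the factor $2$ for the free central letter when $n$ is odd and the vanishing odd/odd case. The genuine difference is where the enumerative input comes from: the paper simply cites \cite{GeGoShWiYe} for the two counts it needs --- pairs of length-$p$ paths with $q$ intersections ending anywhere, $2^q\binom{2p-q}{p}$, and co-terminating pairs, $q2^q\frac{(2p-q-1)!}{p!\,(p-q)!}$ --- whereas you rederive exactly these numbers from scratch via the first-return decomposition of the weighted difference walk, $R(x)=1-\sqrt{1-4x}$, together with the standard coefficient formulas for $C(x)^j$ and $C(x)^j/\sqrt{1-4x}$. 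Your identities $[x^\nu]R^j=2^j\frac{j}{2\nu-j}\binom{2\nu-j}{\nu}$ and $[x^\nu]R^j/\sqrt{1-4x}=2^j\binom{2\nu-j}{\nu}$ agree with the cited formulas, and the three coefficient extractions do simplify to the theorem's three nonzero expressions, so what you gain is a self-contained proof at the cost of some routine generating-function work. One slip to fix: for odd $n$ the relation should read $m=2(a+\epsilon)$, not $m=2a$, since a co-termination at step $\nu=\frac{n-1}{2}$ also produces two distinct borders (of lengths $\nu$ and $\nu+1$); your later assembly (``$m/2$ meetings in $\{1,\dots,\nu\}$, arbitrary endpoint'') already uses the correct accounting, so nothing downstream is affected.
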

\begin{proof}
When $n$ and $m$ are even, the number of binary words of length $n$ with $m$ distinct abelian borders is precisely the number of ordered pairs of minimal lattice paths of length $\frac{n}{2}$ which start at the origin, intersect exactly $\frac{m}{2}$ times, and end at different points. By \cite{GeGoShWiYe}, this is equal to
\begin{align*}
&\left(\#\text{ of such paths that intersect }{\frac{m}{2}}\text{ times and end anywhere}\right)\\
&-\left(\#\text{ of such paths that intersect }{\frac{m}{2}-1}\text{ times and end at the same point}\right).
\end{align*}
If $p=\frac{n}{2}$ and $q=\frac{m}{2}$, then this quantity is
\begin{equation}
2^q{{2p-q}\choose{p}}-q2^{q}\frac{(2p-q-1)!}{p!(p-q)!}.\label{eq00}
\end{equation}
By substituting our fractions into Equation~(\ref{eq00}) and simplifying, we get our result.

When $n$ is even and $m$ is odd, first recall that for total words, abelian borders come in complementary pairs whose lengths sum to $n$. The only abelian border length that does not have a complementary partner which is distinct is the abelian border of length $\frac{n}{2}$, so the only way a word can have an odd number $m$ of abelian borders is if it has an abelian border of this length. This corresponds to pairs of minimal lattice paths of length $\frac{n}{2}$ which begin at the origin, end at the same point, and intersect $\frac{m-1}{2}$ times (not including the origin or the end point). This is given by
\[2^{\frac{m-1}{2}}(m+1)\frac{\left(n-\frac{m+3}{2}\right)!}{\left(\frac{n}{2}\right)!\left(\frac{n-m-1}{2}\right)!}.\]

When $n$ is odd and $m$ is even, note that if a given word of length $n$ has prefix and suffix lattice paths which after $\left\lfloor\frac{n}{2}\right\rfloor$ steps have intersected $r$ times, then the word has a total of $2r$ distinct abelian borders. Therefore, for a word with $m$ distinct abelian borders, the number of options for the first $\left\lfloor\frac{n}{2}\right\rfloor$ steps is
\[2^{\frac{m}{2}}{{n-\frac{m}{2}-1}\choose{\frac{n-1}{2}}}.\]
Since the first $\left\lfloor\frac{n}{2}\right\rfloor$ steps determine all but one of the letters in the word, and the middle letter does not affect the number of abelian borders, the total number of binary words of length $n$ with $m$ abelian borders is twice the above quantity.
Finally, when $n$ and $m$ are odd, it follows immediately from the above that an odd number of abelian borders can only occur when the length is even. Therefore, there are no words of odd length having an odd number of distinct abelian borders.
\end{proof}

\subsection{The partial word case}

 We generalize Theorem~\ref{minimalborderfull} and Theorem~\ref{binarywordswithborders} to partial words. First, we count the number of partial words of a given length with a given minimal abelian border length. 

\begin{theorem}
The number of partial words of length $n$ with $h$ holes over an alphabet of size $\sigma$ with minimal abelian border of length ${k}$, $n \geq 2k$, is
\[\sum_{j=0}^{h}g_{\sigma}({k},j){{n-2{k}}\choose{h-j}}{\sigma^{n-2{k}-h+j}},\]
where $g_{\sigma}({k},j)$ is the number of words of length $2{k}$ over an alphabet of size $\sigma$ with $j$ holes and a minimal abelian border of length ${k}$.
\label{generalpwordminimalborder}
\end{theorem}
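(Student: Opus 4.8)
The plan is to exploit the fact that, because $n \geq 2k$, whether $w$ has a minimal abelian border of length $k$ depends only on the outer $2k$ positions of $w$. Concretely, I would split each partial word $w$ of length $n$ into a \emph{boundary} $u = w[0..k-1]\,w[n-k..n-1]$, a partial word of length $2k$, and an \emph{interior} $v = w[k..n-k-1]$, a partial word of length $n-2k$. Since $n \geq 2k$, the prefix block and the suffix block do not overlap, so this is a genuine decomposition and $w$ is recovered uniquely from the pair $(u,v)$ together with the knowledge of $k$ and $n$.

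The first key step is the equivalence: $w$ has minimal abelian border of length $k$ if and only if $u$ has minimal abelian border of length $k$. For any $\ell \leq k$, an abelian border of length $\ell$ of $w$ comes from the factorizations $w = x_1 y = z x_2$ with $|x_1|=|x_2|=\ell$ and $x_1 \abequiv x_2$, where $x_1 = w[0..\ell-1]$ and $x_2 = w[n-\ell..n-1]$. Since $\ell \leq k$, these two blocks are precisely the prefix and suffix of length $\ell$ of $u$, and the condition $x_1 \abequiv x_2$ reads only their characters. Hence, for every $\ell$ with $1 \leq \ell \leq k$, $w$ has an abelian border of length $\ell$ if and only if $u$ does. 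Minimality at length $k$ is the conjunction of ``there is a border of length $k$'' and ``there is no border of length $\ell$ for $1 \leq \ell < k$,'' all statements about lengths $\leq k$; therefore $w$ has minimal abelian border $k$ exactly when $u$ does. The interior $v$ plays no role because borders of length at most $k$ never read positions $k,\ldots,n-k-1$.

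With the equivalence established, the count is a direct decomposition. I would partition the qualifying words by $j$, the number of holes lying inside the boundary $u$; the interior $v$ then contains the remaining $h-j$ holes. By definition there are $g_\sigma(k,j)$ choices of boundary $u$ of length $2k$ with $j$ holes and minimal abelian border $k$, and each yields a valid $w$ by the equivalence. Independently, the interior $v$ is an arbitrary partial word of length $n-2k$ with exactly $h-j$ holes: choose the hole positions in $\binom{n-2k}{h-j}$ ways and fill the remaining $n-2k-(h-j)$ positions with any letters, in $\sigma^{n-2k-h+j}$ ways. Multiplying and summing over $j$ from $0$ to $h$ gives $\sum_{j=0}^{h} g_\sigma(k,j)\binom{n-2k}{h-j}\sigma^{n-2k-h+j}$, the claimed formula. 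The out-of-range contributions vanish automatically: $g_\sigma(k,j)=0$ whenever $j>2k$, and $\binom{n-2k}{h-j}=0$ whenever $h-j>n-2k$.

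The only delicate point is the equivalence of the second paragraph, and specifically the observation that the partial-word notion of abelian border localizes cleanly to the boundary positions. Because abelian compatibility of two blocks depends only on the characters occupying those blocks, and the prefix/suffix blocks of length $\ell \leq k$ sit entirely within $u$, this localization is immediate; everything else is routine bookkeeping of how the $h$ holes and the letters split between boundary and interior. I do not expect any genuine obstacle beyond verifying this localization and confirming that the map $w \mapsto (u,v)$ is a bijection onto $\{u : u$ has minimal border $k\} \times \{v : |v| = n-2k\}$ once the hole count $j$ of $u$ is fixed.
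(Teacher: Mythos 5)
Your proof is correct and takes essentially the same approach as the paper's: decompose $w$ into the length-$2k$ boundary (prefix and suffix of length $k$, counted by $g_{\sigma}(k,j)$) and a free interior of length $n-2k$ carrying the remaining $h-j$ holes, then sum over $j$. The only difference is that you make explicit the localization step---that abelian borders of length at most $k$ depend only on the boundary positions, so minimality at length $k$ transfers between $w$ and the boundary---which the paper's proof leaves implicit.
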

\begin{proof}
For $0\leq j\leq h$ there are $g_{\sigma}({k},j)$ choices for the prefix and suffix of length ${k}$ containing a total of $j$ holes, and there are $n-2{k}$ remaining positions in the partial word, exactly $h-j$ of which must be holes. For each position that is a letter, there are $\sigma$ possibilities, and there are $n-2{k}-h+j$ such positions.
\end{proof}

In the $k=1$ case,  we can compute the $g_{\sigma}(k,j)$'s. For $h = 1$, the  number of partial words of length $n$ with $h$ holes over an alphabet of size $\sigma$ with minimal abelian border of length one, where $n \geq 2$, is \[\sum_{j=0}^{1}g_{\sigma}(1,j){{n-2}\choose{1-j}}{\sigma^{n+j-3}},\] where $g_{\sigma}(1,0)=\sigma$ and $g_{\sigma}(1,1)=2\sigma$. For $h \geq 2$, the  number of partial words of length $n$ with $h$ holes over an alphabet of size $\sigma$ with minimal abelian border of length one, where $n \geq 2$, is \[\sum_{j=0}^{h}g_{\sigma}(1,j){{n-2}\choose{h-j}}{\sigma^{n-h+j-2}},\]
where $g_{\sigma}(1,0)=\sigma$, $g_{\sigma}(1,1)=2\sigma$, $g_{\sigma}(1,2)=1$, and $g_{\sigma}(1,j)=0$ for $j \in [3..h]$.
In the $\sigma=2$ case, we can compute the $g_{\sigma}({k},j)$'s using the next corollary. To illustrate the proof,
the total words over $\{0,1\}$, up to renaming of letters and reversal, of length $2k=10$ with minimal abelian border length $k=5$ are
\begin{center}
$0000\underline{1}\underline{0}\underline{0}\underline{0}\underline{0}1,
000\underline{1}\underline{1}\underline{0}\underline{0}\underline{0}11,
000\underline{1}\underline{1}\underline{0}\underline{0}1\underline{0}1,
000\underline{1}\underline{1}\underline{0}1\underline{0}\underline{0}1,$\\
$0\underline{0}10\underline{1}\underline{0}\underline{0}\underline{0}11, 
00\underline{1}0\underline{1}\underline{0}\underline{0}1\underline{0}1, 
00\underline{1}\underline{1}\underline{1}\underline{0}11\underline{0}1,$
\end{center}
the underlined positions are the $k$ ways to insert a hole while preserving the minimal abelian border.

\begin{corollary}
\label{cor}
For $k \geq 2$ and $0 \leq j \leq k$, the equality $g_2({k},j)=2\binom{k}{j}C_{k-1}$ holds. Thus for $k \geq 2$, the number of binary partial words with $h$ holes of length $n$, $n\geq2{k}$, with minimal abelian border of length ${k}$ is \[f(n,{k},h)={2^{n-2{k}+1}}C_{k-1}\sum_{j=0}^{h}\binom{k}{j}{{n-2{k}}\choose{h-j}}{2^{j-h}}.\]
\end{corollary}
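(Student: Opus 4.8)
The plan is to reduce the whole statement to the single identity $g_2(k,j)=2\binom{k}{j}C_{k-1}$, because the closed form for $f(n,k,h)$ then falls out of Theorem~\ref{generalpwordminimalborder} with no real work: substituting $g_2(k,j)=2\binom{k}{j}C_{k-1}$ into the count $\sum_{j=0}^{h}g_2(k,j)\binom{n-2k}{h-j}2^{\,n-2k-h+j}$ and factoring the constant $2C_{k-1}\,2^{\,n-2k}$ out of the sum produces exactly $2^{\,n-2k+1}C_{k-1}\sum_{j=0}^{h}\binom{k}{j}\binom{n-2k}{h-j}2^{\,j-h}$. So all the content lives in the length-$2k$ words.

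To prove $g_2(k,j)=2\binom{k}{j}C_{k-1}$, I would first record the base case $g_2(k,0)=2C_{k-1}$, which is Theorem~\ref{minimalborderfull} evaluated at $n=2k$. Then I would set up a bijection between the partial words counted by $g_2(k,j)$ and the pairs $(w,S)$, where $w$ is one of the $2C_{k-1}$ binary total words of length $2k$ with minimal abelian border $k$ and $S$ is a $j$-element subset of a distinguished set of $k$ ``hole-safe'' positions of $w$ (exactly the underlined positions in the example preceding the corollary). Since each $w$ offers $\binom{k}{j}$ such subsets, the bijection gives $g_2(k,j)=\binom{k}{j}\cdot 2C_{k-1}$.

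The construction rests on analysing the difference vector $V$ of Algorithm~\textsc{Minimal-Abelian-Border-Partial}, equivalently the height of the $\mathcal{Y}_k$-path of $w$. Fixing $w$ and assuming, up to the $0\leftrightarrow1$ symmetry (which splits the $2C_{k-1}$ words into two symmetric classes), that this height is strictly positive on $1,\dots,k-1$ and returns to $0$ at step $k$, a single-position computation shows that replacing a letter by a hole keeps the height away from $0$ before step $k$ \emph{precisely} when the holed position carries a $1$ in the prefix half or a $0$ in the suffix half; any other hole lowers $|V|$ at some intermediate level and so manufactures a shorter border (in fact one of length $k-1$, since the last interior height is forced to equal $1$). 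Because $w$ has an abelian border of length $k$, its prefix and suffix of length $k$ carry equally many $1$'s and $0$'s, so the number of safe positions is exactly $(\#\text{prefix }1\text{'s})+(\#\text{suffix }0\text{'s})=k$. A one-line extension of the same computation shows $|V|$ only grows at every intermediate level when holes sit on safe positions, so holing any subset $S$ of them simultaneously preserves minimal border $k$; thus $(w,S)\mapsto p$ is well defined and injective (the type of $w$, hence the intended filling of each hole, is read off from the endpoint letters).

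The step I expect to be the crux is \textbf{surjectivity together with uniqueness}: that every binary partial word $p$ of length $2k$ with $j$ holes and minimal abelian border $k$ is hit by exactly one pair. The idea is to complete $p$ canonically, filling prefix holes by $1$ and suffix holes by $0$ (the positive case, which is forced because $p[0]$ and $p[2k-1]$ must be distinct non-hole letters, a hole at either giving a border of length $1$), to obtain a total word $w^{+}$, and then to prove $w^{+}$ has minimal border exactly $k$. Two facts drive this. First, any shorter abelian border of a completion descends to the partial word it completes, so $w^{+}$ inherits from $p$ the absence of borders of length $<k$; since its height starts with an upward step and never vanishes on $1,\dots,k-1$, it stays positive there and is therefore $\ge 0$ at step $k$. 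Second, among all completions of $p$ the choice $w^{+}$ minimizes the prefix-minus-suffix count of $0$'s, and some completion realizes the length-$k$ border guaranteed by $p$, so this minimum is $\le 0$; the two bounds force the height to be $0$ at step $k$, i.e.\ $w^{+}$ has border $k$. Its holes then lie exactly on safe positions, so $p$ equals the image of $(w^{+},\text{holes})$, and the canonical filling makes $w^{+}$ the unique preimage. This inheritance-plus-monotonicity argument for the canonical completion is the delicate point; everything else is the bookkeeping described above.
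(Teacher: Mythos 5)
Your proposal is correct and follows exactly the approach the paper intends: the paper gives no formal proof of Corollary~\ref{cor}, only the example with the $k$ underlined ``hole-safe'' positions, and its implicit argument is precisely your reduction $g_2(k,j)=\binom{k}{j}\cdot 2C_{k-1}$ via Theorem~\ref{minimalborderfull} at $n=2k$ followed by hole insertion at the $k$ safe positions (the $1$'s in the prefix half and $0$'s in the suffix half, for the positive class), then substitution into Theorem~\ref{generalpwordminimalborder}. Your identification of the safe positions, the observation that the interior height $d_{k-1}=1$ forces non-safe holes to create a border of length $k-1$, and the canonical-completion argument for surjectivity and uniqueness are all sound and supply details the paper leaves entirely implicit.
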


Second, we count the number of partial words of a given length with a given number of distinct abelian borders. In the proof of Proposition~\ref{nonborderedpartials}, we saw that a partial word of length $n$ with one hole must have at least an abelian border of length $n-1$. By induction on $h$, if we insert a hole into a partial word with $h-1$ holes, we are guaranteed to get an additional abelian border of length $n-h$. 

\begin{theorem}
A partial word $w$ of length $n$ with $h$ holes, $0\leq h<n$, has at least $h$ distinct abelian borders. In particular, $w$ has at least abelian borders of lengths $n-1,n-2,\dots,n-h$.
\label{atleasthborders}
\end{theorem}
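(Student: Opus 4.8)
The plan is to prove the stronger ``in particular'' clause, since $n-1, n-2, \dots, n-h$ are $h$ \emph{distinct} lengths (because $h<n$), and by definition abelian borders of different lengths are distinct; so it suffices to exhibit, for each integer $i$ with $1\le i\le h$, an abelian border of $w$ of length $n-i$. The natural candidate is the pair formed by the prefix $P=w[0..n-i-1]$ and the suffix $S=w[i..n-1]$, both of length $n-i\ge 1$ (and each cut off by a nonempty remainder of length $i$, so the defining decomposition $w=Pu=vS$ is legitimate). To decide whether $P\abequiv S$ I would invoke the equivalence underlying Algorithm~\textsc{Minimal-Abelian-Border-Partial} (its correctness is Proposition~\ref{brdralgthm}): a prefix and a suffix of equal length are abelian compatible if and only if $\|\psi(P)-\psi(S)\|_1$ is at most the total number of holes occurring in $P$ plus the number occurring in $S$, where $\|\cdot\|_1$ is the sum of absolute values of the coordinates (this is exactly the quantity $|V|$ tested against $holes$ in the algorithm, and recall $\psi$ already ignores the $\d$ character). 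Thus the entire argument collapses to verifying one inequality.

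To verify it I would split on how $P$ and $S$ sit inside $w$. Suppose first $2i\le n$, so that $P$ and $S$ overlap on the positions $[i..n-i-1]$; on this overlap they agree letter for letter, so those positions cancel and $\psi(P)-\psi(S)=\psi(w[0..i-1])-\psi(w[n-i..n-1])$, the difference of the Parikh vectors of the two length-$i$ end blocks. Writing $H$, $M$, $T$ for the number of holes in the head $w[0..i-1]$, the middle $w[i..n-i-1]$, and the tail $w[n-i..n-1]$, each end block contributes at most its non-hole count to the norm, so
\[\|\psi(P)-\psi(S)\|_1 \le (i-H)+(i-T)=2i-H-T.\]
Meanwhile the holes in $P$ plus the holes in $S$ total $(H+M)+(M+T)=H+T+2M$, and since head, middle, and tail partition $w$ we have $H+M+T=h$. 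The criterion $\|\psi(P)-\psi(S)\|_1\le H+T+2M$ is therefore implied by $2i-H-T\le H+T+2M$, i.e.\ by $i\le H+M+T=h$, which holds since $i\le h$. Hence $P\abequiv S$ and $w$ has an abelian border of length $n-i$.

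When $2i>n$ the prefix and suffix are disjoint, separated by a gap $w[n-i..i-1]$ of $2i-n$ positions; now nothing cancels and $\|\psi(P)-\psi(S)\|_1\le (n-i-H')+(n-i-T')$, where $H',T'$ count the holes in $P,S$. Letting $G$ be the number of gap holes, $H'+T'+G=h$ with $G\le 2i-n$, so $H'+T'\ge h-(2i-n)\ge n-i$ using $i\le h$; this is precisely what forces $\|\psi(P)-\psi(S)\|_1\le H'+T'$, so again $P\abequiv S$. Running over all $i\in\{1,\dots,h\}$ then yields the $h$ announced distinct borders. (Equivalently, one could package this as the induction on $h$ suggested before the statement: filling one hole of $w$ gives $w'$ with $w\subset w'$, every border of $w'$ is inherited by $w$ since each completion of $w'$ is a completion of $w$, and the computation above with $i=h$ supplies the single extra border of length $n-h$.) I expect the only genuine difficulty to be exactly this bookkeeping split into the overlapping and disjoint regimes and tracking which holes are counted once versus twice; the compatibility criterion itself, which is the conceptual heart, is handed to us for free by Section~3.
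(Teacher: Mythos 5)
Your proof is correct, but it takes a genuinely different route from the paper's. You certify each border length $n-i$, $1\le i\le h$, directly: you invoke the compatibility test underlying Algorithm~\textsc{Minimal-Abelian-Border-Partial} (a prefix and a suffix of equal length are abelian compatible if and only if the $\ell_1$-norm of the difference of their Parikh vectors is at most the number of holes in the prefix plus the number in the suffix, overlap holes counted twice), and your two-regime bookkeeping --- cancellation on the overlap when $2i\le n$, a gap of $2i-n$ positions when $2i>n$ --- correctly reduces both cases to the hypothesis $i\le h$; both inequalities check out, as do the side conditions (the lengths $n-1,\dots,n-h$ are at least $1$ because $h<n$, and distinct lengths are distinct borders by the paper's definition). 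The paper instead argues by induction on $h$: the proof of Proposition~\ref{nonborderedpartials} already yields an abelian border of length $n-1$ as soon as there is one hole, and the inductive step asserts that inserting a hole into a word with $h-1$ holes produces an additional abelian border of length $n-h$, the old borders being inherited under containment --- exactly the monotonicity observation you make in your closing parenthesis. What your approach buys is a self-contained, uniform argument that exhibits all $h$ borders at once and makes explicit the quantitative step the paper's sketch leaves implicit (why the $h$-th hole forces a border of length exactly $n-h$; this is your computation at $i=h$). What the paper's approach buys is brevity and the conceptual point that adding holes can only create abelian borders, never destroy them. Note that both arguments ultimately rest on the correctness of the compatibility test (Proposition~\ref{brdralgthm}), which the paper states without proof; your reliance on it is legitimate since the paper asserts it, though a fully self-contained write-up would include the short Hall-type argument showing that the $\ell_1$ criterion really characterizes abelian compatibility of partial words.
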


We next focus on the one-hole case. Experimental data suggest some elegant formulas in this case.

\begin{theorem}
\label{oldconj}
If $n$ is even, then the number of binary partial words of length $n$ with exactly one hole and $n-1$ distinct abelian borders is $4(3^{n/2}-2^{n/2})$.
\end{theorem}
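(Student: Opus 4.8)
The plan is to start from the observation that the abelian–border lengths of a length-$n$ word can only be $1,\dots,n-1$, so having $n-1$ distinct abelian borders means having an abelian border of \emph{every} length $k\in\{1,\dots,n-1\}$. I would fix the unique hole at position $p$ and, writing $A_k$ and $B_k$ for the numbers of definite $0$'s in the length-$k$ prefix $w[0..k-1]$ and suffix $w[n-k..n-1]$, track the signed difference $d_k=A_k-B_k$ (the signed version of the difference vector used in Algorithm~\textsc{Minimal-Abelian-Border-Partial}). Writing $Z$ for the total number of definite $0$'s and using $B_k=Z-A_{n-k}$ and $A_k=Z-B_{n-k}$, one gets the key symmetry $d_k=d_{n-k}$, valid regardless of where the hole is.

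Next I would translate ``$w$ has an abelian border of length $k$'' into a condition on $d_k$ according to where the hole sits relative to the length-$k$ prefix and suffix. When it lies in neither (Case A) the condition is $d_k=0$; when it lies in the prefix only (Case B1) it is $d_k\in\{-1,0\}$; when it lies in both (Case C, possible only for $k$ near $n$) it is $d_k\in\{-1,0,1\}$. Reversal preserves abelian borders and sends the hole from $p$ to $n-1-p$ with no fixed point when $n$ is even, so I would treat only $p\le\frac{n}{2}-1$ and multiply by $2$. For such $p$ the lengths split as Case A for $k\le p$, Case B1 for $p+1\le k\le n-p-1$, and Case C for $k\ge n-p$; by $d_k=d_{n-k}$ the Case C requirement at length $k$ is implied by the Case A requirement at length $n-k$, so the system collapses to the free constraints $d_k=0$ for $1\le k\le p$ and $d_k\in\{-1,0\}$ for $p+1\le k\le\frac{n}{2}$.

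Since the position pairs $(k-1,\,n-k)$ for $k=1,\dots,\frac{n}{2}$ partition all $n$ positions, the word is encoded by the increments $\Delta_k=d_k-d_{k-1}=[\![w[k-1]=0]\!]-[\![w[n-k]=0]\!]$, and I would count admissible increment sequences weighted by the number of letter assignments realizing them. The constraint $d_k=0$ for $k\le p$ forces $\Delta_1=\dots=\Delta_p=0$, realizable in $2$ ways each ($w[k-1]=w[n-k]$), giving a factor $2^{p}$. For $p+1\le k\le\frac{n}{2}$ the partial sums must stay in $\{0,-1\}$: the hole step $k=p+1$ sends the state to $(1,1)^{\top}$ in the basis (state $0$, state $-1$), and each later regular step acts by the transfer matrix $T=\left(\begin{smallmatrix}2&1\\1&2\end{smallmatrix}\right)$, whose weights count the letter choices $\{00,11\},\{10\},\{01\}$ that keep the walk in $\{0,-1\}$. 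As $(1,1)^{\top}$ is the eigenvector of $T$ for eigenvalue $3$, the middle factor is $2\cdot 3^{\,n/2-p-1}$, so $C(p)=2^{\,p+1}3^{\,n/2-p-1}$. Summing $\sum_{p=0}^{n/2-1}2^{\,p+1}3^{\,n/2-p-1}=2(3^{n/2}-2^{n/2})$ and doubling for the two halves gives $4(3^{n/2}-2^{n/2})$.

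The main obstacle will be making the case analysis of the abelian-border condition rigorous in the presence of the hole, especially Case C: the length-$k$ prefix and suffix then both contain the single physical hole, yet the definition of $\abequiv$ permits their holes to be completed independently, so the correct condition really is $d_k\in\{-1,0,1\}$; I must also verify carefully that $d_k=d_{n-k}$ renders every large-$k$ constraint redundant. Once the reduction to the free constraints $d_k=0$ for $k\le p$ and $d_k\in\{-1,0\}$ for $p+1\le k\le\frac{n}{2}$ is established, the transfer-matrix evaluation and the geometric summation are routine, with the identity $T(1,1)^{\top}=3(1,1)^{\top}$ being exactly what produces the $3^{n/2}$ term and the subtracted $2^{n/2}$ arising from the closed form of the geometric series.
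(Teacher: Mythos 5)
Your proposal is correct, and it reaches the count by a genuinely different route than the paper. The paper first reduces by the four-fold symmetry of letter renaming and reversal to ``unique'' words (hole in the first half, first non-hole letter $0$), then obtains the per-position counts $2^i\cdot 3^{\frac{n}{2}-1-i}$ of Table~\ref{tabconj} by induction on $n$: each unique word of length $n$ yields three unique words of length $n+2$ by inserting a suitable pair of letters in the middle, plus $2^{\frac{n}{2}}$ further unique words with the hole at position $\frac{n}{2}$; summing the geometric series gives $3^{n/2}-2^{n/2}$, and multiplying by $4$ finishes. You instead work at fixed $n$: after noting that $n-1$ distinct borders forces a border of \emph{every} length $1,\dots,n-1$, you characterize the qualifying words with hole at $p\le\frac{n}{2}-1$ through the signed difference walk $d_k$, and your case analysis is right under the paper's definitions --- since $u\abequiv v$ only asks that some completions be permutations of each other, the shared hole in your Case C may be filled independently on the prefix and suffix sides, giving $d_k\in\{-1,0,1\}$, which is indeed subsumed via $d_k=d_{n-k}$ by the Case A constraint $d_{n-k}=0$; similarly the Case B1 constraints for $k>\frac{n}{2}$ are subsumed by those for $n-k$. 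The transfer matrix $T=\left(\begin{smallmatrix}2&1\\1&2\end{smallmatrix}\right)$ with $T(1,1)^{\top}=3(1,1)^{\top}$ then gives $C(p)=2^{p+1}3^{\frac{n}{2}-p-1}$, exactly twice the paper's unique count (your factor $2$ absorbs letter renaming), and reversal, fixed-point-free for even $n$, doubles once more; the numbers agree, e.g.\ for $n=6$: $2(18+12+8)=76=4(3^3-2^3)$. What your approach buys is a self-contained, non-inductive argument in which the hardest claim --- precisely which words qualify --- is settled by an explicit constraint reduction, whereas the paper's induction leaves the exhaustiveness of its middle-insertion construction largely unargued; what the paper's approach buys is brevity and a recursive construction that it reuses later, in the proofs of Propositions~\ref{lem3m1} and~\ref{lem2m1} and of Theorem~\ref{conjm1}.
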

\begin{proof}
It is clear that for $n$ even and $m=n-1$, each binary partial word of length $n$ with exactly one hole and $m$ distinct abelian borders  generates, through renaming of letters and reversal, three other partial words with the same properties. So to get the number of partial words satisfying the desired properties, we need to find the number of ``unique'' partial words satisfying them and multiply by $4$; ``unique'' is defined in terms of the hole being in the first half of the partial word and the first non-hole letter being $0$. The first few values are given in Table~\ref{tabconj}. For example, for $n=6$, there are 9 unique partial words with $\d$ in position~$0$, 6 with $\d$ in position~$1$, and 4 with $\d$ in position~$2$, for a total of $19$. 
\begin{table}[h!]
\centering
\begin{tabular}{c | c| c| c| c| c  }
${\text{length}} \backslash {i}$ & 0 & 1 & 2 &  $\cdots$ & $\frac{n-2}{2}$ \\
\hline
2 & 1 & &&&\\
4 & 3 & 2 &&&\\
6 & 9 & 6 & 4 &&\\
8 & 27 & 18 & 12 & &\\
10 & 81 & 54 & 36 &  \\
\vdots & \vdots & \vdots & \vdots & &\\
$n$ & $2^0 \times 3^{(\frac{n}{2}-1)}$ & $2^1 \times 3^{(\frac{n}{2}-2)}$ & $2^2 \times 3^{(\frac{n}{2}-3)}$ & $\cdots$ & $2^{(\frac{n}{2}-1)} \times 3^0$
\end{tabular}
\caption{Number of unique binary partial words of even length $n$ with one hole in position~$i$ and with $n-1$ distinct abelian borders.}
\label{tabconj}
\end{table}

Thus, the number of unique partial words $N$ for a given $n$ is 
\begin{center}
$N = 2^0 \times 3^{(\frac{n}{2}-1)} + 2^1 \times 3^{(\frac{n}{2}-2)} + 2^2 \times 3^{(\frac{n}{2}-3)} + \cdots + 2^{(\frac{n}{2}-1)} \times 3^0.$
\end{center}
This is a geometric series with ratio $\frac{2}{3}$, so $N=3^{n/2}-2^{n/2}$. 

For $n=2$, there is one unique partial word. For $n=4$, there are three unique partial words with ${\d}$ in position~$0$ and two with $\d$ in position~$1$. Then to build the partial words of length $n+2$ from those of length $n$, when $n \geq 4$, we proceed as follows. Each partial word $a_0 \cdots a_{\frac{n-2}{2}}a_{\frac{n}{2}} \cdots a_{n-1}$ of length $n$ gives rise to a partial word of length $n+2$ by inserting a pair of suitable letters between $a_{\frac{n-2}{2}}$ and $a_{\frac{n}{2}}$, to obtain three partial words of the form  $a_0 \cdots a_{\frac{n-2}{2}}aba_{\frac{n}{2}} \cdots a_{n-1}$. Then there are $2^{\frac{n}{2}}$ additional unique partial words of length $n+2$ with $\d$ in position~$\frac{n}{2}$.
\end{proof}

It is clear that each binary partial word of even length $n$ with exactly one hole having a minimal abelian border of length $n-1$ generates, through renaming of letters and reversal, three other partial words with the same properties. So to get the number of partial words satisfying the desired properties, we need to find the number of ``unique'' partial words satisfying them and multiply by $4$; ``unique'' is defined in terms of the hole being in the first half of the partial word and the first non-hole letter being $0$. 

\begin{lemma}
\label{lem5m1}
If a binary partial word of even length $n$, $n \geq 4$, with exactly one hole in the first half having a minimal abelian border of length $n-1$ starts with $0$, then it ends with $11$. Therefore if such $w=uv$, with $|u|=|v|$, is unique, then $|u|_0 > |v|_0$ and $|u|_1 < |v|_1-1$.
\end{lemma}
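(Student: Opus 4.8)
The plan is to encode all the no-border constraints as forbidden values of a single integer walk and then read off both conclusions from that walk. Write the hole at position $p$, so that $1\le p\le\frac n2-1$ (the hole is in the first half and $w$ begins with $0$, so $p\neq 0$). For $0\le k\le n$ set $f(k)=|w[0..k-1]|_0-|w[n-k..n-1]|_0$, counting only genuine zeros (holes contribute nothing), so that $f(0)=0$ and each step $f(k)-f(k-1)=[w[k-1]{=}0]-[w[n-k]{=}0]$ lies in $\{-1,0,1\}$. Over the binary alphabet, two equal-length factors are abelian compatible exactly when their numbers of zeros can be matched after filling holes, so ``$w$ has no abelian border of length $k$'' translates into a forbidden-value condition on $f(k)$ that depends only on how many of the prefix $w[0..k-1]$ and suffix $w[n-k..n-1]$ contain the hole: when neither does ($k\le p$) the condition is $f(k)\neq 0$; when only the prefix does ($p<k\le n-p-1$) it is $f(k)\notin\{0,-1\}$; and when both do ($n-p\le k\le n-1$, the overlapping case) it is $|f(k)|\ge 2$. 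Deriving these three conditions correctly — in particular the asymmetric middle one, and the ``hole in both copies'' regime that arises once the prefix and suffix overlap — is the one genuinely delicate computation, and I expect it to be the main obstacle.

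With this dictionary, the argument becomes a monotone sign-tracking. Since $k=1$ falls in the first regime and $w[0]=0$, the constraint $f(1)\neq 0$ forces $w[n-1]=1$ and $f(1)=1$. Because every step moves $f$ by at most $1$ and the forbidden set always contains $0$, the walk cannot cross from positive to nonpositive values; an induction on $k$ then gives $f(k)\ge 1$ for every $k$ in the no-border range $1\le k\le n-2$, and moreover $f(k)\ge 2$ for $n-p\le k\le n-2$ (there $\{-1,0,1\}$ is forbidden, so a downward step from a value $\ge 1$ must land at $\ge 2$). The only care needed is the degenerate case $p=1$, where the overlapping regime contributes no no-border constraint and one argues from the first two bounds alone; the hypothesis $n\ge 4$ keeps the middle regime nonempty.

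To close ``ends with $11$'', note that the guaranteed border of length $n-1$ (Theorem~\ref{atleasthborders}) pins the exact value $f(n-1)=1$, computed directly from $|w[0..n-2]|_0=|w|_0$ and $|w[1..n-1]|_0=|w|_0-1$. Comparing this with the lower bound on $f(n-2)$ across the single final step $f(n-1)=f(n-2)+[w[n-2]{=}0]-[w[1]{=}0]$ leaves $[w[n-2]{=}0]=0$ as the only consistent possibility (when $p\ge 2$ one simultaneously reads off $w[1]=0$), so $w[n-2]=1$ and $w$ ends with $11$. The ``therefore'' clause then needs no new idea: $k=\frac n2$ lies in the middle regime, so $f(\tfrac n2)\ge 1$, that is $|u|_0-|v|_0=f(\tfrac n2)\ge 1$, giving $|u|_0>|v|_0$; and since $|u|_0+|u|_1=\frac n2-1$ while $|v|_0+|v|_1=\frac n2$, the inequality $|u|_0>|v|_0$ is equivalent to $|u|_1<|v|_1-1$.
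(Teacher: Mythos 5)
Your proof is correct; note, however, that the paper states Lemma~\ref{lem5m1} \emph{without} any proof in the source, so there is no official argument to compare against — what follows is an assessment of your argument on its own merits. The crucial step, the dictionary between ``no abelian border of length $k$'' and forbidden values of $f(k)$, is right: over $\{0,1\}$, two partial words of length $k$ with $z_x,z_y$ zeros and $h_x,h_y$ holes are abelian compatible exactly when the intervals $[z_x,z_x+h_x]$ and $[z_y,z_y+h_y]$ intersect, which yields your three regimes $f(k)\neq 0$ (no hole), $f(k)\notin\{0,-1\}$ (hole in the prefix only), and $|f(k)|\geq 2$ (hole in both, $k\geq n-p$); the last regime is legitimate even though prefix and suffix overlap there, since the paper's definition of abelian border places no disjointness requirement on $x_1$ and $x_2$. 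The positivity induction works because $0$ is forbidden at every $k\leq n-2$ and steps have size at most $1$; the value $f(n-1)=1$ is in fact a direct computation from $w[0]=0$ and $w[n-1]=1$ (so your appeal to Theorem~\ref{atleasthborders} is harmless but unnecessary); and the final-step comparison correctly forces $w[n-2]=1$ in both cases, including the degenerate $p=1$ case where $[w[1]{=}0]=0$ because holes are never counted. The ``therefore'' clause is also sound: $n/2\leq n-2$ lies in the middle regime, so $f(n/2)\geq 1$ gives $|u|_0>|v|_0$, and since $|u|_0+|u|_1=\frac n2-1$ while $|v|_0+|v|_1=\frac n2$, this is equivalent to $|u|_1<|v|_1-1$. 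Your walk formulation is arguably stronger than what the lemma needs — it shows every proper prefix of length $k\le n-2$ has strictly more zeros than the corresponding suffix — and it is the kind of uniform argument that would also streamline the case analyses in Propositions~\ref{lem3m1} and~\ref{lem2m1}, which the paper carries out by ad hoc insertion arguments.
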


\begin{lemma}
\label{lem1m1}
The number of unique binary partial words of even length $n$, $n \geq 4$, with exactly one hole in the first half and no 1 in the first half having a minimal abelian border of length $n-1$ is $(\frac{n}{2}-1)2^{(\frac{n}{2}-2)}$.
\end{lemma}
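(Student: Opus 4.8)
The plan is to write $p=\tfrac n2$ and present each unique word as $w=uv$ with $|u|=|v|=p$. Since there is no $1$ in the first half and exactly one hole, $u$ consists of $p-1$ zeros and a single $\d$, while $v$ is a binary word of length $p$. I would prove the count by setting up a bijection between admissible words and pairs consisting of a hole position and a second half: the admissible words turn out to be exactly those whose hole lies in one of the positions $1,\dots,p-1$ and whose second half $v$ ends in $11$. This gives $p-1$ choices for the hole and $2^{p-2}$ choices for $v$, hence $(p-1)2^{p-2}=(\tfrac n2-1)2^{\frac n2-2}$.

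For the necessity direction I would first rule out the hole at position~$0$: if $w[0]=\d$, then $w[0]$ is abelian compatible with $w[n-1]$, so $w$ has an abelian border of length $1$, and since $n\ge 4$ this contradicts the minimal abelian border being $n-1$. Thus the hole occupies one of positions $1,\dots,p-1$, so in particular $w$ starts with $0$, and Lemma~\ref{lem5m1} forces $w$ to end in $11$. Consequently the last two letters of $v$ are fixed and the remaining $p-2$ are free, which pins down the two factors in the count.

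For the sufficiency direction I would show that every word of this shape has no abelian border of length $k$ with $1\le k\le n-2$; combined with the length-$(n-1)$ border guaranteed by Theorem~\ref{atleasthborders} for one-hole words, this makes the minimal abelian border exactly $n-1$. Write $v_1=|v|_1\ge 2$. For $1\le k\le p$ the suffix of length $k$ lies entirely inside $v$ and contains the final $1$ (and both final $1$'s once $k\ge 2$), whereas the prefix of length $k$ is a block of zeros with at most one hole and so has at most one $1$ in any completion; comparing the number of $1$'s rules out compatibility, and note that excluding the hole at position~$0$ is precisely what blocks a length-$1$ border. The main obstacle is the range $p<k\le n-2$, where the prefix reaches into $v$ and the suffix reaches into $u$. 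The key observation is that the portion of $v$ entering the prefix is a prefix of $v$ of length $k-p\le p-2$, which therefore misses the terminal $11$ of $v$; hence that portion contributes at most $v_1-2$ ones, and together with the single hole the whole prefix has at most $v_1-1$ ones in every completion, while the suffix contains all of $v$ and hence at least $v_1$ ones. The $1$-coordinates of the two Parikh vectors can never agree, so no such border exists.

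Putting these together yields a bijection between admissible unique words and pairs (hole position in $\{1,\dots,p-1\}$, second half ending in $11$), and multiplying the two counts gives $(\tfrac n2-1)2^{\frac n2-2}$. I expect the delicate point to be the uniform bound on the number of $1$'s in the prefix for $p<k\le n-2$; once the trailing-$11$ observation is in place, each remaining case collapses to a one-line comparison of the $1$-coordinates of the prefix and suffix Parikh vectors.
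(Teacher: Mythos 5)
Your proof is correct; note, however, that the paper states Lemma~\ref{lem1m1} without any proof, so there is no proof of record to compare against --- your argument fills that gap rather than duplicating one. The structural characterization you establish (hole in one of positions $1,\dots,\frac{n}{2}-1$, second half ending in $11$, all other second-half letters free) is exactly what the paper relies on implicitly elsewhere: the proof of Proposition~\ref{lem2m1} describes such words as having the $\d$ ``in any position in $\{1, \ldots, \frac{n}{2}-1\}$'' with ``the last two positions $11$'', and the basis $n=6$ of Proposition~\ref{lem3m1} lists the four words $0{\d}0011$, $0{\d}0111$, $00{\d}011$, $00{\d}111$, matching your count $(3-1)2^{1}=4$. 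Your necessity direction is sound: a hole at position~$0$ would give a length-$1$ abelian border (impossible since $n-1>1$), hence the word starts with $0$ and Lemma~\ref{lem5m1} forces the terminal $11$. Your sufficiency direction --- the part that genuinely needs checking --- also holds up: for $2\le k\le\frac{n}{2}$ the suffix is a total word containing both terminal $1$'s while every completion of the prefix has at most one $1$ (the hole); for $\frac{n}{2}<k\le n-2$ the prefix sees only a prefix of $v$ of length at most $\frac{n}{2}-2$, which misses the terminal $11$, so every completion of the prefix has at most $|v|_1-1$ ones, while every completion of the suffix, containing all of $v$, has at least $|v|_1$ ones. Since abelian compatibility of two partial words is equivalent to the existence of completions with equal Parikh vectors (this is the paper's own criterion $|V|\le \mathit{holes}$ behind Algorithm~\ref{minimalborder}), no border of length at most $n-2$ exists, and Theorem~\ref{atleasthborders} supplies the border of length $n-1$, so the minimal border is exactly $n-1$ and the count $(\frac{n}{2}-1)2^{(\frac{n}{2}-2)}$ follows.
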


\begin{lemma}
\label{lem4m1}
The number of unique binary partial words of even length $n$, $n \geq 4$, with exactly one hole in the first half and exactly one 0 in the first half having a minimal abelian border of length $n-1$ is $1$; such partial word is of the form $0{\d}1^{n-2}$.
\end{lemma}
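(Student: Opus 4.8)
The plan is to classify all ``unique'' words of the prescribed shape directly, using a single compatibility criterion. For two binary partial words $u,v$ of the same length $k$, write $c_u=|u|_0$ and $c_v=|v|_0$ and let $h_u,h_v$ denote their numbers of holes. Then $u\abequiv v$ holds if and only if the $0$-count intervals $[c_u,c_u+h_u]$ and $[c_v,c_v+h_v]$ intersect: filling holes lets a completion realize any $0$-count in its interval, and $u\abequiv v$ is exactly the existence of completions of $u$ and $v$ with a common Parikh vector. Consequently $w$ has an abelian border of length $k$ precisely when the $0$-count intervals of its length-$k$ prefix and length-$k$ suffix meet, and the whole argument will consist of tracking these two intervals.

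Write $n=2p$ with $p\geq2$. By hypothesis the first half $w[0..p-1]$ contains exactly one $0$, exactly one hole, and $p-2$ letters $1$, and uniqueness forces the first non-hole letter to be $0$; this leaves two cases according to the hole's position. If the hole is at position $0$, then the length-$1$ prefix is $\d$, with $0$-count interval $[0,1]$; since $w[n-1]$ is an ordinary letter, its singleton interval ($[0,0]$ or $[1,1]$) always meets $[0,1]$, so $w$ has an abelian border of length $1$. As $n\geq4$, the minimal abelian border then has length $1\neq n-1$, and this case contributes no word.

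Otherwise the hole lies at some position $t\in\{1,\dots,p-1\}$ and $w[0]=0$, so the first half is $0\,1^{t-1}\d\,1^{p-1-t}$ and $w$ starts with $0$. Here I would apply Lemma~\ref{lem5m1}: writing $w=uv$ with $|u|=|v|=p$, uniqueness together with the assumed minimal border length $n-1$ yields $|u|_0>|v|_0$, and since $|u|_0=1$ this forces $|v|_0=0$, i.e.\ the second half equals $1^p$. Thus $w=0\,1^{t-1}\d\,1^{n-1-t}$, whose only $0$ sits at position $0$, whose only hole sits at position $t$, and all of whose other positions carry $1$. Applying the interval criterion to this word, the length-$k$ prefix always has $0$-count interval containing $1$, whereas the length-$k$ suffix has interval $[0,0]$ for $k<n-t$ and $[0,1]$ for $k\geq n-t$; hence the abelian borders of $w$ are exactly those of lengths $n-t,\dots,n-1$, so the minimal one has length $n-t$. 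Setting $n-t=n-1$ forces $t=1$, leaving the single word $0\d\,1^{n-2}$, which one checks indeed satisfies all the hypotheses, so the count equals $1$.

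I expect the main obstacle to be the bookkeeping in the last case: one must follow how the $0$-count intervals of the length-$k$ prefix and suffix change as $k$ passes the hole position $t$, and confirm that they stay disjoint for every $k<n-t$ but intersect for every $k\geq n-t$. Everything else reduces either to the one-line compatibility criterion or to a direct appeal to Lemma~\ref{lem5m1}.
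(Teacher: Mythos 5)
Your proof is correct. One caveat on comparison: the paper states Lemma~\ref{lem4m1} without a written proof (as with Lemmas~\ref{lem5m1} and~\ref{lem1m1}), so there is no argument of the authors to measure yours against directly; the closest indication of their method is the surrounding results (Proposition~\ref{lem3m1}, Proposition~\ref{lem2m1}, Theorem~\ref{conjm1}), which are proved by induction on $n$, building words of length $n+2$ from words of length $n$ by inserting a pair of letters in the middle. Your route is instead a direct, non-inductive classification, and it holds up. The interval criterion ($u \abequiv v$ for binary partial words iff $[|u|_0,|u|_0+h_u]$ meets $[|v|_0,|v|_0+h_v]$) is a correct restatement of the paper's definition of abelian compatibility --- it is exactly the binary specialization of the test $|V|\leq holes$ used in Algorithm~\textsc{Minimal-Abelian-Border-Partial}. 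The hole-at-position-$0$ case is correctly eliminated by the forced abelian border of length $1$; the appeal to Lemma~\ref{lem5m1} is legitimate, since all of its hypotheses (exactly one hole in the first half, minimal abelian border of length $n-1$, starting letter $0$, uniqueness) are available, and with $|u|_0=1$ it forces $|v|_0=0$, i.e., second half $1^{n/2}$; and the border computation for $w=01^{t-1}{\d}1^{n-1-t}$ is right: the length-$k$ prefix interval always contains $1$, while the length-$k$ suffix interval is $[0,0]$ for $k<n-t$ and $[0,1]$ for $k\geq n-t$, so the abelian borders are exactly the lengths $n-t,\ldots,n-1$ and minimality of $n-1$ forces $t=1$. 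Compared with the inductive style of the neighboring proofs, your argument is more self-contained and gives slightly more: it exhibits the full set of abelian border lengths of every candidate word, not just the count, which is the kind of information the paper later has to re-derive inside the proofs of Proposition~\ref{lem3m1} and Theorem~\ref{conjm1}.
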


\begin{proposition}
\label{lem3m1}
The number of unique binary partial words of even length $n$, $n \geq 6$, with exactly one hole in the first half and exactly two $0$'s in the first half having a minimal abelian border of length $n-1$ is $\frac{n^2}{4}-\frac{n}{2}-2$.
\end{proposition}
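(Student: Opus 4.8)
The plan is to reduce the problem to counting monotone difference-walks and then to a pair of elementary summations. First I would note that the hole cannot occupy position $0$: otherwise the length-$1$ prefix is a single $\d$, which is abelian compatible with any length-$1$ suffix, forcing an abelian border of length $1$ and contradicting minimality (here $n\ge 6$). Hence, by the uniqueness convention, $w[0]=0$, so Lemma~\ref{lem5m1} applies and yields that $w$ ends in $11$ and that, writing $w=uv$ with $|u|=|v|=t$ where $t=\frac n2$, we have $|v|_0\le 1$. Since $u$ contains exactly two $0$'s (one being $w[0]$), one hole at a position $i\in\{1,\dots,t-1\}$, and $t-3$ ones, the word is determined by $i$, by the position $p\in\{1,\dots,t-1\}\setminus\{i\}$ of the second $0$ of $u$, and by the location of the (at most one) $0$ of $v$.

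Next I would introduce the difference walk $D_k=|w[0..k-1]|_0-|w[n-k..n-1]|_0$ and record the hole-sensitive border criterion coming from abelian compatibility (equivalently from the test $|V|\le holes$ of Algorithm~\ref{minimalborder}): $w$ has an abelian border of length $k$ iff $-h_p\le D_k\le h_s$, where $h_p,h_s\in\{0,1\}$ count the holes in the length-$k$ prefix and suffix. As the unique hole sits at position $i\le t-1$, the \emph{no-border} conditions read $D_k\ne 0$ for $k\le i$, $D_k\notin\{-1,0\}$ for $i<k<n-i$, and $D_k\notin\{-1,0,1\}$ for $k\ge n-i$. The symmetry $D_k=D_{n-k}$ (immediate from the complementarity $|w[0..k-1]|_0+|w[k..n-1]|_0=|w|_0$) lets me restrict to $1\le k\le t$, where the prefix lies in $u$ and the suffix in $v$, so $0\le D_k\le 2$. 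Pairing each length-$k$ condition with its mirror length-$(n-k)$ condition then collapses, for $1\le k\le t$, to the clean requirements $D_k=2$ for $2\le k\le i$ and $D_k\ge 1$ for $i<k\le t$ (the case $k=1$ being automatic, since $D_1=1$).

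I would then compute $D_k$ explicitly for $1\le k\le t$. The prefix contributes $|w[0..k-1]|_0=1+[k>p]$, so $D_k=1+[k>p]-[k\ge m]$, where $v=1^t$ in the all-ones case and otherwise $v$ has its single $0$ at position $n-m$ with $m\in\{3,\dots,t\}$. Splitting on $|v|_0$: when $v=1^t$ one has $D_k=1+[k>p]\in\{1,2\}$, and the requirements force $p=1$ when $i\ge 2$ (one word for each of these $t-2$ values of $i$) and leave $p$ free in $\{2,\dots,t-1\}$ when $i=1$ ($t-2$ words), giving $2(t-2)$ words. When $|v|_0=1$, the requirement $D_k=2$ on $\{2,\dots,i\}$ forces $p=1$ and $m\ge i+1$ for $i\ge 2$, contributing $\sum_{i=2}^{t-1}(t-i)=\binom{t-1}{2}$ words, while for $i=1$ the condition $D_k\ge 1$ on $\{2,\dots,t\}$ is equivalent to $p<m$, contributing $\sum_{m=3}^{t}(m-2)=\binom{t-1}{2}$ words, for a subtotal of $(t-1)(t-2)$.

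Adding the two cases gives $2(t-2)+(t-1)(t-2)=(t-2)(t+1)=t^2-t-2=\frac{n^2}{4}-\frac n2-2$, as claimed. I expect the main obstacle to be the bookkeeping in the second step: correctly pairing each border length $k\le t$ with its mirror $n-k$, tracking which of the three hole-regimes each falls into, and verifying that the resulting intersection of constraints really collapses to the two-line description $D_k=2$ (for $2\le k\le i$) and $D_k\ge 1$ (for $i<k\le t$). Once that reduction is secured, the remaining work is a pair of routine arithmetic-series evaluations, and the small cases $n=6,8$ can be checked by hand against the formula as a sanity test.
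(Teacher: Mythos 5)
Your proof is correct, but it takes a genuinely different route from the paper's. The paper argues by induction on $n$ (basis $n=6$): each unique word of length $n+2$ is built by inserting a suitable pair $ab\in\{00,01,10,11\}$ in the middle of a unique word of length $n$ having two $0$'s or one $0$ in the first half (invoking Lemma~\ref{lem4m1} as well as Lemma~\ref{lem5m1}), and the two extra words whose hole lands in position $\frac{n}{2}$ are counted separately, giving $\bigl(\frac{n^2}{4}-\frac{n}{2}-2\bigr)+(n-4)+2+2=\frac{(n+2)^2}{4}-\frac{n+2}{2}-2$. You instead give a direct, closed-form enumeration: after ruling out a hole in position $0$ and invoking Lemma~\ref{lem5m1} just as the paper does, you parametrize every candidate by the hole position $i$, the position $p$ of the second $0$ in the first half, and the position (if any) of the single $0$ in the second half; you translate ``no abelian border of length $k\le n-2$'' into the hole-aware criterion $D_k\notin[-h_p,h_s]$, fold conditions via the symmetry $D_k=D_{n-k}$ into $D_k=2$ on $[2,i]$ and $D_k\ge1$ on $(i,t]$, and finish with two elementary sums, $2(t-2)+(t-1)(t-2)=(t-2)(t+1)$. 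I checked the compatibility criterion, the three hole-regimes, the folding, and both sums; all are right, and your enumeration reproduces the paper's base case ($n=6$ yields exactly $0{\d}0011$, $0{\d}0111$, $00{\d}011$, $00{\d}111$) and gives $10$ for $n=8$. What your approach buys: an explicit characterization of all counted words and no need to show that an insertion construction is exhaustive and injective, which is the step the paper's induction leaves largely implicit; what the paper's approach buys: its insertion machinery is shared with Lemma~\ref{lem1m1}, Proposition~\ref{lem2m1} and Theorem~\ref{conjm1}, so its induction integrates into that chain of results. One small addition you should make explicit: minimality at length $n-1$ also requires that an abelian border of length $n-1$ \emph{exists}; this is immediate either from Theorem~\ref{atleasthborders} or from your own criterion, since $D_{n-1}=1$ while both the length-$(n-1)$ prefix and suffix contain the hole, so $-1\le D_{n-1}\le 1$.
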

\begin{proof}
We proceed by induction. For the basis $n=6$, the unique partial words are $0{\d}0011$, $0{\d}0111$, $00{\d}011$, $00{\d}111$ and the result holds. For the inductive step, we show how to build the unique partial words $w'=a_0 \cdots a_{\frac{n}{2}-1}aba_{\frac{n}{2}} \cdots a_{n-1}$ of length $n+2$ with minimal abelian border of length $(n+2)-1$ and exactly two $0$'s in the first half, where $ab \in \{00, 01, 10, 11\}$, from the unique partial words $w=a_0 \cdots a_{\frac{n}{2}-1}a_{\frac{n}{2}} \cdots a_{n-1}$ of length $n$ with minimal abelian border of length $n-1$ but not necessarily with exactly two $0$'s in the first half (there could be only one $0$ in the first half). 

By Lemma~\ref{lem5m1}, the first half of $w$, say $u$, has more $0$'s than the second half, say $v$. If we start with a unique partial word $w$ with $|u|_0=2$, then we obtain a unique partial word $w'$ that satisfies our desired properties by inserting the pair $ab=11$ of letters between $a_{\frac{n}{2}-1}$ and $a_{\frac{n}{2}}$. There are $\frac{n^2}{4}-\frac{n}{2}-2$ such $w$'s by the inductive hypothesis. For some of these $w$'s, it is also possible to insert the pair $ab=10$, i.e., when $|v|_0=0$. There are $n-4$ such $w$'s resulting in $n-4$ unique $w'$ with $10$ in the middle.  Indeed when $n=6$, the $n-4=2$ unique $w'$ are $0{\d}010111$ and $00{\d}10111$. When $n>6$, we first claim that $w'[\frac{n}{2}+2..n+1]=1^{\frac{n}{2}}$. Indeed by Lemma~\ref{lem5m1}, the number of $0$'s in $w'[\frac{n}{2}+1..n+1]$ is zero or one, so it is one in this case. We now claim that $w'[1]\not = 1$; otherwise, $w'$ has an abelian border of length $(n+2)-2<(n+2)-1$. If $w'[0..1]=0{\d}$, then there are $\frac{n}{2}-2$ positions left in the first half to position the other $0$, while if $w'[0..1]=00$, then there are $\frac{n}{2}-2$ positions left in the first half to position the $\d$. 

If we start with a unique partial word $w$ with $|u|_0=1$, then $w=0{\d}1^{n-2}$ by Lemma~\ref{lem4m1}. We obtain a unique partial word $w'$ that satisfies our desired properties by inserting $ab=00$ or $ab=01$.  

There are also two unique $w'$ with a $\d$ in position~$\frac{n}{2}$. When $n=6$, the two unique $w'$ are $001{\d}0111$ and $001{\d}1111$. When $n>6$, we first claim that $w'[\frac{n}{2}+2..n+1]=1^{\frac{n}{2}}$. Indeed by Lemma~\ref{lem5m1}, the number of $0$'s in $w'[\frac{n}{2}+1..n+1]$ is zero or one. If it is one and that $0$ appears in $w'[\frac{n}{2}+2..n+1]$, then $w'$ has an abelian border of length $\frac{n}{2}+2<(n+2)-1$, implying a contradiction. We now claim that $w'[1]=0$; otherwise, $w'$ has an abelian border of length $(n+2)-2<(n+2)-1$. So the two unique $w'$ are $001^{\frac{n}{2}-2}{\d}01^{\frac{n}{2}}$ and $001^{\frac{n}{2}-2}{\d}11^{\frac{n}{2}}$. Altogether, there are $\frac{n^2}{4}-\frac{n}{2}-2 +n-4 + 2 +2=\frac{(n+2)^2}{4}-\frac{n+2}{2}-2$ unique $w'$.
\end{proof}

\begin{proposition}
\label{lem2m1}
The number of unique binary partial words of even length $n$, $n \geq 6$, with exactly one hole in the first half and exactly one 1 in the first half having a minimal abelian border of length $n-1$ is $(\frac{n}{2}-2)(2^{(\frac{n}{2}-2)}(\frac{n}{2}-3)+1)$.
\end{proposition}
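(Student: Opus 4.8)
The plan is to count the admissible words directly, by reducing the minimal-border condition to a single inequality on a balance function and then summing a short geometric series. Write $m=n/2$ and $w=uv$ with $|u|=|v|=m$; by hypothesis $u$ has one hole, one $1$, and $m-2$ zeros, and its first non-hole letter is $0$. First I would record the structural facts. The hole cannot sit in position $0$: the length-one prefix would then be $\d$, which is abelian compatible with any length-one suffix, forcing a border of length $1$ and contradicting minimality. Hence the hole position $p$ satisfies $p\geq 1$; uniqueness then forces the single $1$, say in position $q$, also to avoid position $0$, so $w$ starts with $0$ and Lemma~\ref{lem5m1} applies, giving that $v$ ends in $11$ and $|v|_1\geq 3$. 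Let $c(k)$ denote the number of $1$'s in the last $k$ letters of $v$, so that $c(1)=1$ and $c(2)=2$.

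Next I would set up a border test. For each $k$, let $\Delta(k)$ be the number of $0$'s in the length-$k$ prefix minus the number of $0$'s in the length-$k$ suffix, holes not counted. Since $w$ has a single hole, the prefix and suffix of length $k$ are abelian compatible exactly when $\Delta(k)\in[-\pi(k),\rho(k)]$, where $\pi(k),\rho(k)\in\{0,1\}$ record whether the hole lies in that prefix, respectively that suffix; this is the partial-word form of the $\mathcal{Y}_k$-balance used in Section~5. A direct count gives $\Delta(k)=c(k)-\chi[k>p]-\chi[k>q]$ for $k\leq m$, where $\chi[\cdot]\in\{0,1\}$ is $1$ exactly when its condition holds, and the identity $\Delta(n-k)=\Delta(k)$ lets me restrict to $k\in[1,m]$. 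Splitting $\{1,\dots,n-1\}$ into the ranges $k\leq p$ (no hole on either side), $p<k<n-p$ (hole only in the prefix), and $k\geq n-p$ (hole on both sides), and requiring a border only at $k=n-1$, I translate ``minimal border $n-1$'' into the conditions $|\Delta(k)|\geq 2$ for $k\in[2,p]$ and $\Delta(k)\geq 1$ or $\Delta(k)\leq -2$ for $k\in[p+1,m]$.

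The heart of the argument is to simplify these. Splitting on whether the hole precedes or follows the single $1$ (so $p<q$ or $p>q$) and using $c(2)=2$, I would check that in both cases every one of these inequalities holds automatically except for the single requirement $c(q+1)\geq 3$; in particular this forces $q\geq 2$, while $p$ is otherwise unconstrained in $[1,m-1]\setminus\{q\}$. Thus an admissible word corresponds exactly to a choice of $q\in[2,m-1]$, a choice of $p\in[1,m-1]\setminus\{q\}$, and a word $v$ of length $m$ ending in $11$ whose last $q+1$ letters contain at least three $1$'s. Of the $q-1$ free letters lying inside that length-$(q+1)$ window at least one must be a $1$, while the remaining $m-q-1$ free letters of $v$ are unrestricted, so the number of such $v$ is $A(q)=(2^{q-1}-1)2^{m-q-1}=2^{m-2}-2^{m-q-1}$.

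Finally I would assemble the count. Since there are $m-2$ choices of $p$ independently of $q$, the total is $(m-2)\sum_{q=2}^{m-1}A(q)$. Here $\sum_{q=2}^{m-1}2^{m-2}=(m-2)2^{m-2}$ and $\sum_{q=2}^{m-1}2^{m-q-1}=2^{m-2}-1$ is geometric, so $\sum_{q=2}^{m-1}A(q)=(m-3)2^{m-2}+1$ and the total equals $(m-2)\big(2^{m-2}(m-3)+1\big)$, which is the claimed value with $m=n/2$. I expect the main obstacle to be the third step: getting the asymmetric compatibility interval $[-\pi(k),\rho(k)]$ and its three $k$-ranges right, using $\Delta(n-k)=\Delta(k)$ to fold the large range onto the small one, and checking the boundary situations ($q=m-1$, and $p$ adjacent to $q$) so that the clean criterion $c(q+1)\geq 3$ genuinely captures all of them.
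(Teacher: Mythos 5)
Your proof is correct, but it takes a genuinely different route from the paper's. The paper proves Proposition~\ref{lem2m1} by induction on $n$: it builds the unique words of length $n+2$ by inserting a pair $ab\in\{00,01,10,11\}$ in the middle of unique words of length $n$ that have either one $1$ in the first half (inductive hypothesis) or no $1$ in the first half (Lemma~\ref{lem1m1}, with $0{\d}1^{n-2}$ from Lemma~\ref{lem4m1} as the degenerate case), and then separately counts the words of length $n+2$ whose hole sits in the new middle position $\frac{n}{2}$; the closed form emerges from summing these contributions. You instead enumerate directly: you parametrize by the hole position $p$, the position $q$ of the unique $1$, and the second half $v$; you convert ``minimal abelian border of length $n-1$'' into interval conditions on the balance $\Delta(k)$ using the one-hole compatibility intervals $[-\pi(k),\rho(k)]$ and the fold $\Delta(n-k)=\Delta(k)$; and you collapse everything to the single constraint $c(q+1)\geq 3$, after which the count is a product times a geometric series. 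I checked the steps you flagged as delicate: the compatibility intervals are right (hole in prefix only gives $\Delta\in\{-1,0\}$, in suffix only $\{0,1\}$, in both $\{-1,0,1\}$), the three $k$-ranges and the folding do yield exactly $|\Delta(k)|\geq 2$ on $[2,p]$ and $\Delta(k)\geq 1$ or $\Delta(k)\leq -2$ on $[p+1,m]$ (with the border at $n-1$ and the non-border at $1$ automatic), the reduction to $c(q+1)\geq 3$ holds in both orderings $p<q$ and $q<p$ by monotonicity of $c$, and the final sum matches the paper's values ($1$ for $n=6$, $10$ for $n=8$, $51$ for $n=10$). The paper's induction meshes with the surrounding machinery (the same middle-insertion technique drives Proposition~\ref{lem3m1} and Theorem~\ref{conjm1}), but it must verify case by case that insertions preserve minimality and that no words of the larger length are missed; your argument is self-contained, exposes the structure of the counted words explicitly (a free hole position, a constrained window behind the final $11$, free letters elsewhere), and provides an independent confirmation of the formula.
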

\begin{proof}
We proceed by induction. For the basis $n=6$, the unique partial word is $0{\d}1111$ by Lemma~\ref{lem4m1} and the result holds. For the inductive step, we show how to build the unique partial words $w'=a_0 \cdots a_{\frac{n}{2}-1}aba_{\frac{n}{2}} \cdots a_{n-1}$ of length $n+2$ with minimal abelian border of length $(n+2)-1$ and exactly one $1$ in the first half, where $ab \in \{00, 01, 10, 11\}$, from the unique partial words $w=a_0 \cdots a_{\frac{n}{2}-1}a_{\frac{n}{2}} \cdots a_{n-1}$ of length $n$ with minimal abelian border of length $n-1$ but not necessarily with exactly one $1$ in the first half (there could be no $1$ in the first half). 

Let $u$ be the first half of $w$ and $v$ be the second half. If we start with a unique partial word $w$ with $|u|_1=1$, then we obtain a unique partial word $w'$ that satisfies our desired properties by inserting the pair $ab=00$ or $ab=01$ of letters between $a_{\frac{n}{2}-1}$ and $a_{\frac{n}{2}}$. There are $(\frac{n}{2}-2)(2^{(\frac{n}{2}-2)}(\frac{n}{2}-3)+1)$ such $w$'s by the inductive hypothesis.  If we start with a unique partial word $w$ with $|u|_1=0$, then there are $(\frac{n}{2}-1)2^{(\frac{n}{2}-2)}$ such $w$'s by Lemma~\ref{lem1m1}. We obtain a unique partial word $w'$ that satisfies our desired properties by inserting $ab=11$. For some of those $w$'s, we can also insert $ab=10$. To see how many such $w'$'s there are, the $\d$ can be in any position in $\{1, \ldots, \frac{n}{2}-1\}$. The last two positions are $11$. There are $2^{(\frac{n}{2}-2)}-1$ choices for the $\frac{n}{2}-2$ positions after the middle $10$. The $-1$ comes from the fact that these positions cannot be all $0$'s (otherwise, there would be an abelian border of length $\frac{n}{2}+1<(n+2)-1$).

There are also $(\frac{n}{2}-3)2^{(\frac{n}{2}-1)}+2$ unique $w'$ with a $\d$ in position~$\frac{n}{2}$.  We first claim that $w'[1]=0$; otherwise, $w'$ has an abelian border of length $(n+2)-2<(n+2)-1$. We next claim that for $2 \leq i < \frac{n}{2}$, if $w'[0..i]=0^i1$ then $w'[n-i+1..n-1] \not = 0^{i-1}$; otherwise, $w'$ has an abelian border of length $(n+2)-(i+1) < (n+2)-1$. So if the $1$ in the first half is in position~$2$, there are $(2^1-1)2^{(\frac{n}{2}-2)}$ unique corresponding $w'$, if the $1$ in the first half is in position~$3$, there are $(2^2-1)2^{(\frac{n}{2}-3)}$ unique corresponding $w'$, \ldots, and if the $1$ in the first half is in position~$\frac{n}{2}-1$, there are $(2^{(\frac{n}{2}-2)}-1)2^{1}$ unique corresponding $w'$. So, we get a total of $(\frac{n}{2}-3)2^{(\frac{n}{2}-1)}+2$ unique $w'$ with a $\d$ in position~$\frac{n}{2}$. 
\end{proof}

\begin{theorem}
\label{conjm1}
The number of binary partial words of even length $n$ with exactly one hole having a minimal abelian border of length $n-1$ is $\Theta(2^{n-1})$.
\end{theorem}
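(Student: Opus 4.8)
The plan is to count ``unique'' words and multiply by a bounded factor, then determine the exact growth rate by adapting the lattice-path enumeration behind Theorem~\ref{minimalborderfull}. Exactly as in the proof of Theorem~\ref{oldconj}, renaming of letters together with reversal partitions the relevant words into orbits of size four, so up to a constant factor the quantity we want equals $T(n)$, the number of unique words (hole in the first half, first non-hole letter $0$) of even length $n$ whose minimal abelian border has length $n-1$; it suffices to prove $T(n)=\Theta(2^{n-1})$. I would first reformulate the defining condition through Lemma~\ref{latticemeeting}: having minimal abelian border $n-1$ means the prefix and suffix paths do not meet at any of the steps $1,\dots,n-2$. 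Because the single hole occurs in the first half, for the border lengths $k\le n/2$ the hole lies in the prefix and is absent from the suffix, so a meeting at step $k$ becomes a \emph{band} condition (the suffix $0$-count must differ from the prefix $0$-count by more than the one unit of slack the hole provides); for larger $k$ the two paths overlap. Lemma~\ref{lem5m1} pins down the gross shape of any surviving word: the first half is strictly $0$-heavy and the word ends in $11$.

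For the upper bound I would bound $T(n)$ by the number of pairs of minimal paths of length $n/2$ that avoid meeting throughout the central range, summed over the $O(n)$ positions of the hole. The crucial point is that ``no meeting before the last step'' is a first-passage (ballot) condition, which as in Theorem~\ref{minimalborderfull} carries the Narayana/Catalan factor of order $1/n$ relative to the unconstrained count; since the total number of unique one-hole candidates is $\Theta(n\,2^{n-1})$, this $1/n$ saving yields $T(n)=O(2^{n-1})$. For the lower bound I would exhibit enough admissible configurations to realize the same order. Here the hole is a genuine advantage over the total-word case: the one unit of slack widens the forbidden diagonal into a band, so a constant proportion of the band-avoiding path pairs remain valid, and summing the contribution over all first-half compositions (equivalently over the first-half $1$-count $a$, whose extreme values $a\in\{0,1\}$ and dual cases are enumerated exactly in Lemma~\ref{lem1m1}, Lemma~\ref{lem4m1}, Proposition~\ref{lem3m1} and Proposition~\ref{lem2m1}) the bulk compositions $a\approx n/4$ contribute $\Omega(2^{n-1})$. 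An alternative, entirely elementary route to the lower bound mirrors Propositions~\ref{lem3m1} and~\ref{lem2m1}: insert a central pair $ab\in\{00,01,10,11\}$ into a length-$n$ unique word to build length-$(n+2)$ ones, and show that for a constant fraction of words---those whose difference path stays bounded away from $0$ near the center---all four insertions preserve minimality, forcing base-four growth and hence $T(n)=\Omega(2^{n-1})$.

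The hard part will be two intertwined estimates. First, in the lower bound I must control the \emph{long} borders, those of length between $n/2+1$ and $n$: unlike the short borders these actually involve the overlap of prefix and suffix (and, in the inductive version, the inserted central pair), so I must verify that the terminal $11$ forced by Lemma~\ref{lem5m1} together with the $0$-heaviness of the first half rules them out for a positive proportion of words. Second, to pin down the order (and not merely bound it) I must carry out the band-non-meeting count with the hole's slack and sum it over hole positions and path endpoints, checking that the hole's positional freedom precisely cancels the $1/n^{3/2}$ Catalan suppression that the pure total-word count of Theorem~\ref{minimalborderfull} would predict for a border of length $\approx n$. Matching the constants on the two sides so that both collapse to $\Theta(2^{n-1})$ is where the real work lies; the preceding lemmas and propositions supply the exact boundary enumerations that anchor these estimates.
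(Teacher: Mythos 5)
Your reduction to ``unique'' words (dividing by $4$ for renaming and reversal) and your appeal to Lemma~\ref{lem5m1} match the paper, but both of your quantitative engines have genuine gaps, and the one in the lower bound is fatal. The ``entirely elementary route'' cannot work as stated: every unique word of length $n$ yields at least $3$ and at most $4$ unique words of length $n+2$ under central insertion, and the seed count at $n=4$ is $1$; so if only a constant fraction $\epsilon<1$ of words admits the fourth insertion $ab=10$, the purely multiplicative growth is of order $(3+\epsilon)^{n/2}$, which is exponentially smaller than $2^{n}$. Thus ``a constant fraction of words admits all four insertions'' does \emph{not} force base-four growth; you would need all but a summably small fraction to admit all four, whereas the paper notes that $10$ can be inserted only ``sometimes'' (in the setting of Proposition~\ref{lem3m1}, only when the second half contains no $0$ at all). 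The paper's proof does not lean on the multiplier at all: its lower bound is carried by an \emph{additive} term, namely the $2^{\frac{n}{2}-2}+2^{n-4}$ additional unique words of length $n+2$ whose hole sits at the new center position $\frac{n}{2}$ (words of the form $00u{\d}v11$), which are counted directly rather than built by insertion. Since $4\bigl(2^{n-4}\bigr)=\frac{1}{8}\cdot 2^{(n+2)-1}$, this term alone already yields the $\Omega$ bound at every length. These hole-at-center words are structurally invisible to your scheme --- insertion inherits the hole position of the shorter word --- so your proposal never produces or counts exactly the words that drive the theorem. Your other lower-bound route (band-avoiding path pairs over bulk compositions) is not so much wrong as unexecuted: the control of the long borders $\frac{n}{2}<k\le n-2$, which you yourself defer as ``the hard part,'' is precisely where the content lies.

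The upper bound is likewise asserted rather than proved. The step ``$\Theta(n2^{n-1})$ candidates times a $1/n$ ballot saving gives $O(2^{n-1})$'' supplies no derivation of the factor $1/n$: survival-type (no-meeting) estimates give order $1/\sqrt{n}$, first-passage estimates give order $n^{-3/2}$, and your own closing sentence invokes a ``$1/n^{3/2}$ Catalan suppression,'' contradicting the $1/n$ you use two sentences earlier; choosing whichever exponent makes the answer come out to $2^{n-1}$ is numerology. Moreover, for one-hole words the short and long border conditions are strongly coupled (cf.\ Lemma~\ref{lem1m}), so no saving can be read off the short-border band condition alone. The paper's upper bound comes instead from the converse of the insertion map: every unique word of length $n+2$ with minimal abelian border $(n+2)-1$ whose hole is not at the center arises, by deleting the central pair, from a unique word of length $n$ with minimal abelian border $n-1$. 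Writing $f(n)$ for the number in the statement, this gives the two-sided recurrence
\[
\tfrac{3}{4}f(n)+2^{\frac{n}{2}-2}+2^{n-4}\;<\;\tfrac{1}{4}f(n+2)\;<\;f(n)+2^{\frac{n}{2}-2}+2^{n-4},
\]
from which the paper deduces the theorem. If you want to repair your plan along these lines, the two missing ingredients are exactly this deletion direction (for the upper bound) and the direct count of hole-at-center words (for the lower bound).
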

\begin{proof}
Let $f(n,n-1,1)$ denote the number of binary partial words of even length $n$ with exactly one hole having a minimal abelian border of length $n-1$. For $n=2$, there is one unique partial word with $\d$ in position~$0$. For $n=4$, there is one unique partial word with ${\d}$ in position~$1$ by Lemma~\ref{lem1m1} or Lemma~\ref{lem4m1}. Then, for $n \geq 4$, to build the unique partial words of length $n+2$ with minimal abelian border of length $(n+2)-1$ from those of length $n$ with minimal abelian border of length $n-1$, we proceed as follows. Each unique binary partial word $w=a_0 \cdots a_{\frac{n}{2}-1}a_{\frac{n}{2}} \cdots a_{n-1}$ of even length $n$ with exactly one hole having a minimal abelian border of length $n-1$ gives rise to a partial word of length $n+2$ by inserting a pair $ab$ of letters, $ab \in \{00, 01, 11\}$, between $a_{\frac{n}{2}-1}$ and $a_{\frac{n}{2}}$, to obtain three unique partial words of the form  $w'=a_0 \cdots a_{\frac{n}{2}-1}aba_{\frac{n}{2}} \cdots a_{n-1}$. For some $w$'s, it is sometimes also possible to insert the pair $ab=10$.

To see this, let $w$ be as above. By our convention, $w$ starts with $0$, and by Lemma~\ref{lem5m1}, $w$ ends with $11$. Since $w$ has no abelian border of any length in $\{1, \ldots, \frac{n}{2}\}$, $w'$ has no such abelian border.  Clearly,  $w'$ has also no abelian border of any length in $\{\frac{n}{2}+2, \ldots, n\}$. If $ab \in \{00, 11\}$, then $a_0 \cdots a_{\frac{n}{2}-1}a$ is not abelian compatible with $ba_{\frac{n}{2}} \cdots a_{n-1}$ and such $w'$ has no abelian border of length $\frac{n}{2}+1$. If $ab=01$, then $a_0 \cdots a_{\frac{n}{2}-1}a$ is not abelian compatible with $ba_{\frac{n}{2}} \cdots a_{n-1}$ since the number of $0$'s in $a_0 \cdots a_{\frac{n}{2}-1}$ is greater than the number of $0$'s in $a_{\frac{n}{2}} \cdots a_{n-1}$ by Lemma~\ref{lem5m1}, and so such $w'$ has no abelian border of length $\frac{n}{2}+1$. 

Then there are $2^{(\frac{n}{2}-2)}+2^{n-4}$ additional unique partial words $w'$ of length $n+2$ with $\d$ in position~$\frac{n}{2}$. To see this, we first claim that $w'[1]=0$; otherwise, $w'$ has an abelian border of length $(n+2)-2 < (n+2)-1$. So we can write $w'=00u{\d}v11$, where $u$ and $v$ are total words of length $\frac{n}{2}-2$ and $\frac{n}{2}-1$, respectively. There are $2^{\frac{n}{2}-2}$ possibilities for $u$, ranging from $u=1^{\frac{n}{2}-2}$ to $u=0^{\frac{n}{2}-2}$. On the one hand, by the proof of Proposition~\ref{lem3m1}, if $u=1^{\frac{n}{2}-2}$ then there are 2 unique partial words of length $n+2$ with $\d$ in position~${\frac{n}{2}}$, i.,e., $001^{\frac{n}{2}-2}{\d}01^{\frac{n}{2}}$ and $001^{\frac{n}{2}-2}{\d}11^{\frac{n}{2}}$. On the other hand if $u=0^{\frac{n}{2}-2}$, there are $2^{\frac{n}{2}-1}$ unique such partial words. Altogether, there are $2+4+6+ \cdots + 2^{(\frac{n}{2}-1)}$ unique partial words of length $n+2$ with $\d$ in position~$\frac{n}{2}$, for a total of $2(1+2+3+ \cdots + 2^{(\frac{n}{2}-2)})=2^{(\frac{n}{2}-2)}+2^{n-4}$. 

Thus, the number of unique partial words with exactly one hole of length $n+2$ having a minimal abelian border of length $(n+2)-1$ satisfies  
\begin{center}
$\frac{3f(n,n-1,1)}{4}+2^{(\frac{n}{2}-2)}+2^{n-4} < \frac{f(n+2,(n+2)-1,1)}{4} < \frac{4f(n,n-1,1)}{4}+2^{(\frac{n}{2}-2)}+2^{n-4}.$
\end{center}
From this we deduce that, for $n$ sufficiently large, if $f(n,n-1,1)$ is $O(2^{n-1})$ then  $f(n+2,(n+2)-1,1)$ is $O(2^{(n+2)-1})$, and if $f(n,n-1,1)$ is $\Omega(2^{n-1})$ then  $f(n+2,(n+2)-1,1)$ is $\Omega(2^{(n+2)-1})$. 
\end{proof}

Theorem~\ref{conjm1} equivalently states that the number of binary partial words of even length $n$ with exactly one hole and one abelian border is $\Theta(2^{n-1})$. This is implied by Theorem~\ref{atleasthborders} that guarantees at least a minimal border of length $n-1$ when a partial word has at least one hole. We can prove that for $n$ sufficiently large, the number of binary partial words of even length $n$ with exactly one hole and $m$, $m>1$ odd, distinct abelian borders is bounded from above by the number of binary partial words of length $n$ with exactly one hole and one abelian border. The proof is based on the following lemmas.

First, we consider unique binary partial words of even length $n$ with exactly one hole in position~$0$ and $m$ distinct abelian borders.

\begin{lemma}
\label{lem1m}
Let $m >1$ be odd. If a binary partial word of even length $n$, $n \geq 4$, has exactly one hole in position~$0$ and $m$ distinct abelian borders, then three of these abelian borders have lengths $1, \frac{n}{2}, n-1$. Furthermore, for $k \in \{1, \ldots, \frac{n}{2}-1\}$, we have that $k$ is an abelian border length if and only if $n-k$ is an abelian border length.
\end{lemma}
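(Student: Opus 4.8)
The plan is to reduce everything to Parikh-vector bookkeeping, exploiting the fact that the single hole sits at position~$0$ and therefore lies in the \emph{prefix} for every candidate border length, while the corresponding suffix is always a total word. Write $w = \diamond T$ with $T = w[1..n-1]$ a total word, and set $S = \psi(T)$. For any $\ell \in \{1,\ldots,n-1\}$ the length-$\ell$ prefix is $\diamond\, w[1..\ell-1]$ and the length-$\ell$ suffix $w[n-\ell..n-1]$ is hole-free. A permutation of the prefix is compatible with the suffix exactly when the single hole absorbs the one-letter surplus of the suffix, so I would first record the criterion: $w$ has an abelian border of length $\ell$ if and only if $\psi(w[1..\ell-1]) \le \psi(w[n-\ell..n-1])$ componentwise. (The total letter counts differ by exactly one, so a componentwise $\le$ is equivalent to the difference being a unit vector, i.e.\ to $\diamond\, w[1..\ell-1] \abequiv w[n-\ell..n-1]$.)

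Next I would prove the complementary equivalence. Fix $k \in \{1,\ldots,\frac{n}{2}-1\}$ and abbreviate $A = \psi(w[1..k-1])$, $B = \psi(w[n-k..n-1])$, $C = \psi(w[1..n-k-1])$, and $D = \psi(w[k..n-1])$. Splitting $T$ at positions $k$ and $n-k$ gives the two identities $A + D = S$ and $C + B = S$, whence $C = S - B$ and $D = S - A$. By the criterion above, length $k$ is a border iff $A \le B$ and length $n-k$ is a border iff $C \le D$; substituting yields $C \le D \iff S - B \le S - A \iff A \le B$. Thus $k$ is an abelian border length if and only if $n-k$ is, which is the second assertion of the lemma.

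Finally I would count by parity. Every abelian border has length in $\{1,\ldots,n-1\}$, and the involution $k \mapsto n-k$ pairs these lengths with the single fixed point $\frac{n}{2}$. By the equivalence just proved, each pair $\{k,n-k\}$ with $k \le \frac{n}{2}-1$ contributes $0$ or $2$ to the border count, so the number of border lengths different from $\frac{n}{2}$ is even; hence $m$ is odd precisely when $\frac{n}{2}$ is itself a border length. Since $m$ is assumed odd, $\frac{n}{2}$ is an abelian border length. The remaining two lengths are automatic: length $1$ is a border because the prefix $\diamond$ is compatible with the single letter $w[n-1]$, and length $n-1$ is a border by Theorem~\ref{atleasthborders} (equivalently, as the complementary partner of length~$1$). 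Since $n \ge 4$ forces $1 < \frac{n}{2} < n-1$, these are three distinct border lengths, as claimed.

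The only step requiring genuine care is the reduction in the first paragraph: one must check that for the complementary length $n-k$ the hole again falls in the prefix and the suffix remains hole-free, so that the \emph{same} componentwise criterion applies verbatim. This is exactly what restores, for a hole pinned at position~$0$, the complementary-pair symmetry enjoyed by total words, and it is what makes the clean cancellation $C = S - B$, $D = S - A$ possible. Everything else is routine.
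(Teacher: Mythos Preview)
Your argument is correct. The paper states Lemma~\ref{lem1m} without proof, so there is no in-paper argument to compare against; your Parikh-vector reduction is a clean and complete justification. The key observation---that with the hole pinned at position~$0$ every candidate suffix is total while every candidate prefix carries the single hole, so that ``abelian border of length $\ell$'' becomes the uniform componentwise inequality $\psi(w[1..\ell-1])\le\psi(w[n-\ell..n-1])$---is exactly what restores the complementary-pair symmetry that general partial words lack (cf.\ the remark before Proposition~\ref{nonborderedpartials}). The identities $A+D=S=C+B$ then make the equivalence $A\le B\iff C\le D$ immediate, and the parity count forces $\tfrac{n}{2}$ to be a border length. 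The edge case $k=1$ is handled correctly (with $A=\psi(\varepsilon)=\langle 0,0\rangle$), and you correctly note that the criterion applies to $\ell=n-k$ as well since $n-\ell=k\ge 1$ keeps position~$0$ out of the suffix.
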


Similarly, we can show that for odd $m$, if a binary partial word of odd length $n$, $n \geq 3$, has exactly one hole and $m$ distinct abelian borders, then the hole is not in position~$0$.

\begin{lemma}
\label{lem3m}
Let $m >1$ be odd. If a partial word $w=a_0 \cdots a_{\frac{n}{2}-1}a_{\frac{n}{2}} \cdots a_{n-1}$ of even length $n$, $n \geq 6$, over the alphabet $\{0,1\}$ has exactly one hole in position~$0$ and $m$ distinct abelian borders, then for $j \in \{2, \ldots, \frac{n}{2}-1\}$ and $ab \in \{00, 11\}$, the partial word 
$w'=a_0 \cdots a_{j-1} a a_{j} \cdots a_{\frac{n}{2}-1} a_{\frac{n}{2}} \cdots a_{n-j-1} b a_{n-j} \cdots a_{n-1}$ 
has exactly $m$ distinct abelian borders.
\end{lemma}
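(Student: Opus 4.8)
My plan is to read the abelian borders of a one-hole word directly off the Parikh-difference vector $V$ maintained by Algorithm~\textsc{Minimal-Abelian-Border-Partial}, and then to track how the symmetric insertion of the pair $ab\in\{00,11\}$ moves that vector. Since the single hole of $w$ sits in position~$0$, when the length-$\ell$ prefix is compared with the length-$\ell$ suffix the hole lies in the prefix but never in the suffix, so the hole count equals $1$ for every $\ell\in\{1,\dots,n-1\}$. Writing $d(\ell)$ for the number of $0$'s in the fixed part of the prefix minus the number of $0$'s in the suffix, the two entries of $V$ are $d(\ell)$ and $-d(\ell)-1$, so $|V(\ell)|=|d(\ell)|+|d(\ell)+1|$; by the criterion $|V|\le holes=1$ this means $w$ has an abelian border of length $\ell$ exactly when $d(\ell)\in\{-1,0\}$. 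I would record the set $B_w\subseteq\{1,\dots,n-1\}$ of border lengths in these terms, noting $|B_w|=m$ and, by Lemma~\ref{lem1m}, that $B_w$ is symmetric under $k\mapsto n-k$ and contains $1,\frac n2,n-1$.

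The engine of the proof is the effect of the insertion on $d$. Because $a=b$, the two inserted letters add the same amount to the $0$-count of the prefix as to that of the suffix as soon as each has been crossed, so their contributions cancel in $d$. With $a$ placed $j$ positions from the left end of $w'$ and $b$ placed $j$ positions from the right end, the length-$\ell$ prefix and suffix of $w'$ contain neither inserted letter for $\ell\le j$, exactly one inserted letter each for $j+1\le\ell\le n-j+1$, and both inserted letters for $\ell\ge n-j+2$. Comparing multisets in each range gives the shift
\[
d_{w'}(\ell)=
\begin{cases}
d_w(\ell), & 1\le\ell\le j,\\
d_w(\ell-1), & j+1\le\ell\le n-j+1,\\
d_w(\ell-2), & n-j+2\le\ell\le n+1,
\end{cases}
\]
so that the border condition $d_{w'}(\ell)\in\{-1,0\}$ for $w'$ is controlled entirely by the values of $d_w$.

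From the shift I would build the correspondence between $B_{w'}$ and $B_w$. Inside each of the three ranges the map is a length-shifting bijection: a border $k<j$ of $w$ becomes a border $k$ of $w'$, a border $j<k<n-j$ becomes $k+1$, and a border $n-j<k\le n-1$ becomes $k+2$. Under this correspondence the distinguished borders behave correctly: the pair $\{1,n-1\}$ maps to $\{1,n+1\}$, which is the pair $\{1,N-1\}$ for $N=n+2$, and the self-complementary border $\frac n2$ maps to $\frac n2+1=\frac N2$; more generally a complementary pair $\{k,n-k\}$ lying in the interiors maps to a complementary pair of $w'$, contributing nothing new to the count.

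The main obstacle is the behaviour at the two transition lengths $\ell=j$ and $\ell=n-j$, where two ranges of the shift meet and a single border length of $w$ could a priori give rise to two border lengths of $w'$. This is exactly where the symmetry of $B_w$ must be exploited: since $j$ and $n-j$ both lie in $\{1,\dots,\frac n2-1\}$, Lemma~\ref{lem1m} forces them to be border lengths of $w$ simultaneously or not at all, so the two transitions are linked. The decisive step, and the one I expect to be hardest, is to verify that inserting the constant pair $ab\in\{00,11\}$ at distance $j$ produces no net new border at these two lengths, so that each complementary pair of $w$ still yields exactly one complementary pair of $w'$. Once this boundary analysis is in place, the correspondence becomes a length-respecting bijection between the border set of $w$ and that of $w'$, and $|B_{w'}|=|B_w|=m$ follows.
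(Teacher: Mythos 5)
Your reduction of the one\nobreakdash-hole border condition to $d(\ell)\in\{-1,0\}$ (valid because the unique hole sits in position~$0$, hence lies in every prefix and in no proper suffix) and your three-range shift formula for $d_{w'}$ are both correct; this is exactly the right machinery. The genuine gap is that the step you defer as ``the decisive step'' --- that the seams at $\ell=j$ and $\ell=n-j$ produce no net new border --- is not a hard-but-doable verification: it is false. Your own formula gives $d_{w'}(j)=d_w(j)$ from the first range and $d_{w'}(j+1)=d_w(j)$ from the second, so the two seam values coincide; likewise $d_{w'}(n-j+1)=d_{w'}(n-j+2)=d_w(n-j)$. Hence if $j$ is itself an abelian border length of $w$, the single border $j$ of $w$ yields the two borders $j,j+1$ of $w'$, and the border $n-j$ yields the two borders $n-j+1,n-j+2$. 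Writing $B_w$ for the set of abelian border lengths of $w$, your correspondence gives exactly
\[
|B_{w'}|=
\begin{cases}
m+2, & \text{if } j\in B_w,\\
m, & \text{if } j\notin B_w,
\end{cases}
\]
where Lemma~\ref{lem1m} enters only to guarantee $j\in B_w\Leftrightarrow n-j\in B_w$. The symmetry you appeal to does not cancel the excess; it merely forces the excess to be $0$ or $2$.

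This is fatal whenever $m\geq5$: by Lemma~\ref{lem1m}, $B_w\setminus\{1,\frac n2,n-1\}$ consists of complementary pairs $\{k,n-k\}$ with $2\leq k\leq\frac n2-1$, so for $m\geq 5$ some admissible $j$ lies in $B_w$, and for that $j$ (and either choice of $ab$) the word $w'$ has $m+2$ distinct abelian borders. A concrete instance: $w={\d}0110101$ ($n=8$) has abelian borders of all lengths $1,\dots,7$, so $m=7$; taking $j=2$ and $ab=00$ gives $w'={\d}001101001$, for which $d_{w'}(\ell)\in\{-1,0\}$ for every $\ell=1,\dots,9$, so $w'$ has $9=m+2$ distinct abelian borders, not $m$. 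Thus your argument, carried to completion, proves the lemma only in the case $m=3$ (where $B_w=\{1,\frac n2,n-1\}$ contains no admissible $j$), and for $m\geq5$ it disproves the statement as written; the conclusion holds only under the additional hypothesis that $j$ is not an abelian border length of $w$. Note that the paper states Lemma~\ref{lem3m} without proof, so there is no argument there to reconcile yours with, but no proof of the statement as quantified --- over all $j\in\{2,\dots,\frac n2-1\}$ --- can exist.
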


Now, we consider unique binary partial words of even length $n$ with exactly one hole in position~$1$ and $m$ distinct abelian borders. 

\begin{lemma}
\label{lem456m}
Let $m$ be odd. If a binary partial word of even length $n$, $n \geq 6$, has exactly one hole in position~$1$ and $m$ distinct abelian borders, then $n-1$ is an abelian border length and for $k \in \{2, \ldots, \frac{n}{2}-1\}$ we have that $k$ is an abelian border length if and only if $n-k$ is an abelian border length. Consequently, either both $1, \frac{n}{2}$ are abelian border lengths or none is an abelian border length. 
\end{lemma}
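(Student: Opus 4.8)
The plan is to translate ``length $\ell$ is an abelian border length'' into a single inequality on the number of $0$'s in a prefix versus a suffix, and then to exploit a complementary partition of the non-hole positions. Write $w = a_0 \d a_2 a_3 \cdots a_{n-1}$, with the unique hole in position~$1$. First I would fix $2 \le k \le n-2$ and note that the prefix $w[0..k-1]$ contains the hole (as $k \ge 2$) while the suffix $w[n-k..n-1]$ does not (as $n-k \ge 2$). Let $c_0(k)$ be the number of $0$'s among the non-hole prefix letters $a_0, a_2, \ldots, a_{k-1}$, and $d_0(k)$ the number of $0$'s in the suffix $a_{n-k}, \ldots, a_{n-1}$. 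The prefix (one hole) is abelian compatible with the hole-free suffix exactly when the hole can be filled by a single letter so that the two Parikh vectors coincide; since filling adds $1$ to exactly one coordinate and both factors have the same length $k$, this happens if and only if $\delta(k) := d_0(k) - c_0(k) \in \{0,1\}$. So ``$k$ is an abelian border length'' becomes the clean condition $\delta(k) \in \{0,1\}$.

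The key step is to show $\delta(n-k) = \delta(k)$ for every $k$ with $2 \le k \le n-2$. Let $Z$ be the total number of $0$'s over all non-hole positions $\{a_0, a_2, a_3, \ldots, a_{n-1}\}$. The positions counted by $c_0(k)$, namely $\{0, 2, 3, \ldots, k-1\}$, and those counted by $d_0(n-k)$, namely $\{k, \ldots, n-1\}$, partition the non-hole positions, so $c_0(k) + d_0(n-k) = Z$; symmetrically $c_0(n-k) + d_0(k) = Z$. Subtracting yields
\[
\delta(n-k) = d_0(n-k) - c_0(n-k) = \bigl(Z - c_0(k)\bigr) - \bigl(Z - d_0(k)\bigr) = d_0(k) - c_0(k) = \delta(k).
\]
Since membership in $\{0,1\}$ is preserved, $k$ is an abelian border length if and only if $n-k$ is, which in particular covers every $k \in \{2, \ldots, \frac{n}{2}-1\}$ (nonempty precisely because $n \ge 6$). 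Because the hole sits in position~$1$, it lies in the prefix and never in the suffix throughout this range, and that is exactly what makes both partitions above complementary and the cancellation go through. This is the only place the hypothesis on the hole's location is used, and I expect the careful bookkeeping of which positions each count ranges over — so that the two partitions are exactly complementary — to be the main obstacle; the rest is routine.

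Finally I would assemble the consequence by a parity count over the full set of possible border lengths $\{1, \ldots, n-1\}$. Theorem~\ref{atleasthborders} with $h=1$ guarantees that $n-1$ is an abelian border length, contributing $1$ to the total count $m$. The symmetry just proved pairs each $k \in \{2, \ldots, \frac{n}{2}-1\}$ with its distinct complement $n-k \in \{\frac{n}{2}+1, \ldots, n-2\}$, so these lengths jointly contribute an even number to $m$. The only remaining lengths are $1$ and the self-complementary $\frac{n}{2}$, whence
\[
m \equiv \bigl[\,1 \text{ is a border length}\,\bigr] + 1 + \bigl[\,\tfrac{n}{2} \text{ is a border length}\,\bigr] \pmod 2 .
\]
Since $m$ is odd, $\bigl[\,1 \text{ is a border length}\,\bigr] + \bigl[\,\tfrac{n}{2} \text{ is a border length}\,\bigr]$ must be even, so either both $1$ and $\frac{n}{2}$ are abelian border lengths or neither is, as claimed.
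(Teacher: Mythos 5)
Your proof is correct. Note that the paper states Lemma~\ref{lem456m} without giving any proof (as it does for several of the supporting lemmas in this section), so there is no official argument to compare against; your write-up effectively fills that gap. The three ingredients all check out: (i) for $2 \le k \le n-2$ the prefix of length $k$ contains the hole while the suffix of length $k$ does not, so compatibility reduces to $\delta(k) = d_0(k) - c_0(k) \in \{0,1\}$ (fill the hole with $0$ when $\delta(k)=1$, with $1$ when $\delta(k)=0$); (ii) the two complementary partitions of the non-hole positions give $c_0(k) + d_0(n-k) = Z = c_0(n-k) + d_0(k)$, hence $\delta(n-k) = \delta(k)$, which yields the symmetry on all of $\{2,\ldots,n-2\}$, not just $\{2,\ldots,\frac{n}{2}-1\}$; and (iii) the appeal to Theorem~\ref{atleasthborders} for the border of length $n-1$, plus the fixed-point-free involution $k \mapsto n-k$ on the border lengths in $\{2,\ldots,\frac{n}{2}-1\} \cup \{\frac{n}{2}+1,\ldots,n-2\}$, gives $m \equiv \bigl[\,1\,\bigr] + \bigl[\,\tfrac{n}{2}\,\bigr] + 1 \pmod 2$, so oddness of $m$ forces both of $1, \tfrac{n}{2}$ or neither. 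One small remark: your observation that the argument uses the hole's position only through the fact that position~$1$ lies in every prefix and no suffix over the range $2 \le k \le n-2$ is exactly what explains why the analogous Lemma~\ref{lem78m} (hole in position $i \ge 2$) asserts the symmetry only on the smaller range $\{i+1,\ldots,\frac{n}{2}-1\}$, so your method generalizes cleanly to that setting as well.
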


If $m=3$ in Lemma~\ref{lem456m}, then an abelian border of length $1$ would imply an abelian border of length 2 (and also one of length $n-2$), so there would be at least four distinct border lengths, a contradiction. So neither $1$ nor $\frac{n}{2}$ is an abelian border length in the $m=3$ case when the position of the hole is $1$.

\begin{lemma}
\label{lem456mf}
Let $m$ be odd. If a partial word $w=a_0 \cdots a_{\frac{n}{2}-1}a_{\frac{n}{2}} \cdots a_{n-1}$ of even length $n$, $n \geq 6$, over $\{0,1\}$ has exactly one hole in position~$1$ and $m$ distinct abelian borders, then for at least three pairs $ab$ in $\{00, 01, 10, 11\}$, the partial word 
$w'=a_0 \cdots a_{\frac{n}{2}-1} ab a_{\frac{n}{2}} \cdots a_{n-1}$
has exactly $m$ distinct abelian borders.
\end{lemma}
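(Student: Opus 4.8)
The plan is to track exactly how the set of abelian border lengths changes when the pair $ab$ is spliced into the centre of $w$. Throughout, write $B$ and $B'$ for the sets of abelian border lengths of $w$ and of $w'$, so $m=|B|$ and the goal is $|B'|=m$ for at least three of the four pairs $ab$. Put $u=a_0\cdots a_{\frac{n}{2}-1}$ and $v=a_{\frac{n}{2}}\cdots a_{n-1}$, the two halves of $w$, so that $w=uv$, $w'=u\,ab\,v$, and the unique hole lies in position~$1$, i.e.\ inside $u$. First I would observe that for every length $\ell\le\frac{n}{2}$ the prefix and suffix of $w'$ of length $\ell$ are literally the first $\ell$ and the last $\ell$ symbols of $w$, since the inserted block $ab$ occupies the two central positions $\frac{n}{2}$ and $\frac{n}{2}+1$ of $w'$; hence such an $\ell$ lies in $B'$ if and only if it lies in $B$.

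For the long lengths I would exploit that the inserted letters sit in \emph{both} the prefix and the suffix and therefore cancel in the abelian-compatibility test. Explicitly, for $2\le k\le\frac{n}{2}$ the prefix and suffix of $w'$ of length $(n+2)-k$ are $u\,ab\,(a_{\frac{n}{2}}\cdots a_{n-1-k})$ and $(a_k\cdots a_{\frac{n}{2}-1})\,ab\,v$; since holes are not counted in Parikh vectors and both contain the block $ab$, subtracting $\psi(ab)$ from each side reduces the condition to the one defining a border of length $n-k$ in $w$ (the hole falls in the prefix in both tests because $k\ge2$). Thus $(n+2)-k\in B'$ iff $n-k\in B$. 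Combined with Theorem~\ref{atleasthborders}, which forces $n+1\in B'$, this accounts for every length of $B'$ except the central length $\frac{n}{2}+1$: the lengths $\le\frac{n}{2}$ are inherited from $w$, the lengths in $\{\frac{n}{2}+2,\dots,n\}$ are precisely the borders of $w$ in $\{\frac{n}{2},\dots,n-2\}$ raised by $2$, and $n+1$ is always present. This yields
\[
|B'|=m+c_0+c_1,
\]
where $c_0=1$ exactly when $\frac{n}{2}\in B$ and $c_1=1$ exactly when $\frac{n}{2}+1\in B'$, and both vanish otherwise.

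It then remains to analyse the single free length $\frac{n}{2}+1$. Its prefix and suffix in $w'$ are $ua$ and $bv$, each of length $\frac{n}{2}+1$, with the hole in the prefix, so $\frac{n}{2}+1\in B'$ iff $\delta+(\psi(b)-\psi(a))\in\{(1,0),(0,1)\}$, where $\delta=\psi(v)-\psi(u)$ is the imbalance between the halves of $w$ (its two entries sum to $1$, since $v$ has $\frac{n}{2}$ letters and $u$ has $\frac{n}{2}-1$). As $ab$ ranges over $\{00,11,01,10\}$ the correction $\psi(b)-\psi(a)$ ranges over $(0,0),(0,0),(-1,1),(1,-1)$. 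I would now run the short arithmetic: when $w$ has \emph{no} border of length $\frac{n}{2}$, i.e.\ $\delta\notin\{(1,0),(0,1)\}$, at most one of these four corrections can move $\delta$ into $\{(1,0),(0,1)\}$, so at least three choices of $ab$ give $c_1=0$; since also $c_0=0$ in this regime, those three choices give $|B'|=m$, which is the assertion.

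The step I expect to be the real obstacle is the borderline configuration $\delta\in\{(1,0),(0,1)\}$, i.e.\ when $w$ itself has a border of length $\frac{n}{2}$. There the formula already pays $c_0=1$, and three of the four corrections keep $\frac{n}{2}+1$ a border as well, so the central bookkeeping overshoots $m$ and the clean reduction above no longer delivers the statement. Resolving this is where the finer structure of Lemma~\ref{lem456m} must enter: a border of length $\frac{n}{2}$ is equivalent to one of length $1$, which with the hole in position~$1$ propagates to borders of lengths $2$ and $n-2$ and, through the parity of $m$, connects to the configuration ruled out in the remark after Lemma~\ref{lem456m}. I would therefore concentrate the proof here — determining for which $m$ and which words the configuration $\delta\in\{(1,0),(0,1)\}$ actually arises and accounting for those cases separately — since the remaining cancellation bookkeeping is entirely routine.
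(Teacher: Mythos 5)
Your bookkeeping is correct and checkable at every step: for $\ell\le\frac{n}{2}$ the prefix and suffix of $w'$ of length $\ell$ are literally those of $w$; for $2\le k\le\frac{n}{2}$ the test for a border of length $(n+2)-k$ in $w'$ cancels $\psi(ab)$ on both sides and becomes the test for a border of length $n-k$ in $w$ (with the hole in the prefix in both tests); and $n+1$ is always a border of $w'$ by Theorem~\ref{atleasthborders}. This gives $|B'|=m+c_0+c_1$, with $c_0=1$ exactly when $\frac{n}{2}\in B$ and $c_1=1$ exactly when $\frac{n}{2}+1\in B'$, and your Case~A computation (when $\delta\notin\{(1,0),(0,1)\}$, at most one of the four corrections $(0,0),(0,0),(-1,1),(1,-1)$ can move $\delta$ into that set) correctly produces at least three pairs $ab$ with $|B'|=m$. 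So your argument proves the lemma whenever $\frac{n}{2}$ is not a border length of $w$, which covers $m=1$ (there $B=\{n-1\}$) and, via the remark following Lemma~\ref{lem456m}, $m=3$.

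The case you defer, however, is a genuine gap, and it cannot be closed along the lines you sketch. When $\delta\in\{(1,0),(0,1)\}$, your own arithmetic shows that $ab\in\{00,11\}$ give $|B'|=m+2$, while of $ab\in\{01,10\}$ one gives $m+2$ and the other $m+1$ --- so \emph{no} insertion yields exactly $m$. The lemma would therefore survive only if this configuration were incompatible with its hypotheses, and it is not once $m\ge5$: take $w=0{\d}0000$ ($n=6$, hole in position~$1$), whose abelian border lengths are $B=\{1,2,3,4,5\}$, so $m=5$ is odd and $\frac{n}{2}=3\in B$. The four insertions give $0{\d}000000$, $0{\d}011000$, $0{\d}001000$, $0{\d}010000$, which have respectively $7$, $7$, $7$, $6$ distinct abelian borders --- not a single pair $ab$, let alone three, yields exactly $m$. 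Note that this $w$ is fully consistent with Lemma~\ref{lem456m} (both $1$ and $\frac{n}{2}$ are border lengths, matching its ``both or none'' clause), so no finer structure extracted from that lemma can rule the configuration out; the remark after Lemma~\ref{lem456m} excludes it only for $m=3$. The honest conclusion is that your Case~A argument is a complete proof of a \emph{corrected} statement --- one that additionally assumes $1$ (equivalently $\frac{n}{2}$) is not an abelian border length of $w$, or restricts to $m\le3$ --- while the lemma as literally stated fails for larger odd $m$.
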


Next, we consider unique binary partial words of even length $n$ with exactly one hole in position~$i$, where $2 \leq i \leq \frac{n}{2}-1$, and $m$ distinct abelian borders.

\begin{lemma}
\label{lem78m}
Let $m>1$ be odd. If a binary partial word of even length $n$, $n \geq 8$, has exactly one hole in position~$i$, $2 \leq i \leq \frac{n}{2}-1$,  and $m$ distinct abelian borders, one whose length is 1, then two of these abelian borders have lengths $n-2, n-1$. Furthermore, for $k \in \{i+1, \ldots, \frac{n}{2}-1\}$, we have that $k$ is an abelian border length if and only if $n-k$ is an abelian border length.
\end{lemma}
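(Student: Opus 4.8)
The plan is to reduce every abelian-border test to a comparison of two integers -- the numbers of non-hole zeros in the relevant prefix and suffix -- and then to track how the single hole migrates in and out of the prefix and suffix as the border length $k$ grows. Write $Z$ for the total number of non-hole zeros of $w$, and for $1\le k\le n$ let $\tilde P(k)$ and $\tilde S(k)$ denote the number of non-hole zeros in the length-$k$ prefix and length-$k$ suffix, and put $\delta(k)=\tilde S(k)-\tilde P(k)$. Since the only hole sits at position $i$ with $2\le i\le \frac{n}{2}-1$, the four extreme positions $w[0],w[1],w[n-2],w[n-1]$ are all letters. Using the characterization that $u\abequiv v$ exactly when some completion of $u$ and some completion of $v$ have equal Parikh vectors, I would record the four border tests: if the hole lies in neither the prefix nor the suffix then length $k$ is a border length iff $\delta(k)=0$; if it lies in the prefix only (resp.\ suffix only) then iff $\delta(k)\in\{0,1\}$ (resp.\ $\delta(k)\in\{-1,0\}$); and if it lies in both then iff $|\delta(k)|\le 1$. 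Which case occurs is decided by the inequalities $k\ge i+1$ (hole enters the prefix) and $k\ge n-i$ (hole enters the suffix). Finally, the hypothesis that $1$ is a border length forces $w[0]=w[n-1]$.

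For the first claim I would first obtain the border of length $n-1$ for free: it is guaranteed by Theorem~\ref{atleasthborders}, and it also drops out of the framework, since the hole lies in both the length-$(n-1)$ prefix and suffix while $\delta(n-1)=[w[n-1]=0]-[w[0]=0]=0$. For length $n-2$ the hole again lies in both the prefix $w[0..n-3]$ and the suffix $w[2..n-1]$ (here $2\le i\le n-3$), so I only need $|\delta(n-2)|\le 1$. A direct count gives $\delta(n-2)=\big([w[n-2]=0]+[w[n-1]=0]\big)-\big([w[0]=0]+[w[1]=0]\big)$, the difference of the zero-counts of the last two and the first two positions, so $|\delta(n-2)|=2$ can only occur if those two pairs read $00$ and $11$ in some order; either possibility forces $w[0]\ne w[n-1]$, contradicting the length-$1$ border. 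Hence $|\delta(n-2)|\le 1$ and $n-2$ is a border length.

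For the second claim, fix $k\in\{i+1,\dots,\frac{n}{2}-1\}$ and set $k'=n-k\in\{\frac{n}{2}+1,\dots,n-i-1\}$. The key observation is that both lengths put the hole in the prefix only: for $k$ because $i+1\le k<n-i$, and for $k'$ because $k'\ge \frac{n}{2}+1>i$ while $i<n-k'=k$ keeps the hole out of the (now overlapping) suffix. Thus both the length-$k$ and the length-$k'$ tests are governed by the same asymmetric condition $\delta(\cdot)\in\{0,1\}$. It remains to show $\delta(k')=\delta(k)$: since the length-$(n-k)$ prefix is the complement of the length-$k$ suffix and the hole lies in the former, $\tilde P(n-k)=Z-\tilde S(k)$, and symmetrically $\tilde S(n-k)=Z-\tilde P(k)$, whence $\delta(n-k)=\tilde S(k)-\tilde P(k)=\delta(k)$. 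Consequently $k$ is a border length iff $n-k$ is, which is the desired equivalence.

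The main obstacle is purely bookkeeping rather than any deep idea: one must keep perfect track of whether the lone hole lies in the prefix, the suffix, both, or neither as $k$ ranges over the relevant interval, because the border test is genuinely asymmetric ($\delta\in\{0,1\}$ versus $\delta\in\{-1,0\}$) as soon as a hole is present. The whole argument turns on the coincidence that both $k$ and $n-k$ fall in the ``hole in the prefix only'' regime, so that the identical one-sided condition $\delta\in\{0,1\}$ applies to each; had one of them placed the hole in the suffix, the identity $\delta(n-k)=\delta(k)$ would pair $\delta\in\{0,1\}$ against $\delta\in\{-1,0\}$ and the clean equivalence would fail. Care is likewise needed at the endpoints $k\le i$ and $k\in\{n-i,\dots,n-1\}$, which is precisely why the equivalence is asserted only on $\{i+1,\dots,\frac{n}{2}-1\}$ and the lengths $n-1,n-2$ are treated by hand.
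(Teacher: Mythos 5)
Your proof is correct. Note that the paper states this lemma without proof (as it does the other structural lemmas of this subsection), so there is no authors' argument to compare yours against; your write-up supplies one, and it runs on precisely the paper's own machinery: the criterion from Section~3 that a length-$k$ prefix and suffix are abelian compatible exactly when the Parikh difference $|V|$ is at most the total number of holes they contain, specialized in the binary one-hole setting to your four cases $\delta(k)=0$, $\delta(k)\in\{0,1\}$, $\delta(k)\in\{-1,0\}$, and $|\delta(k)|\le 1$ according to the location of the hole. All steps check out: the length-$1$ border forces $w[0]=w[n-1]$ since positions $0$ and $n-1$ hold letters; for lengths $n-1$ and $n-2$ the hole lies in both prefix and suffix, $\delta(n-1)=0$, and $|\delta(n-2)|=2$ would contradict $w[0]=w[n-1]$; and for $k\in\{i+1,\ldots,\frac{n}{2}-1\}$ both $k$ and $n-k$ place the hole in the prefix only, where the complementation identity $\delta(n-k)=\delta(k)$ makes the two identical one-sided tests agree. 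Two cosmetic remarks: the completion-based characterization of abelian compatibility you invoke is equivalent to the paper's $|V|$-test, so you could simply cite that test rather than re-deriving the case table; and you never actually use the hypotheses $m>1$ odd or $n\ge 8$ (your argument needs only $n\ge 6$, so that a position $i$ with $2\le i\le\frac{n}{2}-1$ exists), hence you prove a slightly stronger statement---harmless, since those hypotheses matter only where the lemma is applied, namely in Lemma~\ref{lem78mf} and Theorem~\ref{conjm}.
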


\begin{lemma}
\label{lem78mf}
Let $m>1$ be odd. If a partial word $w=a_0 \cdots a_{\frac{n}{2}-1}a_{\frac{n}{2}} \cdots a_{n-1}$ of even length $n$, $n \geq 8$, over the alphabet $\{0,1\}$ has exactly one hole in position~$i$, $2 \leq i \leq \frac{n}{2}-1$, and $m$ distinct abelian borders, one whose length is $1$, then for at least three pairs $ab$ in $\{00, 01, 10, 11\}$, the partial word 
$w'=a_0 \cdots a_{\frac{n}{2}-1} ab a_{\frac{n}{2}} \cdots a_{n-1}$
has exactly $m$ distinct abelian borders.
\end{lemma}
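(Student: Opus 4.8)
The plan is to compare the abelian borders of $w'$ with those of $w$ directly, using that the single hole of $w'$ is still the hole of $w$ in position~$i\le\frac{n}{2}-1$ and that the pair $ab$ is inserted exactly at the center. Because $w'$ has one hole, I would phrase the border test through zero-counts: if $z$ is the length-$k$ prefix and $z'$ the length-$k$ suffix of $w'$, then in the binary case $z\abequiv z'$ iff the intervals of achievable zero-counts overlap, i.e. iff $|z|_0-|z'|_0$ lies in $\{0\}$ when neither factor contains the hole, in $\{-1,0\}$ when only the prefix contains it, and in $\{-1,0,1\}$ when both do. I then split the candidate border lengths $1\le k\le n+1$ of $w'$ into a low range $k\le\frac{n}{2}$, the center $k=\frac{n}{2}+1$, and a high range $k\ge\frac{n}{2}+2$.

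First I would dispose of the two outer ranges, which do not depend on $ab$. For $k\le\frac{n}{2}$ the length-$k$ prefix and suffix of $w'$ lie entirely to the left and right of the inserted pair, so they equal $w[0..k-1]$ and $w[n-k..n-1]$; hence $k$ is a border length of $w'$ iff it is one of $w$. For $k\ge\frac{n}{2}+2$ both inserted letters fall inside the overlap of the length-$k$ prefix and suffix, so they add the same amount to each zero-count and cancel; an index check shows the hole sits in the prefix, and for the largest $k$ also in the suffix, in exactly the pattern it has in $w$ at length $k-2$. Thus $k$ is a border length of $w'$ iff $k-2$ is one of $w$. Invoking Lemma~\ref{lem78m} (borders at $1,n-2,n-1$ and the reflection $k\leftrightarrow n-k$ on $\{i+1,\dots,\frac{n}{2}-1\}$), the high-range borders of $w'$ are the $w$-borders of length $\ge\frac{n}{2}$ shifted by $2$, and these together with the low-range borders account for exactly $m$ of them, \emph{provided} $\frac{n}{2}$ is not itself a border length of $w$; if it were, it would be counted twice in $w'$ (at $\frac{n}{2}$ and at $\frac{n}{2}+2$) and the total would exceed $m$.

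The crux, and what I expect to be the main obstacle, is the center $k=\frac{n}{2}+1$. Writing $u=w[0..\frac{n}{2}-1]$, $v=w[\frac{n}{2}..n-1]$, and $\delta=|u|_0-|v|_0$, here the inserted $a$ joins the prefix and $b$ joins the suffix, so they do not cancel, and the center is a border of $w'$ exactly when $\delta+|a|_0-|b|_0\in\{-1,0\}$ (the prefix alone carries the hole). Running through $ab\in\{00,01,10,11\}$ against this condition shows that at most one pair makes the center a border, so at least three leave the count unchanged, \emph{precisely when} $\delta\notin\{-1,0\}$; and $\delta\in\{-1,0\}$ is exactly the condition that $w$ has a border of length $\frac{n}{2}$. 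The whole argument therefore reduces to proving $\delta\notin\{-1,0\}$, i.e. that $\frac{n}{2}$ is not a border length of $w$, from the hypotheses — the border of length~$1$, the reflection of Lemma~\ref{lem78m}, the location $2\le i\le\frac{n}{2}-1$ of the hole, and the oddness of $m$. This is the delicate step (and the point at which one must be careful that the stated hypotheses genuinely exclude a central border); once it is in place, the low/high matching gives $m$ borders and the center tabulation supplies the at-least-three admissible insertions, completing the proof.
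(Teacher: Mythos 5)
Your reduction is sound as far as it goes, and it is essentially the right frame: the zero-count compatibility test for a one-hole binary word is correct, the borders of $w'$ of length $k\le\frac{n}{2}$ are exactly the borders of $w$ of length $k$, the borders of length $k\ge\frac{n}{2}+2$ are exactly the borders of $w$ of length $k-2$ (the inserted pair contributes equally to prefix and suffix, and the hole falls on the same sides as it does in $w$ at length $k-2$), and your tabulation at the center $k=\frac{n}{2}+1$ correctly shows that at most one pair $ab$ creates a central border when $\delta\notin\{-1,0\}$, while three pairs create one when $\delta\in\{-1,0\}$. But you never prove the one claim that carries the entire content of the lemma: that the hypotheses exclude $\frac{n}{2}$ as an abelian border length of $w$. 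Since everything else is bookkeeping, what you have written is an equivalence between the lemma and that claim, not a proof of the lemma. (The paper states this lemma without proof, so there is no argument of theirs to compare against; as a standalone proof attempt, yours is incomplete exactly at the decisive point, which you yourself flag.)

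Moreover, the gap cannot be closed as the statement stands, because the claim your proof needs is false under the stated hypotheses. Consider $w=01{\d}00110$: here $n=8\ge8$, the unique hole is at position $i=2\in\{2,\ldots,\frac{n}{2}-1\}$, and filling the hole with $1$ shows that every length $1,\ldots,7$ is an abelian border length, so $m=7>1$ is odd and a border of length $1$ exists; Lemma~\ref{lem78m}'s conclusions also hold. Yet $01{\d}0\abequiv 0110$, so $\frac{n}{2}=4$ is a border length of $w$, i.e., $\delta=0$. By your own correspondence, that central border of $w$ is counted twice in every $w'$ (once at length $4$ in the low range and once at length $6$ in the high range), so each of the four insertions yields at least $m+1$ distinct borders; direct computation gives $9,8,9,9$ distinct borders for $ab=00,01,10,11$ respectively, never $7$. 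So the dichotomy you isolate is genuinely two-sided: the border of length $1$, the reflection property, the hole location, and the oddness of $m$ together do not force $\delta\notin\{-1,0\}$, and both alternatives occur. Your argument does become a complete and correct proof if one adds the hypothesis that $\frac{n}{2}$ is not an abelian border length of $w$; without some such amendment (or some structural input beyond the stated hypotheses), no proof along these lines --- indeed no proof at all --- can succeed, since the displayed $w$ is a counterexample to the statement itself.
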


The $m=1$ case of the next theorem follows from Theorem~\ref{conjm1}. The general $m>1$ odd case is similar to the $m=1$ case, the only difference is that we can have some more trivial values. We can prove the result based on the location of the hole, inserting three or four pairs in $\{00, 01, 10, 11\}$ in suitable positions of partial words of even length $n$ to get partial words of length $n+2$, and finding the dominant term. 

\begin{theorem}
\label{conjm}
If $m$ is odd, then the number of binary partial words of length $n$ with exactly one hole and $m$ distinct abelian borders is $\Theta(2^{n-1})$.
\end{theorem}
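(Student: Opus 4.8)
The plan is to prove the statement for even $n$ by induction in steps of two, reusing the machinery of Theorem~\ref{conjm1} but now tracking an arbitrary odd number $m$ of borders. The case $m=1$ needs nothing new: by Theorem~\ref{atleasthborders} a one-hole partial word already carries the border of length $n-1$, so having exactly one abelian border is the same as having minimal abelian border length $n-1$, and that count is $\Theta(2^{n-1})$ by Theorem~\ref{conjm1}. As there, I would first reduce to counting \emph{unique} words (hole in the first half, first non-hole letter $0$): renaming of letters together with reversal partitions the words into classes of size four, so it suffices to estimate the number of unique words and multiply by $4$. Write $B(n)$ for the number of binary partial words of even length $n$ with exactly one hole and exactly $m$ distinct abelian borders; the odd-length case is handled by the parallel argument signalled by the remark that the hole then cannot lie in position~$0$.

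Next I would split the unique words by the position of the hole, which is the organizing principle behind Lemmas~\ref{lem1m}--\ref{lem78mf}: hole in position~$0$, hole in position~$1$, and hole in position~$i$ with $2\le i\le \frac{n}{2}-1$. In each case the corresponding \emph{structural} lemma (Lemma~\ref{lem1m}, Lemma~\ref{lem456m}, Lemma~\ref{lem78m}) pins down the forced border lengths and, crucially, establishes a reflection symmetry: within the relevant range, $k$ is an abelian border length if and only if $n-k$ is. This symmetry reconciles the bookkeeping with $m$ odd and isolates the two or three mandatory borders (lengths such as $1,\frac{n}{2},n-1$, or $n-2,n-1$) that every such word must carry.

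The engine of the induction is then supplied by the \emph{insertion} lemmas (Lemma~\ref{lem3m}, Lemma~\ref{lem456mf}, Lemma~\ref{lem78mf}): from a unique word of length $n$ with exactly $m$ borders I would build words of length $n+2$ by inserting a pair $ab\in\{00,01,10,11\}$, either in the middle (hole in position~$1$ or~$i$) or at a symmetric pair of positions $j,n-j$ (hole in position~$0$), these lemmas guaranteeing that at least three such insertions again yield exactly $m$ borders. Running the process in reverse (deleting the inserted pair) shows that every length-$(n+2)$ word with $m$ borders comes from at most four length-$n$ words, together with a residual family of ``new'' words whose hole sits in the freshly created middle positions. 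Counting as in the $m=1$ case yields the two-sided recurrence
\[
3\,B(n)+c_n \;<\; B(n+2) \;<\; 4\,B(n)+c_n,
\]
where $c_n$ collects the new-middle-hole words and is $O(2^{n-1})$; the ``more trivial values'' mentioned before the theorem are exactly these extra, lower-order contributions (the $2^{n/2}$-type terms) that arise for $m>1$ but do not alter the dominant growth. Extracting the asymptotics exactly as in Theorem~\ref{conjm1}, the accumulation of the additive term $c_n=\Theta(2^{n-1})$ drives both bounds, so the dominant term is $2^{n-1}$ and $B(n)=\Theta(2^{n-1})$, whence $4B(n)=\Theta(2^{n-1})$.

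The main obstacle is verifying the insertion lemmas, namely that each admissible insertion creates no spurious short border and destroys none of the required ones, so that precisely $m$ (and not $m\pm1$ or more) abelian borders survive; this is delicate because a middle insertion interacts with the potential border of length $\frac{n}{2}+1$ and with the reflection symmetry, and it is here that the structural lemmas of each case do the real work. A secondary difficulty is reconciling the two insertion schemes (symmetric insertions for a hole in position~$0$ versus middle insertions otherwise) into a single clean recurrence and checking that the induced maps are injective, so that the counts from the three hole-position classes do not overlap.
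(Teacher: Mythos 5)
Your proposal matches the paper's own (sketched) argument essentially step for step: the $m=1$ base case via Theorem~\ref{conjm1} and Theorem~\ref{atleasthborders}, the reduction to ``unique'' words with a factor of $4$, the case split by hole position governed by the structural Lemmas~\ref{lem1m}, \ref{lem456m}, \ref{lem78m}, the insertion of three or four pairs from $\{00,01,10,11\}$ via Lemmas~\ref{lem3m}, \ref{lem456mf}, \ref{lem78mf}, and the same $3$-to-$4$-fold recurrence with an additive $\Theta(2^{n-1})$ term resolved by the dominant-term analysis of Theorem~\ref{conjm1}. This is the same approach as the paper, carried out at the same level of detail.
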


\section{Conclusion}

Computer experiments strongly suggest that the following conjectures are true. The integer $\texttt{A191386}(n)$ in Sloane's Online Encyclopedia of Integer Sequences (OEIS) gives the number of ascents of length 1 in all dispersed Dyck paths of length $n$. A \emph{dispersed Dyck path} is one or more Dyck paths connected by one or more horizontal steps \cite{ChGu}, or equivalently a Motzkin path of length $n$ with no horizontal steps at positive heights. An \emph{ascent} is a maximal sequence of upward steps. The first few values of this sequence (starting at $n=0$) are 0, 0, 1, 2, 5, 10, 23, 46, 102, 204, 443, 886, 1898, 3796, 8054. 

\begin{conjecture}
The number of binary partial words of length $n$ with exactly one hole having a minimal abelian border of length ${k}<n$ is
\[f(n,{k},1)=
\begin{cases}
\texttt{A191386}(n), & \text{if }n\geq3\text{ is odd and }{k}=n-1; \\
2\texttt{A191386}(n-1), & \text{if }n\geq4\text{ is even and }{k}=n-1; \\
2\texttt{A191386}(n-2), & \text{if }n\geq5\text{ is odd and }{k}=n-2;\\
4\texttt{A191386}(n-3), & \text{if }n\geq6\text{ is even and }{k}=n-2.\\ 
\end{cases}\]
\label{sabdiagonal}
\end{conjecture}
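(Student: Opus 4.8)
My first move is to simplify the target. Using the identity $\texttt{A191386}(2m+1)=2\,\texttt{A191386}(2m)$ for $m\geq1$ (readily verified from the generating function of $\texttt{A191386}$, or combinatorially from the structure of dispersed Dyck paths of odd length), the four cases of Conjecture~\ref{sabdiagonal} collapse to just two statements:
\[ f(n,n-1,1)=2\,\texttt{A191386}(n-1), \qquad f(n,n-2,1)=4\,\texttt{A191386}(n-3). \]
Indeed, for $n$ odd the first conjectured value is $\texttt{A191386}(n)=2\,\texttt{A191386}(n-1)$, and for $n$ odd the third is $2\,\texttt{A191386}(n-2)=4\,\texttt{A191386}(n-3)$, so each parity pair merges. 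This reformulation is the right normalization: it shows that for a fixed co-border length $n-k\in\{1,2\}$ the count is parity-independent, and it isolates the two parameters any proof must explain, the factor ($2$ or $4$) and the shift ($n-1$ or $n-3$). I would prove this reduction first, as it also pins down the base values, e.g. $f(3,2,1)=2\,\texttt{A191386}(2)=2$, realized exactly by $0{\d}1$ and $1{\d}0$, and $f(4,3,1)=2\,\texttt{A191386}(3)=4$, matching the single unique word of Lemma~\ref{lem1m1}/Lemma~\ref{lem4m1} times the symmetry factor $4$.

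The conceptual heart is to explain why ascents of length $1$ in dispersed Dyck paths appear. Recall from Section~5 that a binary word of length $n$ determines a palindromic distance path $|d(\cdot)|$ with $d(k)=|w[0..k-1]|_0-|w[n-k..n-1]|_0$, satisfying $d(k)=d(n-k)$, whose returns to $0$ are exactly the abelian borders of $w$ (Lemma~\ref{latticemeeting}). For a one-hole word the hole supplies one unit of slack, so by the accounting of Algorithm~\textsc{Minimal-Abelian-Border-Partial} a border of length $k$ occurs precisely when $|d(k)|$ is at most the number of times the hole lies in the length-$k$ prefix/suffix window. I would make precise the operation of \emph{deleting the hole}, sending $w$ to a total word $w'$ of length $n-1$: the minimality requirement ``no border of length $<n-1$'' should translate into $w'$ having its distance path be a genuine dispersed Dyck path (flats only at height $0$), while the former hole position marks a single up-step, that is, an ascent of length $1$. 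Summing over all such $w'$ then reproduces $\texttt{A191386}(n-1)$, the factor $2$ recording the two ways of reinstating the hole consistent with the forced border. A similar reduction for the co-border-$2$ family should land on dispersed Dyck paths of length $n-3$ and produce the factor $4$.

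For a fully rigorous proof I would run the inductive build-up already used in Theorem~\ref{conjm1} and Propositions~\ref{lem3m1} and~\ref{lem2m1}, sharpened from a $\Theta$-estimate to an \emph{exact} recurrence. Building length-$(n+2)$ words from length-$n$ words by inserting a pair $ab$ between the two central positions contributes $3f(n,n-1,1)$ from $ab\in\{00,01,11\}$; the exact argument then needs to count on the nose (i) the words admitting the fourth insertion $ab=10$, and (ii) the words with the hole in the new central position, the two ``extra'' families that Theorem~\ref{conjm1} only bounded. Lemma~\ref{lem5m1} (such a word starts with $0$, ends with $11$, and has a $0$-heavy first half) together with Lemmas~\ref{lem1m1}--\ref{lem2m1} supply exactly the case distinctions needed to enumerate these families. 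I would then show that the resulting recurrence, with the base values fixed above, coincides with the recurrence for $\texttt{A191386}$ after the shift $n\mapsto n-1$ (resp. $n-3$) and the factor $2$ (resp. $4$); here the identity $\texttt{A191386}(2m+1)=2\,\texttt{A191386}(2m)$ is what reconciles the even- and odd-$n$ computations.

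The main obstacle is exactness. Establishing the bijection of the second paragraph is delicate because the hole breaks the palindromic symmetry of the distance path at two nearby steps (the prefix reads the hole at step $p+1$ and the suffix at step $n-p$, which are not symmetric partners), so ``delete the hole and read off a dispersed Dyck path'' must be arranged so that the marked up-step is forced to have length exactly $1$ and so that no spurious interior return to $0$ is created; showing this is a genuine bijection, and not merely a size-preserving correspondence, is where the work lies. On the recurrence side, the difficulty is that families (i) and (ii) must be enumerated exactly, which for $k=n-2$ forces one to carry two forced borders (lengths $n-1$ and $n-2$) through the induction and to handle more middle-insertion subcases than in the $k=n-1$ analysis. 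I expect the $k=n-1$ family to yield first, with $k=n-2$ following by the same method at the cost of a longer case analysis.
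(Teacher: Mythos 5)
There is no proof in the paper to compare against: the statement you were given is Conjecture~\ref{sabdiagonal}, which the paper explicitly leaves open (``Computer experiments strongly suggest that the following conjectures are true''). The strongest results the paper actually proves in this direction are the asymptotic bounds of Theorems~\ref{conjm1} and~\ref{conjm}, namely that the count is $\Theta(2^{n-1})$, together with exact counts of special subfamilies (Lemmas~\ref{lem1m1} and~\ref{lem4m1}, Propositions~\ref{lem3m1} and~\ref{lem2m1}). So the only question is whether your attempt closes the conjecture on its own, and it does not: it is a research plan whose two load-bearing steps are both left unexecuted, and you say so yourself.

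Concretely: (a) the parity collapse via $\texttt{A191386}(2m+1)=2\,\texttt{A191386}(2m)$ is a genuinely useful normalization and is consistent with the listed values, but it is asserted (``readily verified''), not proven. (b) The central idea---delete the hole and read off a dispersed Dyck path in which the former hole position is a marked ascent of length~$1$---is exactly the missing ingredient that would explain the appearance of $\texttt{A191386}$, and you never construct the map, prove it well-defined, or prove it bijective; the obstruction you name (the hole enters the prefix path at step $p+1$ and the suffix path at step $n-p$, which are not symmetric partners, so deleting it can create spurious returns to height $0$ or ascents of the wrong length) is real and is left unresolved. (c) The fallback of sharpening the insertion induction of Theorem~\ref{conjm1} to an exact recurrence founders on precisely the two families the paper could only sandwich between its upper and lower bounds---the words admitting the fourth insertion $ab=10$ and the words whose hole sits in the new central position; saying these ``need to be counted on the nose'' restates the open problem rather than solving it, and for $k=n-2$ the case analysis you defer is strictly harder than anything carried out in the paper. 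Your sanity checks ($f(3,2,1)=2$, $f(4,3,1)=4$) are correct, but the proposal contains no proof of either identity $f(n,n-1,1)=2\,\texttt{A191386}(n-1)$ or $f(n,n-2,1)=4\,\texttt{A191386}(n-3)$, so the conjecture remains exactly as open as the paper leaves it.
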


Note that Theorem~\ref{atleasthborders} suggests that Conjecture~\ref{sabdiagonal} is true if and only if the number of binary partial words of length $n$ with exactly one hole and exactly one abelian border is $\texttt{A191386}(n)$ when $n$ is odd and $2\texttt{A191386}(n-1)$ when $n$ is even.

\begin{conjecture}
The set of binary total words without abelian borders is non-context-free.
\end{conjecture}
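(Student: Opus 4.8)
The plan is to treat this as a non-context-freeness proof for the language $L=\{w\in\{0,1\}^*: w\text{ has no abelian border}\}$, and to first recast membership in the lattice-path language of Section~5. For $w$ of length $n$ write $d_k=(\#0\text{ in }w[0..k-1])-(\#0\text{ in }w[n-k..n-1])$, the height of the $\mathcal{Y}_k$ (Motzkin) path; by Lemma~\ref{latticemeeting}, $w\in L$ iff $d_k\neq 0$ for $1\le k\le n-1$. A short computation gives the reflection identity $d_k=d_{n-k}$, so the path is symmetric and it suffices to require $d_k\neq 0$ for $1\le k\le\lfloor n/2\rfloor$; moreover the sign of $d_k$ is constant along any word of $L$, so $L=L^{+}\cup h(L^{+})$, where $L^{+}=L\cap 0\{0,1\}^*$ consists of the words whose path stays strictly positive and $h$ is the bitwise-complement homomorphism. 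Since context-free languages are closed under homomorphism and under intersection with the regular set $0\{0,1\}^*$, the conjecture is equivalent to the non-context-freeness of $L^{+}$, and I would work with $L^{+}$ throughout. I would also record at the outset that the enumeration in Theorem~\ref{unborderedwords} yields the ordinary generating function $(1+2x)/\sqrt{1-4x^2}$, which is \emph{algebraic}; by Chomsky--Sch\"utzenberger an algebraic counting series is fully consistent with (even unambiguous) context-freeness, so the exact counts cannot by themselves refute the conjecture and a structural argument is unavoidable.

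The main line of attack I would pursue is the structure theorem for bounded context-free languages. Assuming $L^{+}$ is context-free, for a bounded regular language $B=u_1^*u_2^*\cdots u_t^*$ (each $u_i$ a block $0^{p}$ or $1^{q}$) the intersection $L^{+}\cap B$ is a bounded context-free language, so by Ginsburg's theorem its index set $S=\{(n_1,\dots,n_t): u_1^{n_1}\cdots u_t^{n_t}\in L^{+}\}\subseteq\mathbb{N}^t$ is a finite union of \emph{stratified} linear sets (linear sets whose period vectors have support of size at most two, with pairwise non-crossing supports). I would next compute $S$ explicitly: along $B$ each $d_k$ is a piecewise-linear function of $(n_1,\dots,n_t)$ whose pieces are cut out by linear inequalities in $k$, so the condition ``$d_k>0$ for all interior $k$'' unwinds into a finite system of linear inequalities and $S$ is plainly semilinear. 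The goal is then to choose $B$ (enough alternating $0$- and $1$-runs) so that the resulting inequalities \emph{entangle} the block lengths in a genuinely crossing, non-nested fashion, and to derive a contradiction by showing $S$ is not a finite union of stratified linear sets --- the paradigm being the crossing dependency that makes $\{a^nb^mc^nd^m\}$ non-context-free. A clean way to finish would be a specialization step: intersect $B$ further, or project $S$ onto a sublattice by fixing some coordinates, so as to expose a copy of a known non-stratified index set such as $\{(n,m,n,m)\}$ or $\{(n,n,n)\}$, against which stratification fails outright.

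As a parallel route I would keep the classical pumping machinery in reserve, since the exact counts of Theorem~\ref{unborderedwords} feed naturally into the Ogden--Ross interchange lemma. There $|L\cap\{0,1\}^n|=\binom{n}{n/2}\ge 2^n/(n+1)$ forces, for any fixed middle-block length, a family of $\Omega(2^n/n^{2.5})$ words sharing common cut positions all of whose middle blocks may be interchanged while staying in $L$. Because interchanging a middle block alters the $0$-counts of the straddling prefix/suffix pairs, membership after every swap constrains the interchangeable blocks severely, and bounding the size of such a family below $2^n/n^{2.5}$ would contradict the lemma. In the same spirit, Ogden's lemma applied to a tight word --- one whose path hugs height $1$, forcing the symmetric values $d_k=d_{n-k}$ to be small --- would let me mark positions so that pumping is confined to a place where inserting copies shifts the path onto the axis, creating a border.

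The hard part --- and the reason this remains a conjecture --- is that borderlessness is an \emph{open} condition, a conjunction of strict inequalities $d_k\neq 0$ rather than a system of equalities. Consequently the index set $S$ is ``thick'' (full-dimensional within each sign regime), and thick semilinear sets can frequently be rewritten with unit-support periods plus a few small-support diagonals, so it is far from clear that any crossing is forced; the canonical crossing non-context-free languages all rest on equalities. Symmetrically, pumping and interchange preserve the global $0$-count bookkeeping on which borderlessness depends, so local perturbations tend to stay inside $L$ --- exactly as the easily pumpable words $0^a1^b\in L$ illustrate. The crux of any successful proof is therefore the combinatorial heart of the second paragraph: engineering a bounded restriction whose inequality-defined index set provably cannot be stratified (equivalently, a pumping word that no local edit can repair), and I expect this, rather than the surrounding reductions, to be where essentially all the difficulty lies.
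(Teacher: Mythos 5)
Your proposal does not prove the statement, and there is no proof in the paper to compare it against: this statement appears in Section~6 as a \emph{conjecture}, supported only by computer experiments, and the authors leave it open. Your preliminary reductions are correct and worth having --- the reflection identity $d_k=d_{n-k}$, the sign-constancy argument giving $L=L^{+}\cup h(L^{+})$, the equivalence (via closure under homomorphism and under intersection with regular sets) between non-context-freeness of $L$ and of $L^{+}$, and the observation that the counting sequence of Theorem~\ref{unborderedwords} has the algebraic generating function $(1+2x)/\sqrt{1-4x^2}$, so that enumeration alone cannot refute context-freeness. None of these, however, brings you closer to a contradiction; they only normalize the problem.

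The genuine gap is that the core step is never executed. In your second paragraph you assume $L^{+}$ is context-free, intersect with a bounded regular language $B=u_1^*u_2^*\cdots u_t^*$, invoke Ginsburg's stratification theorem for bounded context-free languages, and then state that the \emph{goal} is to choose $B$ so that the index set $S$ is provably not a finite union of stratified linear sets --- but no such $B$ is exhibited, no index set is computed, and no failure of stratification is demonstrated. That single missing step is the entire proof. Your own closing paragraph concedes the danger: since borderlessness is a conjunction of strict inequalities $d_k\neq 0$ rather than equalities, the index sets are full-dimensional and may admit stratified decompositions for \emph{every} choice of $B$, in which case this route cannot succeed at all; the canonical non-stratifiable examples like $\{(n,m,n,m)\}$ all arise from equality constraints, and you give no mechanism by which strict inequalities would force a crossing. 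The interchange-lemma and Ogden's-lemma alternatives are likewise stated as intentions (``would let me mark positions,'' ``bounding \ldots would contradict the lemma'') with no estimates carried out. What you have is a reasonable research program whose reductions are sound, which accurately reflects why the statement is still a conjecture --- but it is not a proof, and it should not be presented as one.
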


\section*{Acknowledgements}

Project sponsored by the National Security Agency under Grant Number H98230-15-1-0232. The United States Government is authorized to reproduce and distribute reprints notwithstanding any copyright notation herein. This manuscript is submitted for publication with the understanding that the United States Government is authorized to reproduce and distribute reprints. This material is based upon work supported by the National Science Foundation under Grant No. DMS--1060775. We thank Benjamin De Winkle for his very useful comments and suggestions. We also thank the referees of preliminary versions of this paper for their very valuable comments and suggestions.

\nocite{*}
\bibliographystyle{eptcs}
\bibliography{proba}

\end{document}